\newtheorem{theorem}{Theorem}[section]
\newtheorem{assumption}{Assumption}[section]
\newtheorem{definition}{Definition}[section]
\newtheorem{proposition}{Proposition}[section]
\newtheorem{remark}{Remark}[section]
\newtheorem{lemma}{Lemma}[section]
\newtheorem{corollary}{Corollary}[section]
\newcommand{\op}{o_{\mathbb{P}}}
\newcommand{\Op}{O_{\mathbb{P}}}
\newcommand{\E}{\mathbb{E}}
\newcommand{\F}{\mathcal{F}} 
\newcommand{\FF}{\mathcal{F}}
\newcommand{\B}{\mathcal{B}}
\newcommand{\R}{\mathbb{R}}
\newcommand{\lf}{\lfloor}
\newcommand{\rf}{\rfloor}
\newcommand{\proj}{\mathcal{P}}
\newcommand{\I}{\mathcal{I}}
\newcommand{\T}{\top}
\newcommand{\lt}{\left}
\newcommand{\rt}{\right}
\newcommand{\pp}{\mathbb{P}}
\numberwithin{equation}{section}
\definecolor{darkgreen}{rgb}{0.0, 0.5, 0.0}
\definecolor{ashgrey}{rgb}{0.7, 0.75, 0.71}
\newcommand{\WC}{\textcolor{black}}
\newcommand{\LJ}{\textcolor{black}}
\newcommand{\checkit}{\textcolor{black}} 
\title{\bf Difference-based covariance matrix estimate in time series nonparametric regression with applications to specification tests}
\date{}
\author{ \small Lujia Bai \\
\small Center for Statistical Science, \\
	\small Department of Industrial Engineering, \\
	\small Tsinghua University\\
\and \small Weichi Wu\\
\small Center for Statistical Science, \\
	\small Department of Industrial Engineering, \\
	\small Tsinghua University\\}
\begin{document}
\maketitle
\begin{abstract}
  Long-run covariance matrix estimation is the building block of time series inference. The corresponding difference-based estimator,
 which avoids detrending,
has attracted considerable interest due to its  robustness to both smooth and abrupt structural breaks and its competitive finite sample performance. However,
existing methods mainly focus on estimators for the univariate process 
while their direct and multivariate extensions for most linear models are asymptotically biased.  We propose a novel difference-based and debiased long-run covariance matrix estimator for  functional linear models with time-varying regression coefficients,  allowing time series non-stationarity, long-range dependence, state-heteroscedasticity and their mixtures. 
We apply the new estimator to (i) the structural stability test, overcoming the notorious non-monotonic power phenomena caused by piecewise smooth alternatives for regression coefficients, and (ii) the nonparametric residual-based tests for long memory,
improving  the performance via the residual-free formula of the proposed estimator.  The effectiveness of the proposed method  is justified theoretically and demonstrated by superior performance in simulation studies, while its usefulness is elaborated via real data analysis. Our method is implemented in the R package \texttt{mlrv}.
\end{abstract}
\textbf{Keywords}:
Debias; Difference statistic; Local stationarity;  Long-run variance; Monotonic power;  Time-varying linear model; 

\footnotetext[1]{E-mail addresses: \href{blj20@mails.tsinghua.edu.cn}{blj20@mails.tsinghua.edu.cn}(L.Bai), \href{wuweichi@mail.tsinghua.edu.cn}{wuweichi@mail.tsinghua.edu.cn}(W.Wu)} 
\section{Introduction}
The long-run variance plays a central role in the statistical inference of time series linear models.  Consider the following functional linear model (\citep{zhou2010simultaneous}) for the time series observations $(y_{i,n},x_{i,n})\ (i=1, \ldots,n)$,
\begin{align}
    y_{i,n} =   x^{\T}_{i,n} \beta_{i,n} + e_{i,n},\quad \beta_{i,n}=\beta(i/n)\quad (i=1,\ldots, n), \label{eq:tvlinear}
\end{align}
where $ x_{i,n}$ is a $p$-dimensional covariate process whose first element is $1$ and $e_{i,n}$ is the error process, both of which can be non-stationary, $(\beta_{i,n})$ are the time-varying regression coefficients and $\beta(\cdot):[0,1]\to \mathbb R^p$ is the regression function. If  $\hat \beta_{i,n}$ is assumed to be constant or smoothly changing,  it can be estimated by ordinary least squares  or nonparametric methods (\citep{chan2010local}, \citep{li2011non}, \citep{su2019}) 
and the variation of such estimates is determined by the $p$-dimensional possibly time-varying {\it long-run covariance matrix} of $(x_{i,n}e_{i,n})$, see \eqref{eq:long-run variance} for exact definition.  
Classic estimation of the long-run covariance matrix in linear models, see for instances \citep{newey1987} and \citep{andrews1991heteroskedasticity},  
requires consistent estimation of regression coefficients,
which is difficult to achieve under structural changes. A prevalent approach to overcome the obstacle is the difference-based estimation, which is built on $y_{i+m,n}-y_{i,n}$ where $m$ is a diverging tuning parameter such that $m=o(n)$ so that most $y_{i+m,n}-y_{i,n}$ are approximately zero-mean under both smooth and abrupt structural changes. Most of the existing results assume stationarity, see for example,  \citep{gomez2017sjs} and \citep{chan2021optimal}. Exceptions include \citep{dette2019detecting},  which allows non-stationary errors. To the best of the author's knowledge, the existing difference-based long-run variance estimators for time series data are designed for $p=1$. When $(e_{i,n})$ are \LJ{independent and identically distributed},  the difference-based estimators for variance has been studied by for example \citep{muller1987estimation}, \citep{hall1990bio} and \citep{brown2007aos}.

For $p>1$, the long-run variance is referred as the long-run covariance matrix, whose estimation is much more involved, see \citep{jansson_2002}, \citep{HIRUKAWA2021} and reviews therein. 
Most existing methods are plug-in methods that utilize estimated residuals  which have been widely applied to the goodness of fit tests, tests of structural breaks (\citep{aue2013}, \citep{wu2018gradient}, \citep{kao2018}), detecting gradual changes (\citep{vogt2015detecting}), simultaneous confidence bands for coefficient functions (\citep{zhou2010simultaneous}), tests for long memory (\citep{BeranLongmemory}, \citep{bai2021}), etc. Since they \LJ{depend  critically on the accurate pre-estimation} of regression coefficients, they are inconsistent under abrupt structural breaks, causing the notorious non-monotonic power \LJ{(\citep{Kejriwal2009})} when applied to structural stability tests. Moreover,  most nonparametric specification tests involve both $\hat \beta_{i,n}$ and the  long-run covariance matrix,  while  $\hat \beta_{i,n}$ is also used in the formula of the plug-in estimator for the long-run covariance matrix. Therefore, the plug-in estimate tends to sensitize those tests to the tuning parameters chosen for $\hat \beta_{i,n}$. 


\section{Preliminaries}
In this paper, we consider a general form of non-stationary called local stationarity for the covariate and the error processes of \eqref{eq:tvlinear}, which has received  substantial attention in the literature.  We employ the definition of locally stationary processes  based on Bernoulli shift processes, see \citep{wu2005nonlinear} and 
\citep{zhou2010simultaneous}, while there are also many other formulations, see \citep{dahlhaus1997}, \citep{nason2000wavelet}, and 
 \citep{dahlhaus2019bej} for a comprehensive review. \par We start by introducing necessary notation that will be used in the rest of the paper. Let $1(\cdot)$ denote the indicator function.
Define $\lf a \rf$ as the largest integer smaller than $a$, and $\lambda_{\min}(A)$ as the smallest eigenvalue of any symmetric squared matrix $A$. Let $|\cdot|$ denote the absolute value for scalars and the Frobenius norm for matrices. Let $Z$ denote the set of integers. Let  $\F_i = (\LJ{\ldots}, \eta_{i-1}, \eta_i)$, where $(\eta_i)_{i \in Z}$ are \LJ{independent and identically distributed} random elements, and the couple process $\F_{i,\{0\}}=(\LJ{\ldots},\eta_{-1},\eta_{0}^{\prime},\eta_{1},\LJ{\ldots},\eta_{i})$, where $(\eta_{i}^{\prime})_{i\in  Z}$ is the \LJ{independent and identically distributed} copy of $(\eta_{i})_{i\in  Z}$. We write $f(\cdot) \in C^q[0,1]$ if $f(\cdot)$ has $q$th order continuous derivative.  Consider the data generating mechanism $L(t,\F_i) \in R^{p}$, where $L$ is a filter function such that $L(t, \F_i)$ is well-defined. We say $L(t,\F_0) \in \mathrm{Lip}_s(I)$ if there exists a constant $c > 0$ such that for any $t_1, t_2 \in I, t_1 < t_2$,
\begin{align}
    \|L(t_1,\F_0)-L(t_2,\F_0) \|_s  \leq c|t_1-t_2|,
\end{align}
where $\|X\|_s = \LJ{\{E(|X|^s)\}^{1/s}}$. To measure the dependence of time series,  we adopt the physical dependence (\citep{wu2005nonlinear}).  For  $L(t,\F_i)$ on interval $I$, the physical dependence in $\mathcal L^r$ norm is defined by
\begin{align}
    \delta_r(L,k,I) =\sup_{t\in I} \|L(t,\F_{k})-L(t, \F_{k,\{0\}})\|_r.
\end{align} 
In the following, we give the definition of local stationarity and short-range dependence. 
\begin{definition}
The process $G(t, \F_i) \ (i\in Z)$ is of $r$-order short-range dependence on interval $I$ if $\sup_{t \in I} \| G(t, \F_0) \|_{r} <\infty $, $\delta_{r}(H, k, I) = O(\chi^k)$,  \LJ{for some $\chi \in (0,1)$}, $r \geq 1$, and $s$-order locally stationary on interval $I$, $s \geq 2$, if
 $G(t, \F_0) \in \mathrm{Lip}_{s}(I)$.
\end{definition}
In the functional linear model \eqref{eq:tvlinear}, we consider  $ x_{i,n}= W(t_i, \F_i)$, $e_{i,n}=H(t_i, \F_i)$, where  $t_i=i/n$ , $W$ and $H$ are measurable nonlinear filters mapping from $[0, 1]\times \mathbb R^{Z}$ to $\mathbb R^p$ and $(-\infty, 1]\times \mathbb R^{Z}$ to $\mathbb R$, respectively. Permitting $i$ to approach $-\infty$ enables us to develop our proposed method under long memory, see \cref{sec:app2} for detailed discussion. We further assume $e_{i,n}$ has the following state-heteroscedastic form, i.e., 
$e_{i,n} = \tilde H(t_i, \mathcal H_i ) \tilde G(t_i,  \mathcal{G}_i) $ for $-\infty <i\leq n$, where the nonlinear filters $\tilde H(t, \cdot)$ and  $\tilde G(t,\cdot)$ are $\mathcal H_i$ and $\mathcal G_i$ measurable functions for $t\in(-\infty,1]$, and filtrations  $\mathcal{G}_i$ and $\mathcal H_i$ are sub-$\sigma$-fields of $ \F_i$, independent of each other. Furthermore, $ \mathcal H_i \subset \sigma( x_{1,n},..., x_{i,n})$ if $1\leq i\leq n$, and for any $t\in (-\infty, 1]$, $E( \tilde G(t,  \mathcal{G}_i))  = 0$. The above formulation
admits the heteroscedastic errors considered in \citep{lackoffit} and \citep{kulik2012conditional}, where they assume $\tilde H(t_i, \mathcal H_i)$ to have the form of  $s( x_{i,n})$ for some unknown smooth function $s(\cdot)$. Moreover, $\tilde G(t_i,\mathcal G_i)$ allows the conditional heteroscedasticity  as considered by assumption 1(b) of \citep{cavaliere2017quasi}. 
 Further, let $ {U}\left(t_i, \mathcal{F}_{i}\right) = W(t_i, \F_i)H(t_i, \F_i) $ such that $x_{i,n}e_{i,n}= {U}\left(t_i, \mathcal{F}_{i}\right)\ (i=1,\ldots, n)$. The time-varying long-run covariance function for the functional linear model \eqref{eq:tvlinear} is defined by
\begin{align}
 {  \Sigma}(t)=\sum_{j=-\infty}^{\infty} \operatorname{cov}\LJ{\{}{U}\left(t, \mathcal{F}_{0}\right), {U}\left(t, \mathcal{F}_{j}\right)\LJ{\}}\quad (t \in [0,1]).\label{eq:long-run variance}
\end{align}
For stationary $x_{i,n}$ and $e_{i,n}$, $\Sigma(t)$ will be time invariant. In this case for linear models with no structural changes, \citep{zhoushao2013} proposes a self-normalization method for statistical inference so that the estimation of the long-run covariance matrix can be avoided. However, their method relies crucially on strict stationarity. Therefore, for the statistical inference of non-stationary time series linear models, the accurate and robust estimation of  $\Sigma(t)$ is essential.

\section{The debiased difference-based estimator}\label{sec:diff}

 
When $p=1$,
\citep{dette2019detecting}   proposes the following   difference-based estimator $\acute{{  \Sigma}}(t)$ based on the difference of $  x_{i,n}y_{i,n}$ for the long run variance of $e_{i,n}$. Let $Q_{k, m}=\sum_{i=k}^{k+m-1}   x_{i,n}y_{i,n}$, and  for $t \in [m/n,1-m/n]$, 
\begin{align}
     \acute{{  \Sigma}}(t)=\sum_{j=m}^{n-m} \frac{m   \Delta_{j}  \Delta_{j}^{\T}}{2}\omega(t, j), \quad \Delta_{j}=\frac{  Q_{j-m+1, m}-   Q_{j+1, m}}{m},\label{eq:diff_based}
\end{align}
where for some bandwidth $\tau_n$ and the kernel function $K(\cdot)$ with support $(-1,1)$,
     \begin{align}
      \omega(t, i)= K_{\tau_n}\left(t_i-t\right) / \sum_{i=1}^{n} K_{\tau_{n}} \left(t_i-t\right), ~~K_{\tau_n}(\cdot)=K(\cdot/\tau_n).
     \end{align}
For $t \in [0,m/n)$,  $\acute{{  \Sigma}}(t) = \acute{{  \Sigma}}(m/n)$ and for $t \in (1-m/n,1]$, $\acute{{  \Sigma}}(t) = \acute{{  \Sigma}}(1-m/n)$. \par

Unfortunately, in \cref{proof:thm7.1}, we find that the difference-based estimator $\acute {  \Sigma}(t)$ is asymptotically biased for $p\ge 2$. \LJ{The bias is negligible when the covariates are fixed and continuous as assumed in Assumption 1 of \citep{zhou2015optimal}, since the bias is caused by the stochastic variation of covariates. We explain this in detail in Section \ref{sec:bias} of the supplemental material.} 
In \cref{proof:thm7.1}, we also show that $E(\acute {  \Sigma}(t))-  \Sigma(t)$ can be uniformly approximated by the expectation of the following $ {\Sigma}^A(t)$:
     \begin{align}
         \Sigma^A (t) = \sum_{j=m}^{n-m} \frac{m   A_{j,m}   A_{j,m}^{\T}}{2}\omega(t, j)\quad (t \in [m/n,1-m/n]),\nonumber
    \end{align}
where for $j = m,\ldots, n$,
$$ A_{j,m} = \frac{1}{m} \sum_{i=j-m+1}^j\big\{  x_{i,n}   x_{i,n}^{\T} {  \beta}(t_i) -   x_{i+m, n}   x_{i+m, n}^{\T} {  \beta}(t_{i+m})\big\},$$
 and $  \Sigma^A (t) =   \Sigma^A(m/n)$  for $t \in [0,m/n)$, $  \Sigma^A(t) =   \Sigma^A(1-m/n)$ for $t \in (1-m/n,1]$.
To utilize $  \Sigma^A(t)$, we  shall substitute $  \beta(\cdot)$ in $  A_{j,m}$ via an estimator without introducing additional tuning parameters.
For this purpose,  
define
    $
    \tilde{  Y}_{i,m} =   x_{i,n} y_{i,n}-  x_{i+m, n} y_{i+m, n},  \tilde{  X}_{i,m} =   x_{i,n}   x_{i,n}^{\T}-  x_{i+m, n}   x_{i+m, n}^{\T}
    $, and $\tilde{  E}_{j,m} = \frac{1}{m} \sum_{i=j-m+1}^j  (  x_{i,n} e_{i,n}-  x_{i+m, n} e_{i+m, n})$.
    By the continuity of $  \beta(t)$,
    \begin{align}
       \frac{1}{m} \sum_{i=j-m+1}^j \tilde{  Y}_{i,m} =   A_{j,m} +\tilde{  E}_{j,m} \approx \frac{1}{m} \sum_{i=j-m+1}^j \tilde{  X}_{i,m}   \beta(t_i) + \tilde{  E}_{j,m}\notag\\
       \approx \frac{1}{m}   \beta(t_j) \sum_{i=j-m+1}^j \tilde{  X}_{i,m} + \tilde{  E}_{j,m}.\label{Motivation}
    \end{align}
      Since $\LJ{E}(\tilde{  E}_{j,m}) = 0$, the  random vector $\surd{m} \tilde{  E}_{j,m}$ is  $\LJ{O_p}(1)$.  Treating $\sum_{i=j-m+1}^j \tilde{  Y}_{i,m}/\surd{m}$ as the response variable, $ \sum_{i=j-m+1}^j \tilde{  X}_{i,m}/\surd{m} $ as covariates and  $\surd m\tilde{  E}_{j,m}$ as errors, \eqref{Motivation} motives us to approximate $  \beta(t)$ in $  A_{j,m}$ by 
    \begin{align}
        \breve {  \beta}(t) =   \Omega^{-1}(t)  \varpi (t),\label{eq:brevebeta}
    \end{align}
    where $ \Omega(t)$ and $  \varpi (t)$ are the smoothed versions of $$\acute{  \Delta}_{j}/2 = \frac{1}{2m}\sum_{i=j-m+1}^{j} \tilde{  X}_{i,m}\tilde{  X}_{i,m}^{\T}~\text{and}~\breve {  \Delta}_{j}/2 = \frac{1}{2m}\sum_{i=j-m+1}^{j} \tilde{  X}_{i,m}^{\T}\tilde{  Y}_{i,m},$$ i.e., for $t \in [m/n, 1-m/n]$,
    \begin{align}
         & \Omega(t) = \sum_{j=m}^{n-m} \acute {  \Delta}_{j} \omega(t, j)/2,\nonumber \quad   \varpi (t) = \sum_{j=m}^{n-m} \breve {  \Delta}_{j} \tilde \omega(t, j)/2,
     \end{align}
     where $\tilde \omega(t, i)= K\{ (t_i-t)/\tau_n^{3/2} \} / \sum_{i=1}^{n} K \{(t_i-t)/\tau_n^{3/2} \}$, 
     while for $t \in [0,m/n)$, $  \Omega(t) =   \Omega(m/n)$, $ \varpi (t) =   \varpi (m/n)$ and for $t \in (1-m/n,1]$, $  \Omega(t) =   \Omega(1-m/n)$, $  \varpi (t) =   \varpi (1-m/n)$.  \LJ{The estimator $\breve \beta(t)$, which is also based on difference series, is accurate  except in the vicinity of abrupt changes.} Fortunately, the effect of abrupt changes can be mitigated by the local averaging in the formula of $A_{j,m}$.
   Replacing $  \beta(t)$ by $\breve {  \beta}(t)$ in $  A_{j,m}$ we obtain \begin{align}
    \hat{  A}_{j, m} = \frac{1}{m}\sum_{i= j-m+1}^{j} \{  x_{i,n}   x_{i,n}^{\T}\breve {  \beta}(t_i)-  x_{i+m, n}   x_{i+m, n}^{\T}\breve {  \beta}(t_{i+m})\},\nonumber 
\end{align}
and the corresponding debiased difference-based estimator $\hat {  \Sigma}(t)$ for $t \in [0,1]$: 
     \begin{align}
         \hat {  \Sigma}(t) = \acute {  \Sigma}(t) - \breve {  \Sigma}(t), \quad \breve {  \Sigma}(t)= \sum_{j=m}^{n-m} \frac{m\hat{  A}_{j,m} \hat{  A}_{j,m}^{\T}}{2}\omega(t, j)\label{eq:diff_correct},
     \end{align}
 which is robust to structural breaks in regression coefficient functions due to differencing, and except for $m$, $\tau_n$ used for $\acute{\Sigma}(t)$, \LJ{the correction $\breve \Sigma(t)$ does not involve additional tuning parameters. Thanks to the residual-free formula, the estimator \eqref{eq:diff_correct} can preserve its consistency even when it is challenging to estimate $\beta(t)$ accurately.}
\section{Consistency under smooth structural changes}\label{sec:theory}
In this section, we discuss the uniform convergence of the debiased difference-based long-run covariance matrix estimator \eqref{eq:diff_correct}  for the functional linear model \eqref{eq:tvlinear} under smooth structural changes and short-range dependence with locally stationary predictors and errors, which accommodates many null hypotheses of nonparametric specification tests. The convergence of $\hat \Sigma(t)$ when both smooth and abrupt changes occur is deferred to \cref{sec:app1}, and the performance of $\hat \Sigma(t)$ under long memory is postponed to \cref{sec:app2}. 

\begin{assumption}
    Each coordinate of $  \beta(t)$ lies in $C^3[0,1]$.\label{A:beta}
  \label{A:nonpar}
 \end{assumption}
Assumption \ref{A:beta} imposes smooth structural change, i.e.,  the coefficient function $ \beta(\cdot)$ is smooth. In \cref{sec:app1}, we shall relax Assumption \ref{A:beta} to allow abrupt structural changes.

\begin{assumption}\label{Ass-U} \
  The value $\lambda_{\min}(\Sigma(t))$ is bounded away from $0$ uniformly on $[0,1]$, and each element of ${  \Sigma}(t) \in  C^2[0,1]$.
\end{assumption}
Assumption \ref{Ass-U} guarantees the non-degeneracy of the long-run covariance matrix of the process of $ U(t,  \F_i)$ and each component of ${  \Sigma}(t)$ is smooth,  which are common in the analysis of functional linear models of locally stationary time series, see for example \citep{zhou2010simultaneous}.   Let $ J(t, \F_0) =  W(t, \F_0) W^{\T}(t, \F_0)$.
Define $\mu_W (\LJ{t}) = \LJ{E}\{W(t, \F_0)\}$, $M(t) =\LJ{E}\{J(t, \F_0)\}$, $t \in [0,1]$. 
\begin{assumption}\label{Ass-W}   
 Each element of the functions
 {$ M(t) \in C^1[0,1]$, $ \mu_W (t) \in C^1[0,1]$, and $\inf_{t \in [0,1]}\lambda_{\min}(M(t)) > 0$}. The covariate process $W(t, \F_i)\ (i=1,\ldots, n)$ is of $4$-order local stationarity and $16\kappa$-order short-range dependence on $[0,1]$ for  some $\kappa \geq 1$.\label{E:WW} \label{A:Mt}
\end{assumption}

Assumption \ref{Ass-W} requires that $ M(t)$ is non-degenerate, implies that  $\sup_{t\in [0,1]} \| J(t, \F_0)\|_{8\kappa} < \infty$, $\delta_{8\kappa}( J, k) = O(\chi_1^{k})$ for some $\chi_1\in(0,1)$, and ensures that each element of $( x_{i,n}  x_{i,n}^{\T})$ is $2$-order locally stationary. 

The following assumption ensures the invertibility of $\Omega(t)$ in \eqref{eq:brevebeta}.
\begin{assumption}\label{Ass-E}
Each element of the covariance function $\mathrm{cov} \left\{{ J}(t, \F_0),{ J}(t, \F_0)\right\} \in C^2 [0,1]$ and its smallest eigenvalue is strictly positive on $[0,1]$.
\end{assumption}
\begin{assumption} The process $H(t, \F_i)\ (i \in Z)$ satisfies $16\kappa$-order of  short-range dependence and $4$-order of local stationarity on $[0, 1]$.
 \label{E:HW}\label{E:U} \label{B:H_delta}
\end{assumption}

 Assumptions \ref{E:WW} and \ref{E:U} ensure that $U(t, \F_i)\ (i \in Z)$ is of $2$-order local stationarity and $8\kappa$-order of short-range dependence on $[0,1]$.
The existence of  $16\kappa$th moments of covariates and errors are assumed for technical convenience, and it is satisfied by sub-exponential random variables. We conjecture that it can be relaxed by substantially more involved mathematical arguments, see the simulation study of \citep{bai2021}. \LJ{Specifically, 
the condition of $16\kappa$-order dependence can be replaced by sub-exponential moment conditions and dependence measure in $\mathcal L^1$ norm. Further technical discussion on Assumption 5 can be found in \cref{sec:high} of the supplement.}
 Assumption \ref{Ass-E} is mild, and it excludes the scenario in which all of the time series covariates reduce to deterministic smooth trends; in this case, we recommend using the direct multivariate extension of the difference-based long-run covariance estimator of \citep{dette2019detecting}, namely $\acute \Sigma(t)$. \LJ{The following assumption gives the properties of the kernel function. The use of different kernels is discussed in \cref{sec:sel}.
\begin{assumption}
     The kernel function $K(\cdot)$ is a continuously differentiable, symmetric density function and supported on $(-1,1)$.
    \label{A:K}
 \end{assumption}
 }
Let ${I} = [\gamma_n, 1-\gamma_n] \subset (0,1)$, where $\gamma_{n}=\tau_{n}+(m+1) / n$. 
\begin{theorem}  \label{thm:lrv_diff}
    Under Assumptions \ref{A:nonpar}, \ref{Ass-U},  \ref{Ass-W}, \ref{Ass-E}, \ref{A:K} and \ref{B:H_delta} with constant $\kappa \geq 1$, suppose
    $m = O(n^{1/3})$, $m \to \infty$, $\tau_n \to 0$, $m / (n \tau_n^{3/2 + 2/\kappa}) \to 0$, $\tau_n^{3-1/\kappa}\surd{m} \to 0$, $m\tau_n \to \infty$, we have
    \begin{equation}
      \sup _{t \in {I}}\left|\hat{{ \Sigma}}(t)-{  \Sigma}(t)\right|= 
      \LJ{O_p}\left(\frac{m^{1/2}}{n^{1/2} \tau_n^{3/4+1/\kappa}} + \frac{1}{m}+m^{1/2}\tau_n^{3-1/\kappa} \right).\nonumber
      \end{equation}
  \end{theorem}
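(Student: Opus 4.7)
The plan is to work from the decomposition
\[
\hat\Sigma(t) - \Sigma(t) = \bigl\{\acute\Sigma(t) - \Sigma^A(t) - \Sigma(t)\bigr\} + \bigl\{\Sigma^A(t) - \breve\Sigma(t)\bigr\},
\]
bounding each bracket uniformly in $t \in I$ and combining. The construction of the estimator already indicates that $E(\acute\Sigma(t)) - \Sigma(t)$ is uniformly approximated by $E(\Sigma^A(t))$, so the debiasing is sound in mean; the task is to control stochastic fluctuations and the error of substituting $\breve\beta$ for $\beta$.

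For the first bracket, I would decompose $\Delta_j = A_{j,m} + \tilde E_{j,m}$ and expand $\Delta_j \Delta_j^\T$ into $A_{j,m}A_{j,m}^\T$, two cross terms, and $\tilde E_{j,m}\tilde E_{j,m}^\T$. The last piece, $\sum_j (m/2)\tilde E_{j,m}\tilde E_{j,m}^\T\omega(t,j)$, is the multivariate analogue of the estimator of \citep{dette2019detecting} applied to $x_{i,n}e_{i,n}$: its mean matches $\Sigma(t)$ up to $O(1/m)$, while Nagaev-type inequalities under the $16\kappa$-moment condition, together with a grid-plus-Lipschitz chaining over $I$, give the uniform fluctuation rate $O_p(m^{1/2}/(n^{1/2}\tau_n^{3/4+1/\kappa}))$; the factor $\tau_n^{-1/\kappa}$ is the price paid for the tail truncation needed to convert $\mathcal L^{16\kappa}$-moment bounds into a uniform-in-$t$ bound. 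The cross terms have zero mean because $\tilde E_{j,m}$ has conditional mean zero given the covariate $\sigma$-field under the state-heteroscedastic decomposition while $A_{j,m}$ is a function of covariates only; an analogous argument bounds their fluctuation by the same order.

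For the second bracket I first derive a uniform rate for $\breve\beta(t) - \beta(t)$. Writing $\tilde Y_{i,m} = \tilde X_{i,m}\beta(t_i) + r_{i,m} + (x_{i,n}e_{i,n} - x_{i+m,n}e_{i+m,n})$ with a Taylor remainder $r_{i,m}$, smoothing at bandwidth $\tau_n^{3/2}$, and using $\breve\beta(t) = \Omega(t)^{-1}\varpi(t)$, I expect
\[
\sup_{t \in I}|\breve\beta(t) - \beta(t)| = O_p\bigl\{(n\tau_n^{3/2})^{-1/2}\tau_n^{-1/\kappa} + \tau_n^{3}\bigr\},
\]
drawing on $\beta \in C^3$ (Assumption \ref{A:beta}), symmetry of the kernel (Assumption \ref{A:K}), and Assumption \ref{Ass-E} for uniform invertibility of $\Omega(t)$ via concentration around its $C^1$ mean. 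Propagating this into $\hat A_{j,m} - A_{j,m}$, noting $|A_{j,m}| = O_p(m/n + m^{-1/2})$ in the bulk from the Lipschitz continuity of $\beta$ and $m$-dependent concentration, and substituting into
\[
\breve\Sigma(t) - \Sigma^A(t) = \sum_{j=m}^{n-m} \frac{m}{2}\bigl\{(\hat A_{j,m} - A_{j,m})\hat A_{j,m}^\T + A_{j,m}(\hat A_{j,m} - A_{j,m})^\T\bigr\}\omega(t,j),
\]
yields the contribution $O_p(m^{1/2}\tau_n^{3-1/\kappa})$.

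The principal obstacle I anticipate is keeping both stochastic expansions uniform in $t$ at the two different bandwidth scales ($\tau_n$ and $\tau_n^{3/2}$) with only $16\kappa$ finite moments. The chaining/grid argument must (i) exploit the state-heteroscedastic conditional structure to kill the cross terms in mean rather than rely on an unconditional argument, and (ii) establish Lipschitz-in-$t$ control of both the bias of $\breve\beta$ and each stochastic summand, so that the continuum supremum can be reduced to a discrete maximum without losing any power of $\tau_n$. Once these uniform bounds are in place, summing the three contributions gives precisely $O_p\bigl(m^{1/2}n^{-1/2}\tau_n^{-3/4-1/\kappa} + 1/m + m^{1/2}\tau_n^{3-1/\kappa}\bigr)$, matching the claim.
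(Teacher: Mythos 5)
Your proposal is correct and follows essentially the same route as the paper: expanding $\Delta_j=A_{j,m}+\tilde E_{j,m}$ yields exactly the paper's three pieces (the oracle estimator built from $x_{i,n}e_{i,n}$, the zero-mean cross terms $A_{j,m}\tilde E_{j,m}^{\T}$ handled by a martingale/projection argument, and the residual $\Sigma^A(t)-\breve\Sigma(t)$ controlled through the uniform rate $\tau_n^3+(n\tau_n^{3/2})^{-1/2}$ for $\breve\beta-\beta$ together with $\|A_{j,m}\|=O(m^{-1/2}+m/n)$), and the $\tau_n^{-1/\kappa}$ chaining penalty and the $1/m$ bias of the oracle term appear in the same way. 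The only cosmetic differences are that the paper groups the first two pieces as $\hat\Sigma(t)-\tilde\Sigma(t)$ and uses Burkholder-type moment bounds plus the chaining proposition of Dette et al.\ rather than a Nagaev inequality, which does not change the argument.
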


The above equation and the bandwidth conditions for Theorem \ref{thm:lrv_diff} ensure the uniform consistency of $\hat{\Sigma}(t)$.  \LJ{As shown by \cref{lm:bias} in the supplement, the estimator $\breve \beta(t)$ is consistent when there are no jump points. As pointed out by a referee, it is viable to use any pilot estimator of $\beta (t)$ that satisfies the conditions in \cref{sec:high} of the supplement in $\breve \Sigma(t)$ for debiasing.}\par

We compare our results with \citep{zhou2010simultaneous} which proposes a plug-in estimator of long-run covariance matrix using nonparametric residuals. 
According to Theorem 5 in \citep{zhou2010simultaneous}, 
their best approximation rate can be  close to but not faster than $n^{-1/4}$.
In contrast, our uniform rate in \cref{thm:lrv_diff} is $n^{-4/15+2/(15\kappa)}$ by
taking $m \asymp n^{4/15}$, $\tau_n \asymp n^{-2/15}$, which is better  when $\kappa$ is sufficiently large.  \LJ{In unreported studies, we find that the performance of plug-in estimators using $\breve \beta(\cdot)$ is superior to \citep{zhou2010simultaneous} but inferior to our proposed estimator. This estimator is also available in R package \texttt{mlrv}.} 

\section{Applications}\label{sec:app}
\subsection{Structural change detection with monotonic power}\label{sec:app1}
The first application is the detection of structural changes in the stochastic linear regression  
\begin{align}
     y_{i,n} = x_{i,n}^{\T}  \beta_{i,n} + e_{i,n} \quad (i = 1,\ldots, n),
\end{align}
where $(x_{i,n})_{i=1}^n$ is the $p$-dimensional covariate process and $(e_{i,n})_{i=1}^n$ is the error process. The test for structural changes  of $(\beta_{i,n})$ considers the hypothesis $H_0: \beta_{1,n}=\beta_{2,n}=\LJ{\cdots}=\beta_{n,n}$. Allowing  general non-stationarity in the covariates and the errors, \citep{wu2018gradient} proposes to use the test statistic 
\begin{align}
    T_n = \max_{1 \leq j \leq n}\left| \sum_{i=1}^j \hat e_{i,n} x_{i,n} / \surd{n} \right|,\quad  \hat e_{i,n} = y_{i,n} - x_{i,n}^{\T} \hat{ \beta}_n  \quad (i = 1,\ldots, n), \label{eq:cpstat}
\end{align}
where $\hat{ \beta}_n = \mathrm{argmin}_{\beta} \sum_{i=1}^n (y_{i,n} - x_{i,n}^{\T}  \beta )^2$.
For the alternative hypothesis,  we consider
\begin{align}
y_{i,n}=  x^{\T}_{i,n} \beta(i/n) + e_{i,n}, \quad  \beta(t) =  \beta +  L_ng(t)\quad (i = 1,\ldots, n;\ t\in [0,1]),\label{eq:alt}
\end{align}
where $\beta, g(t) \in \mathbb R^p$, $g(0) = 0$, $L_n$ is a positive real sequence, $\beta(t)$ has potential abrupt changes, i.e., $  \beta(t)=\sum_{j=0}^{q_n}  b_{j}(t) \mathrm{1}(a_{j} \leq t<a_{j+1})$, $0 = a_0 < a_1 < \cdots < a_{q_n} < a_{q_n+1} =1$,
   $ b_{j}(t) \in \mathcal C^3(a_j, a_{j+1})$, $\sup_{0 \leq j \leq q_n}| b_{j}(a_{j}) -  b_j(a_{j}^{-})| < \infty$, and $q_n$ is the number of abrupt changes.
\citep{wu2018gradient} proposes a general bootstrap statistics according to \eqref{eq:ncpdist}, which relies on the residual $\hat e_{i,n}=y_{i,n}-x_{i,n}^{\T}\hat \beta_n$. Their test can be applied to piecewise locally stationary covariates and errors, and is unified for testing structural changes in general $M$ estimation. They show their power approaches $1$ under local alternatives $L_n=n^{-1/2}$ or $L_n=o(1)$, $\surd n L_n\rightarrow \infty$. However,  if $L_n$ does not vanish, $\hat \beta_n$ is not consistent and the power of the test in \citep{wu2018gradient} in this case is not theoretically guaranteed.
If we focus on the locally stationary process which is quite general and on least squares regression which is arguably  the most widely applied M-estimator in practice, we could improve the power of the test for structural changes via our proposed long-run covariance matrix estimator $\hat \Sigma(t)$ and an alternative bootstrap procedure to \citep{wu2018gradient}.
For this purpose, we define \begin{align}
    F_r = \max_{m \leq i \leq n-m+1}|\Psi_{i,m}^{(r)} - \hat{ \Lambda}(i/n) \hat{ \Lambda}^{-1}(1) \Psi_{n-m+1,m}^{(r)}|, \label{eq:bootstrap}
\end{align}
where
$
    \Psi_{i,m}^{(r)} = n^{-1/2}\sum_{j=1}^i \hat { \Sigma}^{1/2}(t_j) R_j^{(r)}$, $\hat{ \Lambda}(i/n) = \sum_{j=1}^i x_{j,n} x_{j,n}^{\T}/n,
$
  $( R_j^{(r)})_{j=1}^n$ are the \LJ{independent and identically distributed} standard norm\LJ{al} random variables independent of data and are independently generated  in the $r_{th}$ bootstrap iteration, and $ \hat \Sigma(t) $ is the proposed difference-based estimator of $ \Sigma(t)$. Let $B$ denote the number of bootstrap iterations.
Let $F_{(1)} \leq F_{(2)} \leq \cdots \leq F_{(B)}$ be the order statistics of $(F_r)$. We reject the structural stability test at the significance level of $\alpha$ if $T_n$ is greater than the $F_{\lfloor (1-\alpha)B \rfloor}$.
We proceed to relax Assumption \ref{A:nonpar} in \cref{thm:lrv_diff} allowing for the possible presence of abrupt changes and discuss the property of the bootstrap procedure.
\begin{theorem}\label{thm:cp}
 Under the Assumptions \ref{Ass-U},  \ref{Ass-W}, \ref{Ass-E}, \ref{A:K} and \ref{B:H_delta} with constant $\kappa \geq 1$, and the bandwidth conditions   $m = O(n^{1/3})$,  $\tau_n \to 0$,  $n \tau_n^{7/2}/m \to \infty$, $m\tau_n^{3} \to \infty$, assuming  $q_n = o\{\min(\tau_n^{-1/2}, n\tau_n / m^2)\}$,  under the alternative hypothesis \eqref{eq:alt},  we have
  \begin{equation}
      \sup _{t \in \mathcal{I}}|\hat{{ \Sigma}}(t)-{ \Sigma}(t)|= o_p\left(1 \right),
      \end{equation}
and as a result there exists a $p$-dimensional zero-mean Gaussian process $Z(t)$ with covariance function $\gamma(t,s) = \int_0^{\min(t,s)}  \Sigma(r)dr$ such that 
 \begin{align}
        F_r \Rightarrow   \sup_{t \in (0,1]}|G(t)| = \sup_{t \in (0,1]}|Z(t) -  \Lambda(t)  \Lambda^{-1}(1) Z(1)|, \quad \Lambda(t) = \int_0^t M(s) ds.
    \end{align}
\end{theorem}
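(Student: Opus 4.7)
The plan is to treat the two conclusions in sequence: first upgrade the uniform consistency of $\hat\Sigma(t)$ from Theorem \ref{thm:lrv_diff} to the piecewise-smooth setting of \eqref{eq:alt}, and then derive the weak limit of the bootstrap statistic $F_r$ by conditioning on the data and invoking a functional central limit argument. The main obstacle will be controlling the bias introduced by the $q_n$ abrupt jumps, since the difference series $\tilde Y_{i,m}$, $\tilde X_{i,m}$ and the debiasing map $t\mapsto \breve\beta(t)$ all fail to be approximately zero-mean (respectively smooth) on the $O(q_n m)$ indices whose $m$-windows straddle a change point.

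First I would reuse the decomposition of $\acute\Sigma(t)-\Sigma(t)$ established in the proof of Theorem \ref{thm:lrv_diff}, writing
\[
\hat\Sigma(t)-\Sigma(t) = \bigl[\acute\Sigma(t)-\E\acute\Sigma(t)\bigr] + \bigl[\E\acute\Sigma(t)-\Sigma(t)-\E\Sigma^A(t)\bigr] + \bigl[\E\Sigma^A(t)-\breve\Sigma(t)\bigr].
\]
The stochastic term $\acute\Sigma(t)-\E\acute\Sigma(t)$ is unaffected by the form of $\beta$ and is controlled exactly as in Theorem \ref{thm:lrv_diff}. For the second bracket I would split the sum over $j$ into the index set $\mathcal J_{\rm sm}$ whose window $[j-m+1,j+2m]$ avoids every $a_k$ and the complementary set $\mathcal J_{\rm br}$ of jump-affected indices. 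On $\mathcal J_{\rm sm}$ the smoothness arguments from Theorem \ref{thm:lrv_diff} apply verbatim; on $\mathcal J_{\rm br}$, which has cardinality $O(q_n m)$, the $m\tilde X_{i,m}$ and $mA_{j,m}$ factors remain $O_p(m)$, so its smoothed contribution is $O_p(q_n m / (n\tau_n))$, which is $o_p(1)$ by the assumption $q_n = o(n\tau_n/m^2)$. The remaining bandwidth conditions $m=O(n^{1/3})$, $m\tau_n^3\to\infty$ and $n\tau_n^{7/2}/m\to\infty$ let each rate in Theorem \ref{thm:lrv_diff} vanish at the weaker ($\kappa=1$) level sufficient for $o_p(1)$.

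For the debiasing term $\E\Sigma^A(t)-\breve\Sigma(t)$ the key step is showing that the pilot $\breve\beta(t)=\Omega^{-1}(t)\varpi(t)$ remains accurate outside a $\tau_n^{3/2}$-neighbourhood of each jump. Both $\Omega$ and $\varpi$ are kernel smooths with bandwidth $\tau_n^{3/2}$, so by the locally stationary law of large numbers (Assumptions \ref{Ass-W}, \ref{Ass-E}, \ref{B:H_delta}) the contribution from $\mathcal J_{\rm sm}$ yields $\breve\beta(t)=\beta(t)+o_p(1)$, while the jump-affected indices contribute at most $O_p(q_n\tau_n^{1/2})$ after smoothing, which is $o_p(1)$ by $q_n=o(\tau_n^{-1/2})$. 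Plugging this into $\hat A_{j,m}$ and expanding shows that $\breve\Sigma(t)$ reproduces $\E\Sigma^A(t)$ up to $o_p(1)$, establishing the uniform consistency on $\mathcal I$.

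Given consistency of $\hat\Sigma$, the bootstrap conclusion follows by a conditional invariance principle. Conditional on the data, $\Psi_{\lfloor ns\rfloor, m}^{(r)} = n^{-1/2}\sum_{j=1}^{\lfloor ns\rfloor}\hat\Sigma^{1/2}(t_j)R_j^{(r)}$ is a centered Gaussian process in $s$ with covariance $n^{-1}\sum_{j=1}^{\lfloor n(s\wedge s')\rfloor}\hat\Sigma(t_j)$, which by the uniform consistency of $\hat\Sigma$ and a Riemann-sum argument converges in probability to $\int_0^{s\wedge s'}\Sigma(r)\,dr = \gamma(s,s')$; tightness follows from the classical Kolmogorov criterion applied to the Gaussian increments, so $\Psi_{\lfloor ns\rfloor,m}^{(r)}\Rightarrow Z(s)$ conditionally on the data, in probability. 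For the multiplier $\hat\Lambda(i/n)$, the locally stationary LLN under Assumption \ref{A:Mt} gives $\sup_{t\in[0,1]}|\hat\Lambda(\lfloor nt\rfloor/n)-\Lambda(t)| = o_p(1)$ and $\hat\Lambda^{-1}(1)\to\Lambda^{-1}(1)$ by the positive-definiteness of $M$. Combining these via Slutsky and the continuous mapping theorem applied to the functional $f\mapsto \sup_{s}|f(s)-\Lambda(s)\Lambda^{-1}(1)f(1)|$ yields $F_r\Rightarrow\sup_{t\in(0,1]}|Z(t)-\Lambda(t)\Lambda^{-1}(1)Z(1)|$, as claimed. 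The chief technical hurdle throughout is the sharp accounting of $\mathcal J_{\rm br}$ and the behaviour of $\breve\beta$ near jumps; everything else is a careful bookkeeping adaptation of the arguments used for Theorem \ref{thm:lrv_diff}.
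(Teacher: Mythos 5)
Your overall architecture (split the index set by whether windows straddle a jump, reuse the Theorem \ref{thm:lrv_diff} machinery on the clean part, show the jump-contaminated part is negligible, then get the bootstrap limit by a conditional Gaussian covariance computation) matches the paper's strategy, and the bootstrap paragraph is fine. But there is a genuine gap in the step you yourself flag as the key one: the behaviour of $\breve\beta$ near jumps. You assert that the indices within a $\tau_n^{3/2}$-neighbourhood of a jump "contribute at most $O_p(q_n\tau_n^{1/2})$ after smoothing." Count it: there are $O(q_n n\tau_n^{3/2})$ such indices (and $n\tau_n^{3/2}/m\to\infty$ here, so this set is much larger than your $\mathcal J_{\rm br}$ of cardinality $O(q_nm)$), each carries kernel weight $\asymp (n\tau_n)^{-1}$ in $\breve\Sigma(t)$, and each term comes with the factor $m$ in $\tfrac{m}{2}\hat A_{j,m}\hat A_{j,m}^{\T}$. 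If on this set you only know $\breve\beta(t)-\beta(t)=O_p(1)$ (which is all that is true — the paper explicitly notes $\breve\beta$ is inconsistent there), then $\hat A_{j,m}-A_{j,m}=O_p(1)$ and the contribution is $O_p(q_nm\tau_n^{1/2})$, not $O_p(q_n\tau_n^{1/2})$. Under the stated hypotheses this can diverge: take $q_n\asymp\tau_n^{-1/2}/\log n$ and $m\asymp n^{4/15}$. So the bound you invoke does not follow from what you have established.

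The missing idea is the paper's Lemma \ref{lm:beta_cp}: what enters $\hat A_{j,m}$ is not $\breve\beta(t_i)-\beta(t_i)$ but the \emph{increment} $\breve\beta(t_{i+m})-\breve\beta(t_i)$ over a span of length $m/n$, and one shows that even where the smoothing bias of $\breve\beta$ is $O(1)$ it varies slowly, so that $\sup_{j\in\mathcal C_m}\|\breve\beta(t_{j+m_1})-\breve\beta(t_j)\|_{4\kappa}=O\{mq_n/(n\tau_n^{3/2})+(n\tau_n^{3/2})^{-1/2}\}$. Combined with the splitting of $\hat A_{j,m}-A_{j,m}$ into an increment part and a part multiplied by the small average $m^{-1}\sum_i(x_ix_i^{\T}-x_{i+m}x_{i+m}^{\T})=O_p(m^{-1/2})$, this yields $\|\hat A_{j,m}-A_{j,m}\|_{2\kappa}=O(m^{-1/2})$ for every $j$ whose $m$-window is clean, and only then does the $O(q_n\tau_n^{1/2})$ bound follow. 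Two smaller slips: your claim that $\acute\Sigma(t)-\E\acute\Sigma(t)$ is "unaffected by the form of $\beta$" ignores the cross term $m\sum_j\omega(t,j)\tilde\Delta_jA_{j,m}^{\T}$, whose moment bounds change on jump-straddling windows (the paper re-derives them and picks up an extra, negligible, $q_nm^3/(n\tau_n)^2$ term); and your rate $O_p(q_nm/(n\tau_n))$ for $\mathcal J_{\rm br}$ should be $O_p(q_nm^2/(n\tau_n))$, though the assumption $q_n=o(n\tau_n/m^2)$ you invoke is exactly the one that kills the corrected rate.
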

\LJ{ Taking $m \asymp n^{4/15}$, $\tau_n \asymp n^{-2/15}$, which can achieve  $q_n = o(n^{1/15})$ for sufficiently large $\kappa$, which allows the number of abrupt jumps diverges as $n \to \infty$, though the estimator $\breve \beta(t)$ is inconsistent
due to the inconsistency of smoothing in the neighborhood of discontinuous points.} Define 
\begin{align}
   \Lambda(t) = \int_0^t M(s) ds, \quad \Lambda(s,  g(\cdot)) =\int_0^s M(r)  g(r) dr,
\end{align}
and $F(t, g(\cdot)) =    \Lambda(t, g(\cdot)) -   \Lambda(t) \Lambda^{-1}(1) \Lambda(1,  g(\cdot))$. The following proposition gives the limiting distribution of the test statistic $T_n$ under the null hypothesis and ensures the monotonic power of the bootstrap procedure under $H_A:  \beta(t) =  \beta + L_n  g(t)$.
\begin{proposition}    \label{prop:stat_alt} \label{prop:cptest}
   (i) Under the conditions of   
\cref{thm:lrv_diff} and the null hypothesis of no structural changes, we have
    \begin{align}
        T_n \Rightarrow \sup_{t \in (0,1]}|G(t)|,\label{eq:ncpdist}
    \end{align}
    where $G(t)$ is as defined in \cref{thm:cp}.
   \par 
   (ii) Under the conditions of \cref{thm:cp} and the alternative hypothesis \eqref{eq:alt} with $L_n = O(1)$, $n^{1/2}L_n \to \infty$, we have 
   $T_n \to \infty$ in probability at the rate $\surd{n}L_n$, and
   $$\big|\LJ{\mathrm{pr}} (T_n \geq \hat q_{1-\alpha}) - \LJ{\mathrm{pr}} \big(\sup_{t \in [0,1]} |G(t)+ n^{1/2} L_n F(t,  g(\cdot))| \geq \hat q_{1-\alpha} \big)\big|=o(1),$$
   where $\hat q_{1-\alpha}$ is the bootstrap critical value of $F_r$ at the significance level $\alpha$.
\end{proposition}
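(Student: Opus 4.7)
The plan is to decompose $T_n$ into a ``score'' piece, which drives the null distribution, and a ``signal'' piece, which drives the divergence under the alternative, and to handle the bootstrap via the uniform consistency of $\hat\Sigma(t)$ established in \cref{thm:cp}.

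For part (i), I would first rewrite the test statistic in terms of a partial-sum process of the unobserved errors. Under $H_0$, $\hat\beta_n-\beta=\hat\Lambda^{-1}(1)\cdot n^{-1}\sum_{i=1}^n x_{i,n}e_{i,n}$, which yields
\[
n^{-1/2}\sum_{i=1}^j \hat e_{i,n}\, x_{i,n}=S_j-\hat\Lambda(j/n)\hat\Lambda^{-1}(1)S_n,\qquad S_j=n^{-1/2}\sum_{i=1}^j x_{i,n}e_{i,n}.
\]
By Assumptions \ref{Ass-U} and \ref{E:U}, $U(t,\F_i)$ is second-order locally stationary and short-range dependent, so I would invoke the functional central limit theorem for such processes to get $S_{\lfloor n\cdot\rfloor}\Rightarrow Z(\cdot)$ on $D[0,1]$, and a uniform law of large numbers to get $\hat\Lambda(\cdot)\to\Lambda(\cdot)$ uniformly in probability. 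The continuous mapping theorem then delivers $T_n\Rightarrow\sup_t|Z(t)-\Lambda(t)\Lambda^{-1}(1)Z(1)|=\sup_t|G(t)|$.

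For part (ii), under $\beta(t)=\beta+L_n g(t)$ I introduce $M_n(t)=n^{-1}\sum_{i=1}^{\lfloor nt\rfloor}x_{i,n}x_{i,n}^{\T}g(t_i)$ and use $\hat e_{i,n}=e_{i,n}+L_n x_{i,n}^{\T}g(t_i)-x_{i,n}^{\T}(\hat\beta_n-\beta)$, so that a direct expansion gives
\[
n^{-1/2}\sum_{i=1}^j \hat e_{i,n} x_{i,n}=\bigl\{S_j-\hat\Lambda(j/n)\hat\Lambda^{-1}(1)S_n\bigr\}+\sqrt{n}L_n\bigl\{M_n(j/n)-\hat\Lambda(j/n)\hat\Lambda^{-1}(1)M_n(1)\bigr\}.
\]
The first bracket converges weakly to $G(\cdot)$ as in part (i); uniform consistency of $M_n(\cdot)$ to $\Lambda(\cdot,g(\cdot))$ and of $\hat\Lambda(\cdot)$ to $\Lambda(\cdot)$ (both follow from the local-stationarity/short-range dependence conditions) shows that the second bracket equals $\sqrt n L_n F(t_j,g(\cdot))+o_p(\sqrt nL_n)$ uniformly in $j$. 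Hence $T_n=\sqrt n L_n\sup_t|F(t,g(\cdot))|+O_p(1)$, which gives the claimed divergence at rate $\sqrt n L_n$ whenever $g$ is non-trivial.

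The bootstrap step is where \cref{thm:cp} becomes indispensable: it guarantees $\sup_{t\in\mathcal I}|\hat\Sigma(t)-\Sigma(t)|=o_p(1)$ even under the alternative. Conditional on the data, $\Psi_{\lfloor n\cdot\rfloor,m}^{(r)}$ is a Gaussian process whose conditional covariance kernel $n^{-1}\sum_{i=1}^{\lfloor nt\rfloor}\hat\Sigma(t_i)$ converges uniformly to $\gamma(t,t)=\int_0^t\Sigma(r)dr$, so conditionally $\Psi_{\lfloor n\cdot\rfloor,m}^{(r)}\Rightarrow Z(\cdot)$ in probability; combined with $\hat\Lambda(\cdot)\to\Lambda(\cdot)$ the continuous mapping theorem yields the conditional weak convergence $F_r\Rightarrow\sup_t|G(t)|$. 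Therefore $\hat q_{1-\alpha}$ converges in probability to the (deterministic) $(1-\alpha)$ quantile of $\sup_t|G(t)|$, and writing
\[
\mathrm{pr}(T_n\ge\hat q_{1-\alpha})=\mathrm{pr}\!\bigl(\sup_t|G(t)+\sqrt{n}L_nF(t,g(\cdot))|+o_p(1)\ge\hat q_{1-\alpha}\bigr),
\]
an anti-concentration/Portmanteau argument using the continuity of the distribution of $\sup_t|G(t)+\sqrt nL_n F(t,g(\cdot))|$ absorbs the $o_p(1)$ and gives the stated approximation.

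The main obstacle is precisely the bootstrap calibration under a fixed alternative: for a plug-in long-run variance based on $\hat\beta_n$, $\hat q_{1-\alpha}$ would diverge at the same rate as $T_n$ and destroy the monotonic power; it is the residual-free, difference-based construction, combined with the uniform consistency from \cref{thm:cp} (which still holds when $q_n$ abrupt jumps are present), that keeps $\hat q_{1-\alpha}$ bounded. On a more routine level, the delicate supporting step is obtaining joint tightness and a simultaneous uniform approximation for $S_{\lfloor n\cdot\rfloor}$, $M_n(\cdot)$ and $\hat\Lambda(\cdot)$ so that the continuous mapping theorem can be applied once to the coupled functional rather than term by term.
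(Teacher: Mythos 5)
Your proposal is correct and follows essentially the same route as the paper, which simply defers to the arguments of Theorems 3.1 and 3.2 of \citep{wu2018gradient}: the decomposition of the residual partial sums into a score term plus a $\surd{n}L_n$-scaled drift, the invariance principle and uniform law of large numbers for the locally stationary short-range dependent pieces, and the bootstrap quantile consistency obtained from $\sup_{t\in\mathcal I}|\hat\Sigma(t)-\Sigma(t)|=o_p(1)$ under the alternative. Your closing remark correctly identifies why the difference-based estimator (rather than a residual plug-in) is what keeps $\hat q_{1-\alpha}$ bounded and the power monotonic.
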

\cref{prop:stat_alt} shows that $T_n$ is of order $\max(1,\surd n L_n)$ under null hypothesis. 
\cref{thm:cp} and \cref{prop:stat_alt} guarantee the asymptotic correctness  of the bootstrap procedure with the difference-based estimator and that its asymptotic power approaches $1$ under the fixed alternative. Meanwhile, the bootstrap procedure with the difference-based estimator can detect the local alternatives at the parametric rate $\surd{n}$ in the sense that if $L_n = n^{-1/2}$, $T_n \Rightarrow \sup_{t\in [0,1]} |G(t) + F(t,  g(\cdot))|.$ 
\LJ{In contrast,} Proposition B.1 of \citep{wu2018gradient} shows that with the ordinary least squares residuals, the magnitude of $F_r$ using their bootstrap procedure is
$\surd{m}\max(L_n,  \log^2 n /\surd{n}
) \log n$ when $\surd{m} L_n \to \infty$, $L_n \to 0$, implying power loss due to the divergence of $F_r$. 
Therefore, the bootstrap procedure equipped with the difference-based estimator will be more powerful than that with ordinary least squares residuals and overcome non-monotonic power caused by the inflation of the bootstrap statistics in \citep{wu2018gradient}. In earlier work on the remedy of the non-monotonic power for the test of smooth structural changes, \citep{JUHL200914} proposes to estimate the long-run variance via plugging in the nonparametric residuals  for $p=1$. However, the improvement in the power of their approach does not carry over in the presence of abrupt structural changes.  

\subsection{Testing for long memory}\label{sec:app2} 
Another application is testing for long memory  in the functional linear model 
\begin{align}\label{model1}
    y_{i,n}=x_{i,n}^{\T} \beta(t_i)+(1-\mathcal B)^{-d}e_{i,n} \quad (i=1,\ldots, n),
\end{align}
where $\mathcal B$ is the lag operator, $d\in [0,1/2)$ is the long-memory parameter. 
When $d=0$, the error process of the model is locally stationary and short-range dependent. 
We are interested in the following hypothesis testing problem
\begin{align}\label{WC-hypo1}
    H_0: d=0~~~\text{versus}~~~ H_A: 0<d<1/2. 
\end{align}
The rejection of $H_0$ implies that the short-memory linear model is inadequate for the data and long-range dependence should be considered.
\citep{bai2021} proposes to test $H_0$ using the jackknife corrected nonparametric residuals. They obtain the local linear estimate of $\beta(\cdot)$, i.e., 
\begin{equation}\label{eq:loclin}
(\hat{{\beta}}_{b_{n}}(t), \hat{{\beta}}_{b_{n}}^{\prime}(t))=\underset{ \eta_{0}, \eta_{1} \in \mathbb{R}^{p}}{\arg \min}\sum_{i=1}^{n}\LJ{\{}y_{i, n}-{x}_{i,n}^{\T}  \eta_{0}- {x}_{i,n}^{\T}   \eta_{1}(t_{i}-t)\LJ{\}}^{2} K_{b_{n}}(t_{i}-t),
\end{equation}
where $K(\LJ{t})$ is a kernel function with finite support $(-1,1)$, $b_n$ is a bandwidth. Then, they consider the jackknife  estimator $\tilde{{\beta}}_{b_{n}}(t)=2 \hat{{\beta}}_{b_{n} / \surd{2}}(t)-\hat{{\beta}}_{b_{n}}(t)$ of which the asymptotic bias terms involving $ \beta^{\prime \prime }(\cdot)$ in the formula of $\hat \beta_{b_n}$ and $\hat \beta_{b_n/\surd 2}$ are canceled. Let $K^*(\cdot)$ denote the jackknife equivalent kernel $2 \surd{2} K(\surd{2}x) - K(x)$.  For the sake of simplicity, we write $n^{\prime}$ as $\lf nb_n \rf$ for short.
Define the nonparametric residuals and their partial sum as $
  \tilde e_{i,n} = y_{i,n} - x_{i,n}^{\T} \tilde{\beta}_{b_n}(t_i)$ and $\tilde S_{r,n} =\sum_{i=n^{\prime}+1}^r \tilde{e}_{i,n}$, $r=n^{\prime}+1,\LJ{\ldots}, n-n^{\prime}$, respectively. The KPSS, R/S, V/S and K/S-type test statistics of \citep{bai2021} are\\
1. KPSS-type statistic $
    K_n = \frac{1}{n(n - 2n^{\prime})}\sum_{r=n^{\prime}+1}^{n-n^{\prime}} \left(\tilde S_{r,n}\right)^2.$\\
2. R/S-type statistic $
 Q_n = \max_{n^{\prime} + 1 \leq k \leq n - n^{\prime} } 
 \tilde S_{k,n} - \min_{n^{\prime} + 1 \leq k \leq n - n^{\prime} } \tilde S_{k,n}.
$\\
3. V/S-type statistic $
  M_n = \frac{1}{n(n - 2n^{\prime})}\left\{\sum_{k=n^{\prime} + 1 }^{n-n^{\prime} } \tilde S_{k,n} ^2 - \frac{1}{n - 2n^{\prime}}\left(\sum_{k=n^{\prime} + 1}^{n-n^{\prime} } \tilde S_{k,n} \right)^2\right\}.
$\\
4. K/S-type statistic  $
    G_n =  \max_{n^{\prime} + 1 \leq k \leq n - n^{\prime} } 
 \left|\tilde S_{k,n} \right|.
$

\citep{bai2021} proposes to implement the above tests via the following bootstrap-assisted procedure. Define $\hat{{M}}(t) = \sum_{i=1}^n   {x}_{i,n} {x}_{i,n}^{\T} K_{\eta_n}(t_i - t^*)/(n \eta_n)$,
where $t^* = \max\{\eta_n,\min(t,1-\eta_n)\}$ for some bandwidth  $\eta_n \to 0$, $n\eta_n^2 \to \infty$. Let $\hat{ \Sigma}^*(\cdot)$ be any consistent long-run covariance matrix estimator  satisfying the regularity condition 5.1 in their paper,  and $\hat \sigma^{*2}_{H}(t)=(\hat { \Sigma}^*(t))_{1,1} $.  Generate $B$ \LJ{independent and identically distributed} copies of $N(0, I_p)$ vectors $V^{(r)}_i=(V^{(r)}_{i,1},...,V^{(r)}_{i,p})^\T\ (r = 1,\ldots, B)$, and for each $r$ calculate 
        \begin{align}
          \tilde G^{(r)}_k=-\sum_{j=1}^n\left\{\frac{1}{nb_n}\sum_{ i=n^{\prime}+1 }^k x_{i,n}^{\T} \hat{{M}}^{-1}(t_i)  K_{b_n}^*(t_i - t_j)\right\} \hat { \Sigma}^{*,1/2}(t_j){V}^{(r)}_j + \sum_{ i=n^{\prime}+1}^k \hat \sigma^*_{H}(t_i)V^{(r)}_{i,1}\nonumber
        \end{align}
         as well as the bootstrap statistics: $
         \widetilde{ \mathrm{K}}^{(r)}_{n}$, $ \widetilde{\mathrm{RS}}^{(r)}_{n} $, $ \widetilde{\mathrm{VS}}^{(r)}_{n}$ and  $ \widetilde{\mathrm{KS}}^{(r)}_{n}$ which  can be obtained by substituting $\tilde S_{k,n}$ in the corresponding statistics by $\tilde G_{k}^{(r)}$. 
     Let $ \widetilde{ \mathrm{K}}_{n,(1)} \leq \widetilde{ \mathrm{K}}_{n,(2)} \leq \cdots  \leq \widetilde{ \mathrm{K}}_{n,(B)}$ be the ordered statistics of $\widetilde{ \mathrm{K}}_{n}^{(r)}\ (r = 1,\ldots,B)$, and $B^* = \max\{r: \widetilde{K}_{n,(r)} \leq K_n\}$. 
         Then the $p$-value of the KPSS-type test is $1-B^*/B$, and the $p$-values of R/S, V/S, and K/S-type tests can be obtained similarly. Given a nominal level $\alpha$, if the $p$-value is smaller than $\alpha$, we  reject the null hypothesis of short memory.\par
         In this paper we propose to set $\hat \Sigma^{*}(t)=\hat \Sigma(t)$. With the new difference-based long-run covariance matrix estimator, our testing procedure will be more robust than that using the plug-in estimator for $\hat \Sigma^*(t)$, since for the latter procedure, the parameter $b_n$ in \eqref{eq:loclin} for the estimation of the regression coefficients will additionally affect the estimate of the long-run covariance matrix through the nonparametric residuals $\tilde e_{i,n}$ as well as the selection of $m$ and $\tau_n$, as indicated by the discussion of Theorem 5 in  \citep{zhou2010simultaneous}  that $m$ and $\tau_n$ should be chosen from an interval determined implicitly by $b_n$. Moreover, adopting $\hat \Sigma(t)$ leads to more accurate type-I error control due to the faster convergence rate, see the discussion below \Cref{thm:lrv_diff}. 
         
In the following, we show the validity of $\hat \Sigma(t)$ via studying the  asymptotic behavior of $\hat { \Sigma}(t)$ under the fixed and local alternatives for the testing problem \eqref{WC-hypo1}, which is essential for the consistency of the aforementioned bootstrap tests.

  \begin{assumption}\label{assumptionHp}
   Assumption \ref{E:HW} holds over $(-\infty,1]$, $H(t,\F_i)\ (i \in Z)$ is of $2\kappa$-order local stationarity on $(-\infty, 1]$, and its
      long-run variance function 
     \begin{align}
      \sigma^{2}_H(t)=\sum_{k=-\infty}^{\infty} \operatorname{cov}\left\{H\left(t, \mathcal{F}_{0}\right), H\left(t, \mathcal{F}_{k}\right)\right\} \quad (t \in (-\infty, 1]),
      \label{eq:sigmaH_long} 
     \end{align}
     satisfies that ${\inf}_{t \in (-\infty,1] } \sigma_H^2(t)> 0$, ${\sup}_{t \in (-\infty,1] }\sigma_H^2(t) < \infty$, and  $\sigma_H^2 (\cdot)$ is twice continuous differentiable on $[0,1]$. \label{A:H_smooth_long} \label{A:H_long-run variance}\label{A:H_long}\label{A:H_delta_long}
  \end{assumption}

\begin{theorem}\label{lm:Sigma_d}
Under Assumptions \ref{A:nonpar},  \ref{Ass-W}, \ref{Ass-E}, 
\ref{A:K} 
and \ref{assumptionHp}, assuming $m\tau_n^{3/2}/\log n \to \infty$, $\tau_n \to 0$, $n\tau_n^3 \to \infty$, $m / (n \tau_n^{3}) \to 0$, $\tau_n^{3-1/\kappa}\surd{m} \to 0$, $m = O(n^{1/3})$, $\kappa \geq \max\{4/(1/2-d),2/(3d),4\}$, it follows that under $H_A$
\begin{align}
\sup_{t \in I}\left| m^{-2d}\hat{{ \Sigma}}(t) -\kappa_2(d)\sigma_H^2(t)  \mu_W(t) \mu^{\T}_W(t)\right| = o_p(1),\nonumber
\end{align}
where $\kappa_2(d) = \Gamma^{-2}(d+1)\int_{0}^{\infty}\{t^d - (t-1)_+^d\}\{2t^d - (t-1)_+^d - (t+1)^d\} dt$.
\end{theorem}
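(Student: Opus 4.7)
The plan is to start from the decomposition $\hat\Sigma(t) = \acute\Sigma(t) - \breve\Sigma(t)$ and exploit that under $H_A$, $y_{i,n} = x_{i,n}^{\T}\beta(t_i) + u_{i,n}$ with $u_{i,n} = (1-\mathcal B)^{-d}e_{i,n}$, so $\Delta_j = A_{j,m} + U_{j,m}$, where $U_{j,m} := m^{-1}\sum_{i=j-m+1}^{j}\{x_{i,n}u_{i,n} - x_{i+m,n}u_{i+m,n}\}$. Expanding the outer product in the definition of $\acute\Sigma(t)$ gives
\begin{equation*}
\acute\Sigma(t) = \Sigma^A(t) + \Sigma^{AU}(t) + \Sigma^{UA}(t) + \Sigma^U(t),
\end{equation*}
where $\Sigma^U(t) = \sum_j (m/2)U_{j,m}U_{j,m}^{\T}\omega(t,j)$ is the long-memory part carrying the rate $m^{2d}$. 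I would show that $\Sigma^A(t)$, $\Sigma^{AU}(t)$, $\Sigma^{UA}(t)$ and $\breve\Sigma(t)$ are each $o_p(m^{2d})$ uniformly, so that $\Sigma^U(t)$ is the only surviving contribution.

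For the negligible terms, I would reuse ingredients from \cref{thm:lrv_diff}: $A_{j,m} = O_p(m^{-1/2})$ by concentration of the short-memory sample moment of $x_{i,n}x_{i,n}^{\T}$, hence $\Sigma^A(t) = O_p(1)$; combining this with the long-memory bound $U_{j,m} = O_p(m^{-1/2+d})$ from $\gamma_u(k)\asymp k^{2d-1}$ yields $\Sigma^{AU}(t) = O_p(m^d) = o_p(m^{2d})$. For $\breve\Sigma(t)$ I would first analyse $\breve\beta(t) = \Omega^{-1}(t)\varpi(t)$: the matrix $\Omega(t)$ depends only on the short-memory $x$-process and inherits its positive-definite limit from \cref{Ass-E}; writing $\varpi(t) = \Omega(t)\beta(t) + R_n(t)$, the residual $R_n(t)$ is a $\tau_n^{3/2}$-kernel smoothed weighted sum of $u_{i,n}$'s whose variance scales like $(n\tau_n^{3/2})^{2d-1}$, so $R_n(t) = o_p(1)$ uniformly because $n\tau_n^{3} \to \infty$. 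Thus $\breve\beta(t)$ remains consistent, $\hat A_{j,m}$ mirrors $A_{j,m}$ in magnitude, and $\breve\Sigma(t) = O_p(1) = o_p(m^{2d})$.

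The core of the argument is the limit of $m^{-2d}\Sigma^U(t)$. I would decompose $x_{i,n} = \mu_W(t_i) + \tilde x_{i,n}$ and show that the centred-$x$ contribution $\tilde x_{i,n}u_{i,n}$ is effectively short-range dependent, using that $\tilde x_{i,n}$ is zero-mean and that the cross-physical-dependence of $(\tilde x,u)$ is summable; its part of $\Sigma^U$ is $O_p(1)$. By local constancy of $\mu_W(\cdot)$ on a window of size $m$, this reduces the leading term to $\mu_W(t_j)D_{j,m}D_{j,m}^{\T}\mu_W^{\T}(t_j)$ with $D_{j,m} := m^{-1}\{T_{j-m,j}(u) - T_{j,j+m}(u)\}$ and $T_{a,b}(u) := \sum_{i=a+1}^b u_{i,n}$. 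Plugging the MA representation $u_{i,n} = \sum_{k\geq 0}\Gamma(d+k)\{\Gamma(d)k!\}^{-1}e_{i-k,n}$ and using $\sum_{k=0}^M\Gamma(d+k)/\{\Gamma(d)k!\}\sim M^d/\Gamma(d+1)$ writes the partial-sum difference as $\sum_l w_{j,l}e_{l,n}$; under the substitution $s=(j-l)/m$, the weights become $w_{j,l}\sim m^d\Gamma(d+1)^{-1}\{2s_+^d - (s-1)_+^d - (s+1)_+^d\}$. Evaluating $m^{-1-2d}E[(T_{j-m,j}(u) - T_{j,j+m}(u))^2]$ as a Riemann sum and matching the resulting integral (via an integration-by-parts identity) to the product form in $\kappa_2(d)$ gives $m^{-2d}E[\Sigma^U(t)] \to \kappa_2(d)\sigma_H^2(t)\mu_W(t)\mu_W^{\T}(t)$.

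Uniformity over $t\in I$ follows from a second-moment bound on the smoothed increments of $\Sigma^U$, combined with discretisation/chaining of $t$ and an $m$-dependent truncation of the MA representation to invoke moment inequalities whose orders are delivered by $\kappa \geq \max\{4/(1/2-d),2/(3d),4\}$. The main obstacle I anticipate is the exact identification of $\kappa_2(d)$: matching the discrete squared-kernel sum from the MA-weight computation with the integral $\Gamma^{-2}(d+1)\int_0^\infty [t^d - (t-1)_+^d][2t^d - (t-1)_+^d - (t+1)^d]\,dt$ requires a careful algebraic manipulation, essentially rewriting the squared fractional-Brownian-motion kernel of $B_H(1) - (B_H(2) - B_H(1))$ with $H=1/2+d$ in the product form of the statement, while threading the locally stationary variance $\sigma_H^2(t)$ through the approximation. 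A secondary difficulty is the uniform control of $\breve\beta(t)$ under long memory, where the standard $\sqrt m\,\tilde E_{j,m} = O_p(1)$ heuristic behind \eqref{Motivation} fails; here the $\tau_n^{3/2}$-smoothing, rather than the differencing itself, supplies the needed suppression.
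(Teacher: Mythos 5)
Your proposal follows essentially the same route as the paper's proof: isolate the error-only difference statistic $\Sigma^U(t)$ (the paper's $\tilde\Sigma_d(t)$) as the sole term of order $m^{2d}$, show that $\Sigma^A(t)$, the cross terms and the correction $\breve\Sigma(t)$ are $o_p(m^{2d})$ using the consistency of $\breve\beta(t)$ under long memory (the paper's Lemma C.3, where the residual indeed scales as $(n\tau_n^{3/2})^{d-1/2}$), then replace $x_{i,n}$ by $\mu_W(t_i)$ and evaluate $m^{-2d}E\Sigma^U(t)$ through the fractional MA weights to obtain $\kappa_2(d)\sigma_H^2(t)\mu_W(t)\mu_W^{\T}(t)$, with chaining for uniformity. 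The only substantive points you compress are the ones the paper spends Steps 2--4 on: the centred-covariate contribution carries an extra $(\log n)^d$ factor (handled via Theorem 4.2 of \citep{bai2021}) rather than being exactly short-range, and the concentration of $\Sigma^U(t)$ around its mean requires the explicit truncation at $L\asymp m^{1+1/(2d+1)}\tau_n^{1/2}$ together with the martingale approximation $e_i\mapsto\zeta_i^{\circ}$ before the block-martingale argument applies; both are consistent with what you sketch.
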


\cref{lm:Sigma_d} shows that the proposed difference-based estimator $\hat { \Sigma}(t)$ in \eqref{eq:diff_correct} inflates at the rate of $m^{2d}$ under long-range dependence with parameter $d > 0$, while its limit normalized by $m^{2d}$ depends on the $ {\mu}_W(t)$ along with the long-run variance  of  $e_{i,n}$ and the long-memory parameter $d$. The long-memory parameter $d$ also affects the theoretical properties of the long-run variance-covariance estimate through the moment condition of $\kappa$. The exact convergence rate is displayed in Step 6 of the proof of \cref{lm:Sigma_d} in supplement due to page limit. 
In the following \cref{lm:Sigma_dn}, we investigate the performance of the estimator $\hat { \Sigma}(t)$ under the local alternatives $d_n=c/ \log n$ for some constant $c>0$. For this purpose, we define the long-run {\it cross} covariance vector between the locally stationary processes $ U(t,\F_i)\ (i \in  Z)$ and $H(t,\F_j)\ (j \in  Z)$.
\begin{definition}\label{def:SUH}
 Define the long-run cross-covariance vector $s_{UH}(t) \in \mathbb R^p$ by
  \begin{align}
  s_{UH}(t) = \sum_{j=-\infty}^{\infty} \mathrm{Cov}\{U(t, \F_0), H(t, \F_j)\} \quad (t \in [0, 1]).\nonumber
  \end{align}
\end{definition}
For given constants $c>0$ and $\alpha_1 \in (0,1)$, define for $0 \leq t\leq 1$, the symmetric matrix $$\check { \Sigma}(t) =  { \Sigma}(t) + (e^{c\alpha_1}-1)^2\sigma_H^2(t)  \mu_W(t) {\mu}^{\T}_W(t) + (e^{c\alpha_1}-1)\{s_{UH}(t)  \mu_W^{\T}(t) + \mu_W(t)s^{\T}_{UH}(t)\}.$$ 
The following Assumption \ref{C:s} guarantees that $\check \Sigma(t)$ is smooth and non-degenerate. 
\begin{assumption}\label{C:s}
 $\check { \Sigma}(\cdot) \in C^2[0, 1]$, and $\lambda_{\min}\{\check { \Sigma}(t)\}$ is bounded above $0$ on $[0,1]$.
\end{assumption}
Since $ \Sigma(t)$ and $(e^{c\alpha}-1)^2\sigma_H^2(t)  \mu_W(t) {\mu}^{\T}_W(t)$ are positive definite, by Weyl's inequality Assumption \ref{C:s} is satisfied for sufficiently small positive $c$.
\begin{theorem}\label{lm:Sigma_dn}
  Let Assumptions \ref{A:nonpar}, \ref{Ass-W}, \ref{Ass-E}, 
  \ref{A:K},
  \ref{assumptionHp} and \ref{C:s} be satisfied. If $ m \tau_n^{3/2} \to \infty$, $\tau_n \to 0$, $m / (n \tau_n^{3}) \to 0$, $\tau_n^{3-1/\kappa} \surd{m}\to 0$, $m = \lf n^{\alpha_1}\rf$, $\alpha_1 \in (0,1/3)$, we have
  \begin{align}   
   \sup_{t \in I}\left|\hat{{ \Sigma}}(t) - \check { \Sigma}(t)\right|   = o_p(1).\nonumber
 \end{align}
\end{theorem}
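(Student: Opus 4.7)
The plan is to mirror the framework of the proof of \cref{lm:Sigma_d} but track the asymptotics more delicately, because when $d_n = c/\log n$ and $m = \lfloor n^{\alpha_1}\rfloor$, we have $m^{d_n} = e^{\alpha_1 c}(1+o(1))$, so several contributions that were of smaller order than $m^{2d}$ in the fixed-alternative case now survive at the same order as the short-memory part $\Sigma(t)$. I would start by writing $x_{i,n} y_{i,n} = x_{i,n} x_{i,n}^\T \beta(t_i) + U^*_{i,n}$ with $U^*_{i,n} = x_{i,n}(1-\mathcal B)^{-d_n} H(t_i,\F_i)$, plug into $\acute{\Sigma}(t)$, and split the resulting expansion into a pure noise piece, a pure signal piece, and a signal$\times$noise cross piece.

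The signal piece together with $\breve{\Sigma}(t)$ is handled as in the proofs of \cref{thm:lrv_diff,lm:Sigma_d}: I would verify that \cref{lm:bias} still applies to the pilot estimator $\breve{\beta}(t)$ under the $d_n$-perturbed errors (since $d_n\to 0$ and the relevant moment/dependence bounds from Assumptions \ref{Ass-W}--\ref{B:H_delta} degrade only by bounded factors), so that the signal$+$bias contribution to $\acute{\Sigma}(t)$ is cancelled by $\breve{\Sigma}(t)$ up to $o_p(1)$ uniformly on $I$. The signal$\times$noise cross piece can be shown to vanish uniformly because the estimation error $\breve{\beta}-\beta$ is $o_p(1)$ and the noise differences $\tilde E_{j,m}^*$ are $O_p(m^{-1/2})$ with a slightly inflated constant.

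The heart of the argument is the pure-noise piece $\sum_{j=m}^{n-m}(m/2)\tilde E_{j,m}^*\tilde E_{j,m}^{*\T}\omega(t,j)$, whose limit must be $\check{\Sigma}(t)$. I would expand $(1-\mathcal B)^{-d_n}H(t_i,\F_i) = H(t_i,\F_i) + \sum_{k\geq 1} a_k(d_n)H(t_{i-k},\F_{i-k})$ with $a_k(d_n) = \Gamma(k+d_n)/\{\Gamma(k+1)\Gamma(d_n)\}$, so that $U^*_{i,n} = U(t_i,\F_i) + x_{i,n}\sum_{k\geq 1}a_k(d_n)H(t_{i-k},\F_{i-k})$. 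Writing $\tilde E_{j,m}^*$ as a difference of two block averages of length $m$ and computing its second moment yields three asymptotically non-negligible contributions: (i) the short-memory part gives $\Sigma(t)/m$; (ii) the cross term between $U(t,\F_i)$ and the filter tail sums to $(S_m(d_n)-1)\{s_{UH}(t)\mu_W^\T(t) + \mu_W(t)s_{UH}^\T(t)\}/m$, where $S_m(d_n)=\sum_{k=0}^{m-1}a_k(d_n) \sim m^{d_n}/\Gamma(d_n+1)\to e^{\alpha_1 c}$; and (iii) the quadratic term in the tail sums to $(S_m(d_n)-1)^2\sigma_H^2(t)\mu_W(t)\mu_W^\T(t)/m$. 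Multiplying by $m/2$, combining across blocks via the kernel-smoother $\omega(t,j)$, and using the smoothness of each factor, the three pieces assemble precisely into $\Sigma(t) + (e^{c\alpha_1}-1)\{s_{UH}\mu_W^\T+\mu_W s_{UH}^\T\} + (e^{c\alpha_1}-1)^2 \sigma_H^2 \mu_W\mu_W^\T = \check{\Sigma}(t)$.

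Uniform convergence on $I$ follows by bounding the stochastic fluctuation of the smoothed noise piece by a maximal inequality in the spirit of the proof of \cref{thm:lrv_diff}, using the $4$-order local stationarity and $16\kappa$-order short-range dependence of $W$ and $H$ together with the bandwidth conditions $m\tau_n^{3/2}\to\infty$ and $\tau_n^{3-1/\kappa}\sqrt{m}\to 0$, and by invoking \cref{C:s} to ensure $\check\Sigma(t)$ is sufficiently smooth for the bias of the smoother to be $o(1)$. The main obstacle I anticipate is step (ii)--(iii) above: because the filter $(1-\mathcal B)^{-d_n}$ is not absolutely summable, the contribution from lags $k>m$ must be controlled uniformly in $d_n=c/\log n$, which requires sharp asymptotics of $a_k(d_n)$ and the partial sums $S_m(d_n)$ as $d_n\to 0$ and $m\to\infty$ jointly, and it is exactly here that the moment/dependence strength parametrised by $\kappa$ and the constraint $\alpha_1<1/3$ are used to push the remainders to $o_p(1)$.
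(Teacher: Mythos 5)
Your proposal is correct and follows essentially the same route as the paper's proof: debias the signal part via the local-alternative version of \cref{lm:correctiond}, kill the signal--noise cross term, truncate the fractional filter, replace $x_{i,n}$ by $\mu_W(t_i)$ in the lagged part, and identify the three limiting contributions $\Sigma(t)$, $(e^{c\alpha_1}-1)\{s_{UH}\mu_W^{\T}+\mu_W s_{UH}^{\T}\}$ and $(e^{c\alpha_1}-1)^2\sigma_H^2\mu_W\mu_W^{\T}$ through the asymptotics $\sum_{k=0}^{m-1}\psi_k(d_n)\to e^{c\alpha_1}$ (the paper's \cref{cor:kar_dn}), with chaining for uniformity. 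The only element you leave implicit is the martingale approximation (replacing $x_ie_i$ and $e_{i-j}$ by $z_j^{\circ}$ and $\zeta_j^{\circ}$) that the paper uses to make the concentration of the quadratic noise form around its mean rigorous, but this is a technical device within the same strategy rather than a different approach.
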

  \cref{lm:Sigma_d} and \cref{lm:Sigma_dn} lead to the desired limiting distribution of the bootstrap statistics under the fixed and local alternatives achieving satisfactory power performance in finite samples, see \citep{bai2021} for theoretical justification and numerical evidence. In finite samples, we demonstrate that long memory tests with difference-based long-run covariance matrix estimates can achieve sizes closer to the nominal level and are more powerful than their counterparts using plug-in estimates, see \cref{sec:sim}.
 
\section{Simulation} \label{sec:sim}
\subsection{Setting}
\LJ{ We elaborate the procedure of tuning parameter selection, which is available in  the  R package \texttt{mlrv}, and display the values of the parameters selected in Appendix \ref{sec:sel}. }
 Let $(\varepsilon_{l})_{l\in Z}, (\zeta_{l})_{l\in Z}, (\eta_{l})_{l\in Z}$ be  $ N(0,1)$, $\vartheta_i = (\eta_i+ \varepsilon_i)/2$,  and consider the filtrations
\begin{align}
    \F_j =( \LJ{\ldots}, \zeta_{j-1}, \zeta_{j}), \quad \mathcal{G}_j = (\LJ{\ldots}, \varepsilon_{j-1}, \varepsilon_{j}),\quad \mathcal{H}_j = (\LJ{\ldots}, \varepsilon_{j}, \eta_{j})\quad (j=-\infty, \LJ{\ldots}, n). \quad \nonumber
\end{align}

\subsection{Testing for structural changes}
We generate the locally stationary process 
 $x_{i,n,1}$ from $G_1(t, \mathcal H_i) = \sum_{j=0}^{\infty} (0.5 - 0.5t)^j \vartheta_{i-j}$, the locally stationary process $x_{i,n,2}$ from $G_2(t, \mathcal H_i) = \sum_{j=0}^{\infty} \{0.25 +0.5(t-0.5)^2\}^j \epsilon_{i-j}$, and the locally stationary process $u_{i,n}$ from $G(t, \F_i) = 0.65 \cos(2\pi t) G(t, \F_{i-1}) + \zeta_i$.
We consider the following heteroscedastic linear regression model:
\begin{align}
    y_{i,n} = 1 + m_{i,n} + x_{i,n,1}+ x_{i,n,2} + e_{i,n},\quad e_{i,n} = (1 + 0.1 x_{i,n,1})u_{i,n}, \quad (i=1,\ldots, n),
\end{align}
where the function $m_{i,n}\ (i=1,\ldots,n)$ includes the following scenarios \par
CP1: $m_{i,n} = 2\delta \sin(2 \pi t_i)x_{i,n,1} 1(0.5 \leq t_i \leq 1)$.\par
   CP2: $m_{i,n}=\delta \sin(2 \pi t_i) 1(0 \leq t_i \leq 0.4) + \delta x_{i,n,1}  1(0.7 \leq t_i \leq 1)/2$. \par
   CP4: $m_{i,n} = 1.5\delta \sin(2 \pi t_i) 1(0 \leq t_i \leq 0.2~\text{or}~0.4 \leq t_i \leq 0.6~\text{or}~0.8 \leq t_i \leq 1)$.\par
We conduct our simulation with a sample size $300$.
As shown in \cref{fig:cptest}, when there are $4$ change points, the block bootstrap test based on ordinary least squares residuals (\citep{wu2018gradient}) suffers from low and non-monotonic power that can not approach $1$. By contrast, the newly proposed difference-based long-run covariance matrix estimator enhances the simulated power significantly and addresses the non-monotonic power issue by taking the difference. To further illustrate the impact of long-run covariance matrix estimators in bootstrap tests of structural breaks, we investigate the estimation accuracy of both long-run covariance matrix estimators and find that our proposed estimator halves the empirical mean square error in the presence of change points,  see \cref{sec:lrv} of the supplement for extra simulation results \LJ{and sensitivity analysis}. \par
%
\begin{figure}
    \centering
    \includegraphics[width = 0.7\linewidth]{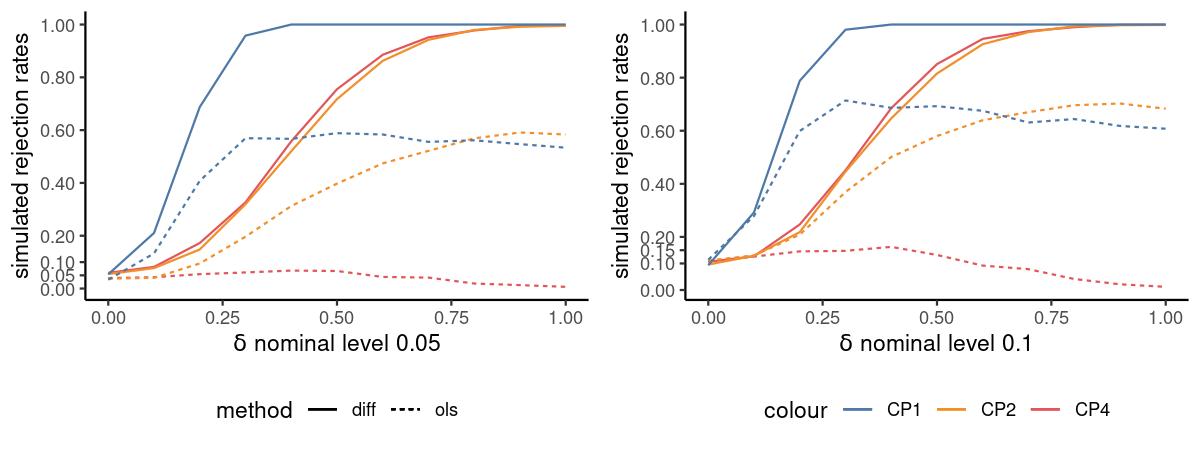}
 \caption{The empirical rejection rates of gradient-based structural change point tests as $\delta$ increases from $0$ to $1$ with sample size $n = 300$ and simulation times $2000$, under three scenarios CP1(blue), CP2(orange), CP4(red), using blocks of ordinary least squares residuals (small-dashes), and difference-based long-run covariance matrix estimator (solid).  Left panel: nominal size 0.05; Right panel: nominal size 0.1.}
    \label{fig:cptest}
\end{figure}
\subsection{Testing for long-range dependence}
Consider the following heteroscedastic functional linear model, 
\begin{align}
    y_{i, n}=\beta_{1}(t_i)+\beta_{2}(t_i) x_{i, n}+(1-\B)^{-d}e_{i,n}, \quad (i=1, \ldots, n),
\label{M1}
\end{align}
 where $\B$ is the lag operator, $\beta_{1}(t)=4 \sin (\pi t) $, $\beta_{2}(t)=4 \exp \LJ{\{}-2 \left(t-0.5\right)^{2}\LJ{\}}$, $x_{i, n}=W(t_i, \mathcal{F}_{i})\ (i=1, \ldots, n)$, and $ e_{j,n} = H(t_j,\mathcal{F}_j, \mathcal{G}_j)\ (j=1, \ldots, n)$, where $$
    H(t,\mathcal{F}_{i},\mathcal{G}_{i}) =  B\left(t,\mathcal{G}_{i}\right)\{1+W^2(t, \mathcal{F}_{i})\}^{1/2}\quad (i \in Z;\ t \in [0,1]),$$  $
     W\left(t, \mathcal{F}_{i}\right)= \{0.1 + 0.1\cos(2\pi t)\}W(t, \mathcal{F}_{i-1})+ 0.2\zeta_{i} + 0.7(t-0.5)^2,
$ and $B(t, \mathcal{G}_{i})=\{0.3 - 0.4(t-0.5)^2\} B(t, \mathcal{G}_{i-1})+ 0.8\varepsilon_{i}$. 
 As demonstrated by  \cref{fig:lrd}, the difference-based long-run covariance matrix estimator yields uniform improvement for the power of KPSS, V/S, R/S, and K/S tests against $0< d \leq 1/2$ in finite samples. Notably, equipped with the difference-based estimator the simulated power of K/S, R/S, and KPSS tests can reach $1$ with the sample size $1500$ as $d$ increases to $0.5$, while using the plug-in long-run covariance matrix estimator the power is much lower and stays far below $1$, except for the V/S test. \LJ{The corresponding sensitive analysis is in \cref{sec:lrv}.}
\begin{figure}
\centering
     \includegraphics[width = 0.7\linewidth]{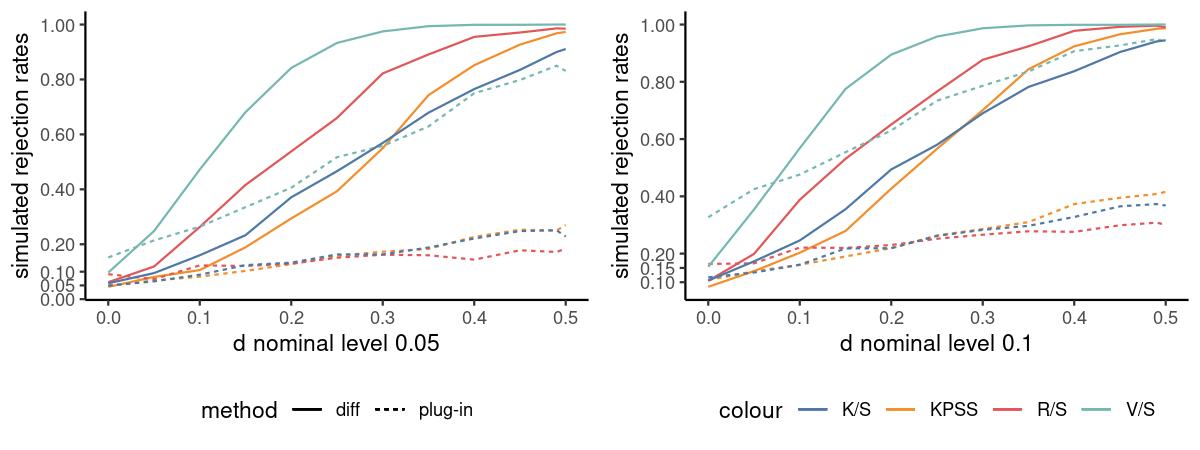}
\caption{Empirical rejection rates of KPSS(orange), K/S(blue), R/S(red), and  V/S(green) tests under different $d$'s with sample size $1500$, using the plug-in method (small-dashes) and the difference-based method (solid). Left panel: nominal size 0.05; Right panel: nominal size 0.1.}
    \label{fig:lrd}
\end{figure}

\section{Data analysis}\label{sec:data}
We apply our newly proposed long-run covariance matrix estimator \eqref{eq:diff_correct} to the analysis of Hong Kong hospital data, including structural change detection and tests for long memory. The data set consists of daily hospital admissions in Hong Kong as well as daily measurements of pollutants between January 1, 1994, and December 31, 1995. The sample size is $730$. Consider the functional linear model for this data set, i.e., 
\begin{equation}\label{eq:hk_model}
   y_{i,n}=\beta_{1}(t_i)+\sum_{p=2}^{4} \beta_{p}(t_i) x_{i, p, n}+\varepsilon_{i, n}, \quad (i=1, \ldots, n),
 \end{equation}
 where $(y_{i,n})$ is the series of daily total number of hospital admissions of circulation and respiration and $(x_{i, p, n})$, $p = 2, 3, 4$, are the series of daily levels of $\text{SO}_2$, $\text{NO}_2$ and dust, respectively, in micrograms per cubic meter.  

As illustrated in Section 5 of \citep{wu2018gradient}, it is of practical concern to test whether $\beta(\cdot)=(\beta_p(\cdot),1\leq p\leq 4)^\T$ is a constant vector. The test for structural changes equipped with the difference-based estimator yields $p$-value $0.006$, which rejects the null hypothesis of no structural change, while the test procedure proposed in  \citep{wu2018gradient} based on ordinary least squares residuals yields $p$-value greater than $0.1$. The different testing results can be attributed to the power loss of \citep{wu2018gradient} under structural change, as shown in \cref{fig:cptest}.

 We then consider the test for the long memory of $\varepsilon_{i,n}$. 
\citep{bai2021} performs long-memory tests on each covariate process and concludes that they are short-range dependent. Therefore, we could apply the tests introduced in \cref{sec:app2} to this data, and compare the $p$-values 
 of the tests equipped with the difference-based estimator \eqref{eq:diff_correct} and with the plug-in estimator of \citep{zhou2010simultaneous}, respectively.  The bandwidth $b_n$ in \cref{sec:app2} are selected by the GCV  method advocated by \citep{zhou2010simultaneous} and \citep{bai2021}.
 \begin{table}[ht]
\centering
\begin{tabular}{rrrrr|rrrrr}
  \hline
 method& KPSS & R/S & V/S & K/S &method & KPSS & R/S & V/S & K/S \\ 
  \hline
plug & 0.300 & 0.171 & 0.079 & 0.356 &
  diff & 0.810 & 0.888 & 0.835 & 0.907 \\ 
   \hline
\end{tabular}
\caption{$p$-values of tests for long memory}
\label{tb:pvalue}
\end{table}
 As in \cref{tb:pvalue}, 
the $p$-values of four types of tests for long memory based on plug-in estimates are much smaller than those based on  $\hat\Sigma(t)$. The smaller $p$-values might result from the inaccurate size performance associated with the plug-in estimator, which  is evidenced by extra simulation results in \cref{tb:size} of the supplemental material showing that the methods with plug-in estimates tend to over-reject and result in smaller  $p$-values.  A further sensitivity check shows that when using $1.2\times$GCV bandwidths, the R/S test with the plug-in estimator yields $p$-value $0.08$ rejecting the null hypothesis at the significance level of $10\%$, while the $p$-values of tests equipped with the difference-based estimator remain large leading to the same decision of accepting the null for all four tests.

\section{Conclusion}
\LJ{Additional potential applications can be found in \cref{sec:appl} of the supplement.} 
The optimal long-run variance for time series with stationary errors has been thoroughly discussed recently by \citep{chan2021optimal}. However, the approach therein is not applicable when non-stationarity is present. We leave the optimal estimation of the long-run covariance matrix under time series non-stationarity as rewarding future work. In addition, the generalization of our method beyond linear models will also be of great importance. 

\section*{Acknowledgement}
Weichi Wu is the corresponding author and is supported by National Natural Science Foundation of China 12271287. The authors thank the editors, associate editors and referees for constructive comments.
\section*{Supplementary material}
In the supplement, we present implementation details, extra simulation studies, the proofs of the findings in this paper as well as auxiliary technical results. 
\bigskip
\newpage
\begin{center}
   {\Large \bf Supplement to ``Difference-based covariance matrix estimate in time series
nonparametric regression with applications to specification
tests"}
\end{center}

\setcounter{section}{0}

We organize the supplementary material as follows: \cref{sec:bias} gives the intuition of the bias in the difference-based estimator. The implementation details including the procedure of selection of tuning parameters are in \cref{sec:sel}. \cref{sec:high} provides some discussion on the assumptions. \cref{sec:appl} offers other applications of the proposed difference-based estimator. We investigate the sensitivity of finite-sample performance of the tests with respect to the smoothing parameters, extra simulation with smaller sample size, and the performance of different estimates in the presence of change points in \cref{sec:lrv}. \cref{diff} presents proofs of the results in the main paper. \cref{sec:aux} provides auxiliary results which are used in the proofs.     \par 

\appendix
\section{Bias in the difference-based estimator}\label{sec:bias}
\WC{To see this, consider the simple case where $x_i$ and $e_i$ are independent and $i.i.d.$ random variables.  Then, for the differenced series we have 
\begin{align}
    &\mathrm{E} \{(e_i + \beta_0(i/n) + \beta(i/n) x_i - e_{i-1} - \beta_0((i-1)/n)- \beta((i-1)/n) x_{i-1})^2\}\\ 
    & = \mathrm{E}\{(e_i - e_{i-1})^2\} +\mathrm{E}\{(\beta_0(i/n) - \beta_0((i-1)/n))^2\}+  \mathrm{E}\{(\beta(i/n) x_i - \beta((i-1)/n) x_{i-1})^2\},
\end{align}
 where the third term is close to $2\beta(i/n)\mathrm{var}(x_i)$. Notice that this term is $O(n^{-2})$ if $x_i$ is deterministic and smooth instead.
In addition, we examine the bias via empirical studies. We consider the following dependent and independent settings and compare the differenced data simulated from models  with the stochastic trend and with only the deterministic smooth trend for a simple illustration:}\par
\WC{A. Independent scenario. }\par
\WC{A.1 Stochastic trend
\begin{align}
    y_i =  4(i/n-0.5)^2 + 0.5x_{1,i}+ 0.4 x_{2,i} + e_i\quad (i= 1,\ldots,n),
\end{align}
}
\WC{where $x_{1,i}$ are independent and identically distributed $N(2,1)$ random variables, $x_{2,i}$ and $e_i$ are independent and identically distributed standard Gaussian variables. Note that this setting is allowed by our Assumptions 1-5.}   \par
\WC{A.2 Deterministic smooth trend
\begin{align}
    y_i = 4(i/n-0.5)^2 + 1 +  e_i\quad (i= 1,\ldots,n),
\end{align}
where $e_i$ are independent and identically distributed standard Gaussian variables.}
\begin{figure}[!ht]
    \centering
        \includegraphics[width = 0.7\linewidth]{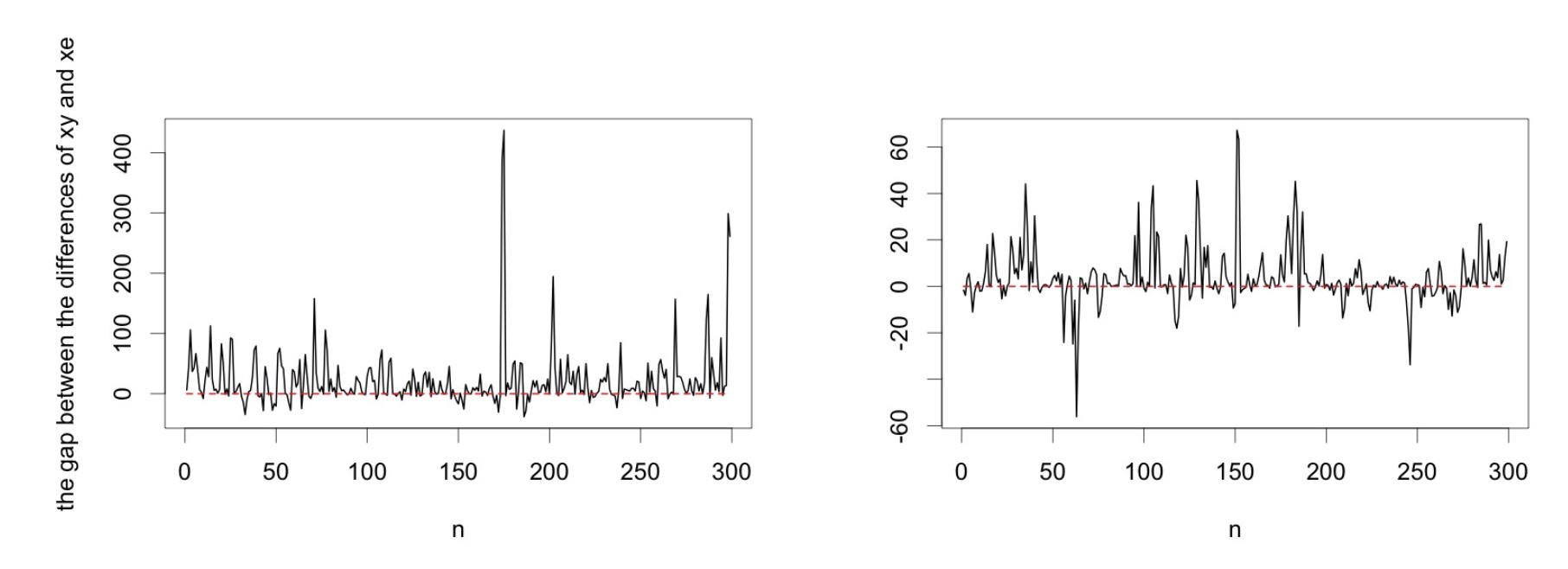}
    \caption{\WC{The sample path of $(x_{i+1}y_{i+1} - x_{i}y_{i})^2 - (x_{i+1}e_{i+1} - x_{i}e_{i})^2$ with stochastic trend (solid lines), and with only deterministic smooth trend (dotted lines). The left panel shows the gaps between A.1 and A.2 and the right panel shows that between B.1 and B.2, respectively. }}
    \label{fig:diffiid}
\end{figure}
\par
\WC{B. Dependent scenario} \par
\WC{B.1 Stochastic trend
\begin{align}
    y_i =  4(i/n-0.5)^2 + \sin (\pi i/n)x_{1,i}+ x_{2,i} + e_i \quad (i= 1,\ldots,n),\label{eq:dep_sto}
\end{align}
where $x_{1,i}$ is an Autoregressive Moving Average process with autoregressive coefficients 0.8897 and -0.4858, moving average coefficients -0.2279 and 0.2488, i.e., 
\begin{align}
    x_{1,i} - 0.8897 x_{1,i-1} + 0.4858  x_{1,i-2} = z_i -0.2279 z_{i-1} + 0.2488 z_{i-2},
\end{align}
where $(z_i)$ are independent and identically distributed standard Gaussian variables, 
$(x_{2,i})$ are independent and identically distributed standard Gaussian variables, and $(e_i)$ is an autoregressive process with coefficient 0.5, i.e., 
\begin{align}
    e_i = 0.5 e_{i-1} + \eta_i, 
\end{align}
with $\eta_i$ being independent and identically distributed standard Gaussian variables. Note that this setting is allowed by our Assumption 1-5. }  
\par
\WC{B.2. Deterministic smooth trend}\par
\WC{\begin{align}
    y_i = 4(i/n-0.5)^2 + e_i \quad (i= 1,\ldots,n),
\end{align}
where $e_i$ is as defined in \eqref{eq:dep_sto}.}

\WC{\cref{fig:diffiid} displays the sample path of the gap between the differences between $(x_{i+1}y_{i+1} - x_{i}y_{i})^2$ and $(x_{i+1}e_{i+1} - x_{i}e_{i})^2$ under the two scenarios described above. As shown by \cref{fig:diffiid}, the paths of the stochastic trend model are much more jagged than those of the deterministic smooth trend model even in the independent setting, since the deterministic smooth trend is almost eliminated by differencing.  
 Similar gaps can also be observed between the differences of $(x_{i+k}y_{i+k} - x_{i}y_{i})^2$ and $(x_{i+k}e_{i+k} - x_{i}e_{i})^2$ when $k>1$. These illustrate the influence of stochastic covariates on the difference-based statistics in approximating the difference of true errors weighted by covariates, 
which is in fact the source of the non-negligible bias.}

\section{Implementation details}\label{sec:sel}
\subsection{Selection of tuning parameters}
\WC{For refinement, we recommend the following extended minimum volatility method as proposed in Chapter 9 of \citep{politis1999subsampling} which works quite well in our empirical studies. The extended minimum volatility method has the advantage of robustness under complex dependence structures and does not depend on any parametric assumptions of the time series. To be concrete, we first propose a grid of possible block 
 sizes and bandwidths $\{m_1, m_2,\cdots, m_{M_1}\}$, $\{\tau_1, \tau_2,\cdots, \tau_{M_2}\}$ from $[\lfloor c_1 n^{4/15} \rfloor, \lfloor c_2 n^{4/15} \rfloor]$ and  $[c_3 n^{-2/15}, c_4 n^{-2/15}]$, respectively, where $c_1, \cdots, c_4$ are constants set as default in the package.
Define $s^2_{m_i,\tau_j}$ as the sample variance of the bootstrap statistics, say $\tilde T_{n, (1)},\ldots, \tilde T_{n,(100)}$ calculated from 100 bootstrap runs 
with parameters $m_i$ and $\tau_j$. The formula of the bootstrap statistics is
determined by the tests. For example, in the structural stability test, we use 
\begin{align}
 \tilde T_{n,(r)} =  \max_{m \leq i \leq n-m+1}|\Psi_{i,m}^{(r)} - \hat{ \Lambda}(i/n) \hat{ \Lambda}^{-1}(1) \Psi_{n-m+1,m}^{(r)}|,
\end{align}
where
$
    \Psi_{i,m}^{(r)} = n^{-1/2}\sum_{j=1}^i \hat { \Sigma}^{1/2}(t_j) R_j^{(r)}$, $\hat{ \Lambda}(i/n) = \sum_{j=1}^i x_{j,n} x_{j,n}^{\T}/n,
$
  $( R_j^{(r)})_{j=1}^n$ are the independent and identically distributed standard normal random variables independent of data and are independently generated  in the $r$th bootstrap iteration, and $ \hat \Sigma(t) $ is an estimator of $ \Sigma(t)$.  For testing long memory, we can use $
         \widetilde{ \mathrm{K}}^{(r)}_{n}$, $ \widetilde{\mathrm{RS}}^{(r)}_{n} $, $ \widetilde{\mathrm{VS}}^{(r)}_{n}$ and  $ \widetilde{\mathrm{KS}}^{(r)}_{n}$ in Section 5.2 for $\tilde T_{n,(r)}$   to choose smoothing parameters for different tests. 
Then we calculate  
   \begin{align}
    \mathrm{MV}(i,j):=  \mathrm{SE} \lt\{\cup_{r_1=-1}^{1}\{s^2_{m_{i}, \tau_{j+r_1}}\} \cup \cup_{r_2=-1}^{1}\{s^2_{m_{i+r_2}, \tau_{j}}\}\rt\} \label{eq:mvbootstrap1},
    \end{align}
    where SE stands for standard error.
    Finally, we select the pair $(m_{i^*},\tau_{j^*})$ where $(i^*,j^*)$ minimizes $\mathrm{MV}(i,j)$. The extended minimum volatility selection criterion \eqref{eq:mvbootstrap1} is similar in spirit to the classical one  
    except that \eqref{eq:mvbootstrap1} is built on the bootstrap test statistics instead of using solely long-run covariance estimators. 
    Therefore,  
    the extended minimum volatility selection criterion \eqref{eq:mvbootstrap1} is adaptive to various types of hypothesis testing problems. 
    \checkit{
    In our simulation studies, we recommend  $c_1 = 3/7$, $c_2 = 11/7$, 
$M_1$ is the number of the points and  $\tau_n$ from $(2/3) n^{-2/15}$ to $ n^{-2/15}$ with grid 0.05, i.e., $c_3 = 2/3$, $c_4 = 1$, and $M_2 = \lfloor  n^{-2/15}/0.15 \rfloor$ is the number of grid points of $\tau_n$. The choices of the constants $c_1,\ldots, c_4$ and $M_1$ and $M_2$ are constants  replying on the dependence and smoothness of the time series.  In practical, one can also choose the constants according to the prior knowledge.}
    The selection procedures of the tuning parameters
    considered in our paper are implemented in our package, while the package also supports user-specific choices of $m$ and $\tau_n$. \checkit{In practice, we recommend choosing $m$ from 
\begin{align}
    \max(\lfloor (3/7) n^{4/15} \rfloor -1, 1), \max(\lfloor (3/7) n^{4/15} \rfloor -1, 1)+1,
    \ldots, \\\max(\lfloor (11/7) n^{4/15} \rfloor+1, \max(\lfloor (3/7) n^{4/15} \rfloor -1, 1) + 2)),\label{eq:practicalm}
\end{align} to make sure there are enough neighborhood points for extended minimum volatility selection at the rate between  $\lfloor (3/7) n^{4/15} \rfloor$ and $\lfloor (11/7) n^{4/15} \rfloor$}.The terms $\max$, +1, -1, and +2 make the grid appropriate when the sample size is small.}\par
\checkit{The full algorithm including data-driven choices of $m$ and $\tau_n$ is as follows
\begin{enumerate}
    \item First propose a grid of possible block sizes and bandwidths $\{m_1, m_2,\cdots, m_{M_1}\}$, $\{\tau_1, \tau_2,\cdots, \tau_{M_2}\}$ for $m$ and $\tau$, say the grid for $m$ is $\max(\lfloor (3/7) n^{4/15} \rfloor -1, 1), \max(\lfloor (3/7) n^{4/15} \rfloor -1, 1)+1, 
    \ldots, \max(\lfloor (11/7) n^{4/15} \rfloor+1, \max(\lfloor (3/7) n^{4/15} \rfloor -1, 1) + 2))$ and  the grid for $\tau_n$ is $(2/3) n^{-2/15}, (2/3) n^{-2/15} + 0.05, \ldots, n^{-2/15} $. 
\item Compute $s^2_{m_i,\tau_j}$,  the sample variance of the bootstrap statistics, say $\tilde T_{n, (1)},\ldots, \tilde T_{n,(100)}$ calculated from 100 bootstrap runs 
with parameters $m_i$ and $\tau_j$. For example, in the structural stability test, we use 
\begin{align}
 \tilde T_{n,(r)} =  \max_{m \leq i \leq n-m+1}|\Psi_{i,m}^{(r)} - \hat{ \Lambda}(i/n) \hat{ \Lambda}^{-1}(1) \Psi_{n-m+1,m}^{(r)}|,
\end{align}
where
$
    \Psi_{i,m}^{(r)} = n^{-1/2}\sum_{j=1}^i \hat { \Sigma}^{1/2}(t_j) R_j^{(r)}$, $\hat{ \Lambda}(i/n) = \sum_{j=1}^i x_{j,n} x_{j,n}^{\T}/n,
$
  $( R_j^{(r)})_{j=1}^n$ are the independent and identically distributed standard normal random variables independent of data and are independently generated  in the $r$th bootstrap iteration, and $ \hat \Sigma(t) $ is an estimator of $ \Sigma(t)$. 
\item Calculate  
   \begin{align}
    \mathrm{MV}(i,j):=  \mathrm{SE} \lt\{\cup_{r_1=-1}^{1}\{s^2_{m_{i}, \tau_{j+r_1}}\} \cup \cup_{r_2=-1}^{1}\{s^2_{m_{i+r_2}, \tau_{j}}\}\rt\},
    \end{align}
    where SE stands for standard error.
\item Select the pair $(m_{i^*},\tau_{j^*})$ where $(i^*,j^*)$ minimizes $\mathrm{MV}(i,j)$
\item For $t \in [m/n, 1-m/n]$, compute the estimator using $ m_{i^*}$ for $m$, and $\tau_{j^*}$ for $\tau_n$, 
\begin{align}
         \hat {  \Sigma}(t) = \acute {  \Sigma}(t) - \breve {  \Sigma}(t), \quad \breve {  \Sigma}(t)= \sum_{j=m}^{n-m} \frac{m\hat{  A}_{j,m} \hat{  A}_{j,m}^{\T}}{2}\omega(t, j),
     \end{align}
where 
\begin{align}\hat{  A}_{j, m} = \frac{1}{m}\sum_{i= j-m+1}^{j} \{  x_{i,n}   x_{i,n}^{\T}\breve {  \beta}(t_i)-  x_{i+m, n}   x_{i+m, n}^{\T}\breve {  \beta}(t_{i+m})\},~ \breve {  \beta}(t) =   \Omega^{-1}(t)  \varpi (t),\\
\Omega(t) = \sum_{j=m}^{n-m} \acute {  \Delta}_{j} \omega(t, j)/2,\varpi (t) = \sum_{j=m}^{n-m} \breve {  \Delta}_{j} \tilde \omega(t, j)/2,\\
\tilde \omega(t, i)= K( (t_i-t)/\tau_n^{3/2} ) / [\sum_{i=1}^{n} K \{(t_i-t)/\tau_n^{3/2} \}], \omega(t, i)= K\{ (t_i-t)/\tau_n\} / [\sum_{i=1}^{n} K \{(t_i-t)/\tau_n\}].
\end{align} 
\item For $t \in [0,m/n)$, set $\hat \Sigma(t) = \hat \Sigma(m/n)$. For $t \in (1-m/n, 1]$, set $\hat \Sigma(t) = \hat \Sigma(1-m/n)$. 
\end{enumerate}}
\WC{Using the default choices of the tuning parameters, the summary of selected $m$'s and $\tau_n$'s in the time series regression setting with $d = 0$ for the four types of long memory tests and the regression model with no change points for the structural ability tests are displayed in Table \ref{tb:pars} and Figure \ref{fig:parmbox}, which partly demonstrates that our proposed estimator is not sensitive to the choices of the smoothing parameters.
From the results, we find that our tuning parameter selection approach recommends different but similar tuning parameters for those tests, especially for the long memory tests V/S, R/S, KPSS, and K/S tests which share the same statistical model and the same null hypothesis, implying the stability of our selection procedure.
Together with the simulation studies on the simulated rejection rates, the results indicate that our selection procedure works reasonably well.}\par
\begin{table}[ht]
\centering
\begin{tabular}{rrrrr}
  \hline
  &\multicolumn{2}{c}{$m$}&\multicolumn{2}{c}{$\tau_n$} \\
   \hline
 & Median & Max & Mean & Max \\ 
VS & 8 & 10 & 0.328 & 0.352 \\
  RS & 8 & 10 & 0.329 & 0.352 \\
  KPSS & 7 & 10 & 0.327 & 0.352 \\
  KS & 7 & 10 & 0.328 & 0.352 \\
  CP & 7 & 9 & 0.399 & 0.424 \\
   \hline
\end{tabular}
\caption{
\WC{Selected values of $m$ and $\tau_n$ in the long-memory tests (V/S, R/S, KPSS, K/S) and the test for structural stability (CP).}
}
\label{tb:pars}
\end{table}
\begin{figure}
\centering
    \includegraphics[width = 0.7\linewidth]{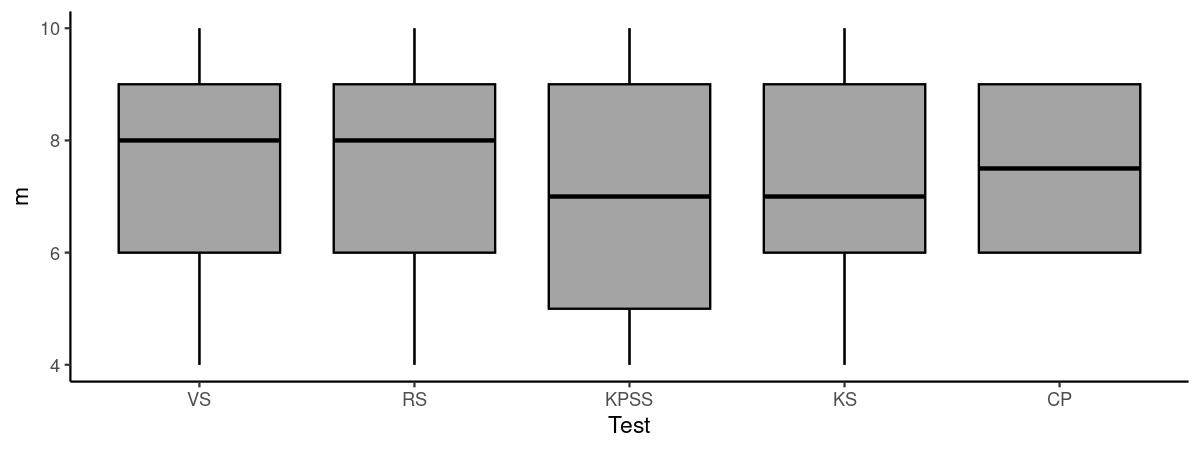}
    \caption{A boxplot of the selected smoothing parameter $m$ of V/S, R/S, KPSS, K/S tests and the test of structural stability.}
    \label{fig:parmbox}
\end{figure}
\subsection{The use of kernels}
\WC{We investigate a group of common kernels that satisfy Assumption 6, including quartic ($15/16(1-u^2)^2$), triweight ($35/32(1-u^2)^3$) with bounded support $|u| \leq 1$, and other kernels which are  differentiable but are not continuously differentiable at some points in $(-1,1)$ 
are  continuously differentiable almost everywhere  in $(-1,1)$ except a few points.}
\WC{ See \cref{fig:kernel} for our simulation results which check the performance of change points detection using our proposed long-run covariance matrix estimator with different kernels. The simulation result shows that the performance is reasonably well using different kernels satisfying Assumption 6 and some kernels  partially fullfill Assumption 6.} \par
 \WC{In the R package \texttt{mlrv}, we also offer the options of employing different kernels in the estimation of long-run covariance matrix, including triangular kernel, Epanechnikov kernel, quartic kernel, triweight kernel and tricube kernel. }
    \begin{figure}
        \centering
\includegraphics[width = 0.7\linewidth]{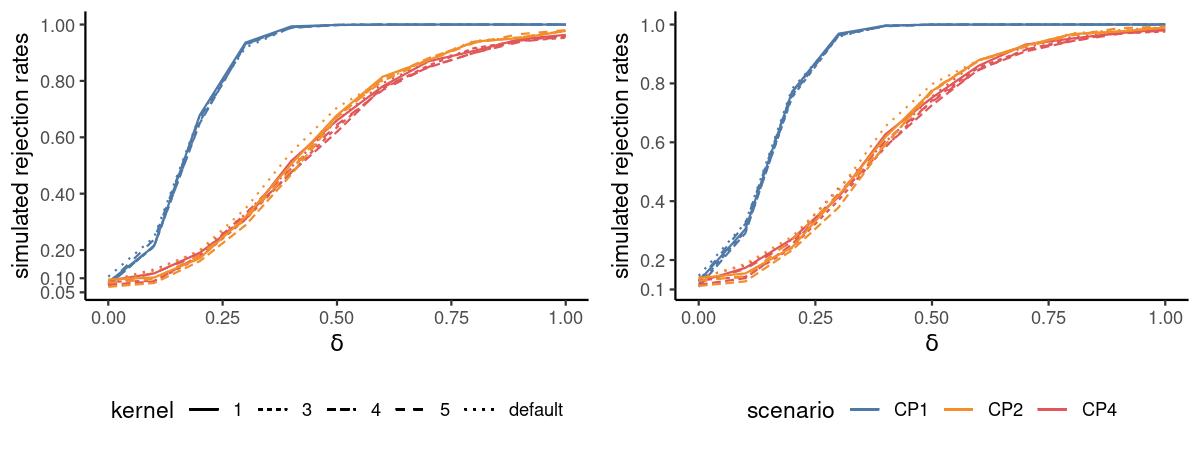}
        \caption{\WC{Empirical rejection rates of the structural stability test with triangular(solid), Epanechnikov (dotted), quartic(small dashes), triweight (dashes)  and tricube (long dashes) kernels, respectively.}}
        \label{fig:kernel}
    \end{figure}
\subsection{Positive definite estimators}
\WC{ In practice when the estimated covariance matrix, which proves to be consistent, is not positive definite, there are two possibilities: collinearity or the small sample size. The former implies that the estimand, namely the covariance matrix, is not positive definite, which is excluded by Assumption 2. In the latter case, the covariance matrix is positive definite, but the estimated covariance matrix can be non-positive definite due to the stochastic variation at a small sample size.} \par 
\WC{ Therefore, the remedy of the non-positive definite estimator when the sample size is small  
is of primary and practical concern. As discussed in the literature, one can use threshold method (\citep{politis2011higher}) or penalization (\citep{rothman2012positive}) for the remedy. 
 We only discuss the threshold approach employed by \citep{politis2011higher}, \citep{dette2020prediction} among others due to the page limit, and leave further exploration of the modification as a rewarding future work.}\par
\WC{We follow the threshold procedure as discussed in  \citep{politis2011higher}. 
In particular, let $s_n$ denote the stochastic upper bound under possible smooth and abrupt changes or long memory in Theorem 1, Theorem 2 and Theorem 3, respectively. 
 We choose the threshold level for the eigenvalues to be $1/n$,  so that $\rho(\hat \Sigma^{\mathrm{pd}}(t)-\hat \Sigma(t)) = o(s_n)$, where $\hat \Sigma^{\mathrm{pd}}(t)$ denotes the estimator after using threshold for eigenvalues, i.e.,  $\hat \Sigma^{\mathrm{pd}}(t)= U(t) \Lambda^{\mathrm{pd}}(t) U(t)^{\T}$, $\Lambda^{\mathrm{pd}}(t)$ is the diagonal matrix with diagonal elements $\lambda_i^{+}(t) = \max (1/n, \lambda_i(t) )~(i = 1, \ldots, p)$, where  $\lambda_i(t)$ is the ordered eigenvalue of $\hat \Sigma(t)$, and
 $U(t)$ is the matrix consisting of the corresponding eigenvectors of $\hat \Sigma(t)$.  }

\section{Discussion on assumptions}\label{sec:high}
\subsection{high-level assumptions}
\WC{It is possible to formulate high-level assumptions for the pilot estimator so that the debias effect can be achieved at least theoretically. In the {\it absence} of jump points, the high-level assumption of the  pilot estimator that will lead to the consistency of the difference-based long-run covariance matrix estimator is that for the event $G_n=\{
\text{$\breve \beta_0(t)$ is well-defined for $t \in I$}\}$ and $P(G_n) \to 1$,
\begin{align}
\sup_{t \in \I}\| \{\beta(t) - \breve \beta_0(t)\} 1(G_n)\|_{4\kappa} = O\{\tau_n^3+ (n\tau_n^{3/2})^{-1/2}\}.
\end{align} }
\par
 \WC{Let $\mathcal  C_m = \{j:j-m \leq i \leq j + m, \beta^{\prime \prime \prime} (t_i) ~\text{exists and continuous}\}$, $\mathcal C_n = \{j:j-n\tau_n^{3/2} \leq i \leq j + n\tau_n^{3/2}, \beta^{\prime \prime \prime} (t_i) ~\text{exists and continuous}\}$ and the number of $q_n$ jump points should satisfy the condition of Theorem 2.   In the presence of jump points, the high-level assumption will be
\begin{align}
   \sup_{j \in  \mathcal C_m}\|\{ \check \beta_0(t_{j+m}
      ) - \check { \beta}_0(t_j)\}1(G_n) \|_{4\kappa} = O\left\{\frac{mq_n}{n\tau_n^{3/2}} + (n\tau_n^{3/2})^{-1/2}\right\},
\end{align}
and 
\begin{align}
   \sup_{j \in  \mathcal C_n}\|\{ \check \beta_0(t_{j}
      ) - { \beta}_0(t_j)\}1(G_n) \|_{4\kappa} = O\left\{\tau_n^3 + (n\tau_n^{3/2})^{-1/2}\right\}.
\end{align}
For the case of long memory, from Lemma C.1 and Lemma C.2, we can obtain the high-level assumption is 
\begin{align}
    \sup_{t \in \I}\| \{\beta(t) - \breve \beta_0(t)\} 1(G_n)\|_{4\kappa} = O\{\tau_n^3+(n\tau_n^{3/2})^{d-1/2}\}.
\end{align}
Therefore, it is possible to find other estimators that satisfy the high-level assumptions to achieve a similar debias effect. However, in non-parametric estimation and inference, it is more convenient in practice to use tuning parameters as few as possible.  We recommend the statistic $\breve \beta(\cdot)$ in the paper, mainly because it can satisfy all the high-level assumptions without introducing extra smoothing parameters. }

\subsection{Discussion on Assumption 5}
\WC{ Recall our definition of short-range dependence in the paper:
\begin{definition}
The process $G(t, \F_i)$ is of $r$-order short-range dependence on interval $I$ if $\sup_{t \in I} \| G(t, \F_0) \|_{r} <\infty $, $\delta_{r}(H, k, I) = O(\chi^k)$, for some $\chi \in (0,1)$, $r \geq 1$, and $s$-order locally stationary on interval $I$, $s \geq 2$, if
 $G(t, \F_0) \in \mathrm{Lip}_{s}(I)$.
\end{definition}
}
\WC{Therefore, there are two restrictions of Assumption 5, which are the moment constraint and the constraints in order $r$ of the physical dependence measure 
\begin{align}
    \delta_r(L, k, I)=\sup _{t \in I}\left\|L\left(t, \mathcal{F}_k\right)-L\left(t, \mathcal{F}_{k,\{0\}}\right)\right\|_r.
\end{align}
}
\WC{Under the conditions of short-range dependence and no jump points, we allow $\kappa = 1$, i.e., $16$-order moment is required mainly because of the corresponding non-parametric smoothing of time series as well as the use of physical dependence.
In general, the requirement of $16$ is hard to be reduced because we estimate $\check \beta(\cdot)$ using second-order series and formulation of $\breve \Sigma(\cdot)$ involving the square of $x_ix_i^{\T}\breve \beta(t_i)$. Recall that   \begin{align}
   \breve{\Sigma}(t)=\sum_{j=m}^{n-m} \frac{m \hat{A}_{j, m} \hat{A}_{j, m}^{\mathrm{T}}}{2} \omega(t, j),\quad  \hat{A}_{j, m}=\frac{1}{m} \sum_{i=j-m+1}^j\left\{x_{i, n} x_{i, n}^{\mathrm{T}} \breve{\beta}\left(t_i\right)-x_{i+m, n} x_{i+m, n}^{\mathrm{T}} \breve{\beta}\left(t_{i+m}\right)\right\},
    \end{align}
    and 
    \begin{align}
        \breve {  \beta}(t) =   \Omega^{-1}(t)  \varpi (t)\quad (t \in [0,1]),
    \end{align}
    where $ \Omega(t)$ and $  \varpi (t)$ are the smoothed versions of $$\acute{  \Delta}_{j}/2 = \frac{1}{2m}\sum_{i=j-m+1}^{j} \tilde{  X}_{i,m}\tilde{  X}_{i,m}^{\T}~ \text{and}~\breve {  \Delta}_{j}/2 = \frac{1}{2m}\sum_{i=j-m+1}^{j} \tilde{  X}_{i,m}^{\T}\tilde{  Y}_{i,m},$$ where  $
    \tilde{  Y}_{i,m} =   x_{i,n} y_{i,n}-  x_{i+m, n} y_{i+m, n},  \tilde{  X}_{i,m} =   x_{i,n}   x_{i,n}^{\T}-  x_{i+m, n}   x_{i+m, n}^{\T}
    $. }
   \WC{ Therefore,  the moment condition of order $16$ is difficult to relax since we require $\breve \Sigma(\cdot)$ to have a finite second moment.}
    
  \WC{ Although Assumption 5 is seemingly strong for technical convenience, empirical studies show that our proposed estimator could still be consistent allowing  $H(\cdot,\mathcal F_i)$ to have a heavier tail than Assumption 5 under considered scenarios, though with possibly slower convergence rate.  
    In Figure \ref{fig:innovation}, we display the empirical rejection rates of different types of innovations, i.e., normal, $t(5)$ and $t(6)$ for $\zeta_i$ in  
    \begin{align}
u_{i,n}= G(t_i, \mathcal F_i) = 0.65 \cos(2\pi t_i) G(t_i,  \mathcal F_{i-1}) + \zeta_i,
    \end{align}
    in the model in Section 5.2 of the paper for the tests for structural stability:
\begin{align}
    y_{i,n} = 1 + m_{i,n} + x_{i,n,1}+ x_{i,n,2} + e_{i,n},\quad e_{i,n} = (1 + 0.1 x_{i,n,1})u_{i,n}. 
\end{align}
The results show that the tests for structural stability and long memory can still work reasonably well under some scenarios for example the CP1 case when we relax the moment conditions using $t(5)$ or $t(6)$.}

 \WC{   In Figure \ref{fig:innovationlrd}, 
 the empirical rejection rates of different types of innovations, i.e, normal, $t(4)$ and $t(6)$ for $\varepsilon_{i}$ in
 \begin{align}
     B(t, \mathcal{G}_{i})=\{0.3 - 0.4(t-0.5)^2\} B(t, \mathcal{G}_{i-1})+ 0.8\varepsilon_{i},
 \end{align}
 in the model in Section 5.3 of the paper for the tests for long memory, i.e.,
 \begin{align}
  y_{i, n}=\beta_{1}(t_i)+\beta_{2}(t_i) x_{i, n}+(1-\mathcal B)^{-d}e_{i,n},\quad
  e_{i,n} = H(t_i,\mathcal{F}_{i},\mathcal{G}_{i}) =  B\left(t_i,\mathcal{G}_{i}\right)\{1+W^2(t_i, \mathcal{F}_{i})\}^{1/2},
 \end{align}
 where $W\left(t, \mathcal{F}_{i}\right)= \{0.1 + 0.1\cos(2\pi t)\}W(t, \mathcal{F}_{i-1})+ 0.2\zeta_{i} + 0.7(t-0.5)^2$ for $ i =1\cdots, n$. 
 The bandwidth selection procedure is identical to the one used in the paper.  
 We can find the influences of moment conditions in both the structural stability test and the long memory detection, i.e., the empirical power of the former is reduced when the tails are heavier, while the empirical sizes of the latter increase with the heavier tails.}

\begin{figure}
    \centering
    \includegraphics[width = 0.8\linewidth]{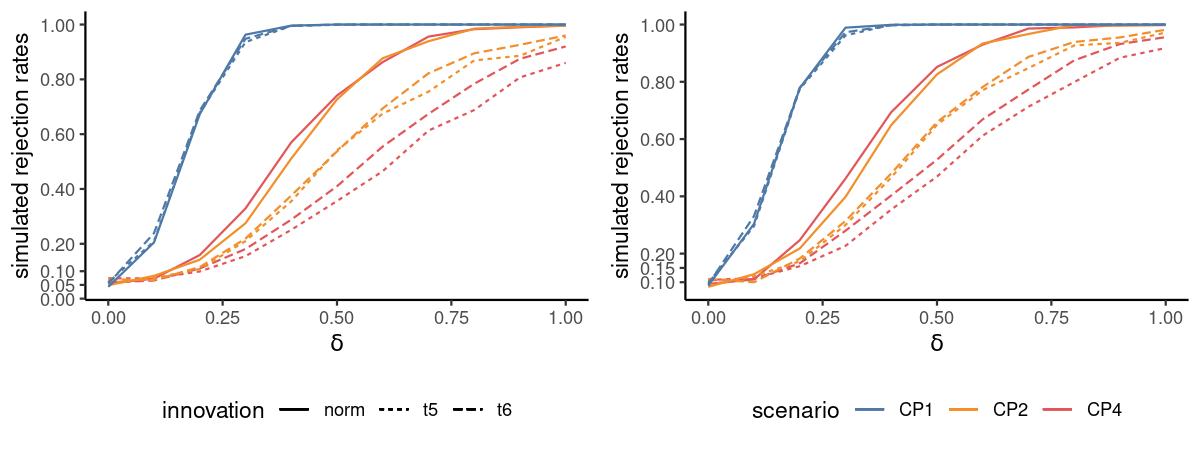}
    \caption{\WC{Empirical rejection rates of tests for structural stability with respect to different innovations: normal(solid), $t(5)$(small dashes) and $t(6)$(dashes) with sample size $n=300$.}}
    \label{fig:innovation}
\end{figure}

\begin{figure}
    \centering
    \includegraphics[width = 0.8\linewidth]{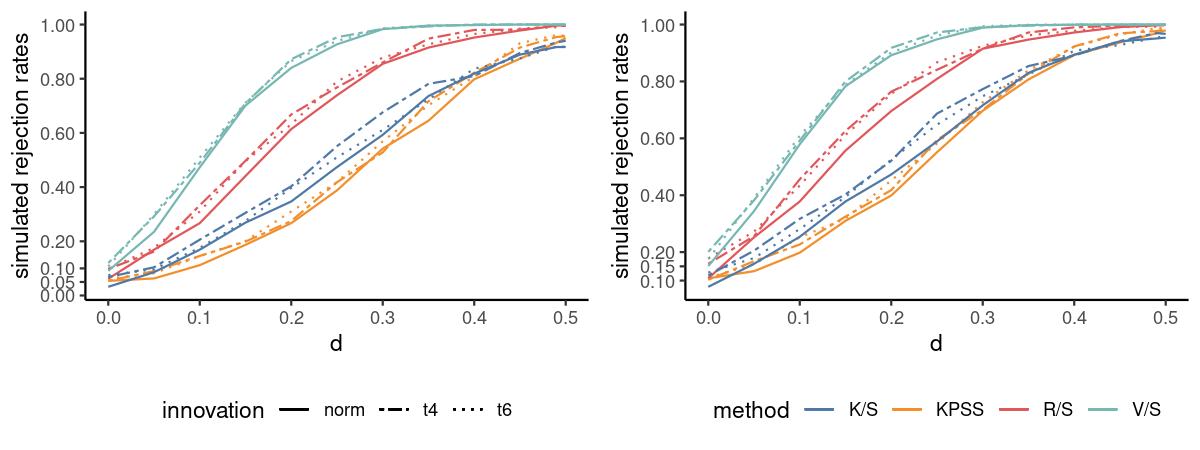}
    \caption{\WC{Empirical rejection rates of tests for long memory with respect to difference innovations: normal (solid), $t(4)$ (dashes) and $t(6)$ (dotted) with sample size $n=1500$.} }
    \label{fig:innovationlrd}
\end{figure}

\WC{Our current setting allows polynomial tailed distribution.  High-order physical dependence measure condition can be omitted when sub-exponential moment condition is assumed, where we only need order-1 physical dependence measure, see the following lemma.}

\WC{\begin{lemma}\label{lm:delta}
Suppose $\exists t_0 > 0$, $C= \sup_{t \in [0,1]} E \{\exp(t_0|H(t,   \F_0)|)\} <\infty$, $\delta_1(H,l,[0,1]) = O(\chi^l)$ for some $\chi \in (0,1)$. Then, we have the following holds for $q>2$: (i) $\sup_{t\in[0,1]}  E(|H(t, \F_0)|^q) \leq C t_0^{-q} q^q$. (ii) There exists  a positive constant $C^{\prime}$ such that  $\delta_q(H,l,[0,1]) = \sup_{t \in [0,1]} \|H(t, \F_l) - H(t, \F_{l,\{0\}})\|_{q} \leq C^{\prime} q \chi^{l/q}$.
\end{lemma}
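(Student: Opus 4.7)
For part (i) I would use the standard Chernoff-moment inequality: for any $y \geq 0$ and $q>0$, the map $y \mapsto y^q e^{-t_0 y}$ is maximized at $y = q/t_0$ with value $(q/(t_0 e))^q$, so $y^q \leq (q/(t_0 e))^q e^{t_0 y}$. Applying this pointwise to $|H(t, \F_0)|$, taking expectations, and invoking the MGF bound yields $E|H(t, \F_0)|^q \leq C (q/(t_0 e))^q \leq C t_0^{-q} q^q$ uniformly in $t$, which is exactly (i).

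For part (ii), write $D = D(t,l) = H(t, \F_l) - H(t, \F_{l, \{0\}})$. My plan has three stages: (a) transfer sub-exponential control from $H$ to $D$; (b) run a truncation-interpolation that trades the hypothesised $L^1$ decay against the sub-exponential tail; (c) absorb a residual polynomial-in-$l$ factor into the prefactor. For (a), since $|D| \leq |H(t, \F_l)| + |H(t, \F_{l, \{0\}})|$, Cauchy--Schwarz on the product of exponentials gives $E e^{(t_0/2)|D|} \leq C$ uniformly in $t$ and $l$. Markov then yields $\pp(|D|>M) \leq C e^{-t_0 M/2}$, and applying part (i) to $D$ gives $(E|D|^{2q})^{1/2} \leq (Kq)^q$ with $K$ depending only on $t_0$ and $C$.

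For (b), for any $M>0$ I would split
\[
E|D|^q = E(|D|^q 1\{|D|\leq M\}) + E(|D|^q 1\{|D|>M\}),
\]
bound the first piece by $M^{q-1} E|D| \leq c M^{q-1}\chi^l$ using $\delta_1(H,l) = O(\chi^l)$, and the second by Cauchy--Schwarz as $(E|D|^{2q})^{1/2}(\pp(|D|>M))^{1/2} \leq (Kq)^q \sqrt{C}\, e^{-t_0 M/4}$. Choosing $M \asymp l|\log\chi|/t_0$ so that $e^{-t_0 M/4} = \chi^l$ then gives $E|D|^q \lesssim \{(l|\log\chi|)^{q-1} + q^q\}\chi^l$, and taking $q$-th roots yields $\|D\|_q \lesssim \max\{(l|\log\chi|)^{1-1/q},\, q\}\chi^{l/q}$.

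The main obstacle is step (c): matching the prefactor exactly to $q$. Here I would exploit that $l \mapsto l^{1-1/q}\chi^{l/q}$ attains its maximum near $l = (q-1)/|\log\chi|$ with value of order $q/|\log\chi|$, so the polynomial-in-$l$ factor in the bulk term can be absorbed into the prefactor $q$, possibly at the cost of slightly enlarging $\chi$ to some $\chi_0 \in (\chi, 1)$ (still a valid geometric decay base). Packaging all constants that depend only on $C, t_0, \chi$ into a single $C^\prime$ then gives $\|D\|_q \leq C^\prime q \chi^{l/q}$ uniformly in $t$, which is (ii). Naive use of either bound alone fails: the $L^1$ hypothesis gives no decay in $l$ for $q>1$, while using only the sub-exponential control loses the $\chi^{l/q}$ factor entirely; it is the careful balancing of $M$ that extracts the correct rate.
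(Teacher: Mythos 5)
Your part (i) is the same argument as the paper's (the pointwise bound $y^q\le (q/t_0)^q e^{t_0 y}$ followed by taking expectations), and your part (ii) is correct but reaches the conclusion by a genuinely different route. Writing $D=H(t,\F_l)-H(t,\F_{l,\{0\}})$, the paper dispenses with (ii) in two lines by a single Cauchy--Schwarz interpolation on the exponent, $\bigl(E|D|^q\bigr)^2=\bigl(E\{|D|^{q-1/2}|D|^{1/2}\}\bigr)^2\le E\bigl(|D|^{2q-1}\bigr)\,E|D|$, then controls $E(|D|^{2q-1})\le C_1(2q-1)^{2q-1}t_0^{-(2q-1)}$ by part (i) and $E|D|\le c\chi^l$ by hypothesis, giving $\|D\|_q\le C'q\,\chi^{l/(2q)}$ directly with no dependence on $l$ in the prefactor. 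Your truncation at $M\asymp l|\log\chi|/t_0$ accomplishes the same trade-off between the sub-exponential tail and the $\mathcal L^1$ decay, but it leaves a residual factor $(l|\log\chi|)^{1-1/q}$ that you must then absorb by enlarging the base to some $\chi_0\in(\chi,1)$; that absorption step is valid (the supremum of $l^{1-1/q}(\chi/\chi_0)^{l/q}$ over $l$ is indeed $O(q)$), but it is an extra moving part the paper's interpolation avoids. Note that neither argument literally delivers the bound with the original $\chi$: the paper's proof yields the base $\chi^{1/2}$ and yours yields $\chi_0$, which is immaterial for the lemma's qualitative purpose (geometric decay of $\delta_q$ from $\delta_1$ plus sub-exponential moments). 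The one advantage of your route is robustness: the truncation argument would still produce a quantitative rate if the $\mathcal L^1$ dependence decayed only polynomially, whereas the paper's one-shot Cauchy--Schwarz is tailored to, and cleanest for, the geometric case.
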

\begin{proof}
For $p \geq P$, $P > 0$, elementary calculation gives that $x > p \log (x/p)$, $x > 0$. 
   For a sufficiently large constant $C$, we have 
\begin{align}
   \sup_{t\in[0,1]} E(|H(t, \F_0)|^q)  < \left(\frac{q}{t_0} \right)^q   \sup_{t\in[0,1]} E \{ \exp(t_0|H(t, \F_0)|)\} = t_0^{-q} q^q C. \label{eq:moment}
\end{align}
 By Hölder inequality,
\begin{align}
  \sup_{t \in [0,1]} \|H(t,F_l) - H(t,F_{l,\{0\}})\|_{q}^2 & =  [ E\{|H(t,F_l) - H(t,F_{l,\{0\}})|^{q-1/2}|H(t,F_l) - H(t,F_{l,\{0\}})|^{1/2}\}]^2\\
    & \leq  E\{|H(t,F_l) - H(t,F_{l,\{0\}})|^{2q-1} \}  E\{|H(t,F_l) - H(t,F_{l,\{0\}})|\} \\ 
    & \leq \|H(t,F_l) - H(t,F_{l,\{0\}})\|^{2q-1}_{2q-1}\delta_1(H,l,[0,1]) \\ &\leq C_1 (2q-1)^{2q-1} t_0^{-2q+1} \chi^l,\label{eq:delta}
\end{align}
where $C_1$ is a sufficiently large constant.
Therefore, we have $\delta_q(H,l,[0,1])\leq C^{\prime} \chi^{l/q}q$.  
\end{proof}
}
\section{Other potential applications}\label{sec:appl}
\WC{Under local stationarity, our proposed estimator can be used in many practical scenarios, such as constructing simultaneous confidence bands for time-varying regression coefficients, deriving preliminary estimation and visualization of the long memory parameter $d$ for locally stationary long memory process, as well as many other inference problems that involve the estimation of the long-run covariance matrix, such as testing for white noises, generalized likelihood ratio test and squared integrated tests for time-varying regression coefficient functions, see \citep{zhou2014glr}  for instance. In the following we list several detailed examples.}\par 
\WC{ \textbf{Visualization of long memory.}
The long-run covariance estimator can serve as a simple and heuristic tool for visualizing and assessing the presence of long memory, see Section 1.2 of \citep{BeranLongmemory}.
 By Theorem 3, under the fixed alternative, we have 
$$
\sup_{t \in  I}\left| m^{-2d}\hat{{ \Sigma}}(t) -\kappa_2(d)\sigma_H^2(t)  \mu_W(t) \mu^{\T}_W(t)\right| = o_p(1).\nonumber
$$
 Suppose we have a grid of $m$'s, i.e., $m_1, \ldots, m_M$, and the corresponding long-run covariance estimator $\hat \Sigma_1(\cdot), \ldots, \hat \Sigma_M(\cdot)$ calculated using $m_1,...,m_M$, respectively.}
\WC{Taking $x_i = \log m_i$, $y_i = \sum_{j=1}^n\log |\hat \Sigma_i(t_j)|/n,\ (i=1, \ldots, M)$, where $|\cdot|$ denotes the Frobenious norm, and we have
\begin{align}
y_i \approx (2d) x_i + \log \kappa_2(d)  + \sum_{j=1}^n \log \sigma_H^2(t_j)/n + \sum_{j=1}^n \log |\mu_W(t_j)\mu^{\T}_W(t_j)|/n\\
= (2d) x_i + f(d),\label{fittedreg}
\end{align}
where the quantity $f(d)$ is independent of $i$. Therefore, one can visualize $d$ by drawing a regression line $y_i\sim x_i$.  For illustration, we generate a data set from the functional linear model in Section 6.3 of the main paper with $d = 0.2$, i.e., }
\WC{\begin{align}
    y_{i,n} =\beta_1(t_i)+\beta_2(t_i)x_{i,n} +(1 - \mathcal B)^{-d}e_{i,n},
\end{align}
where $\mathcal B$ is the lag operator, $\beta_{1}(t) = 8 \sin (\pi t) $, $\beta_{2}(t)=4 \exp \{-2 \left(t-0.5\right)^{2}\}$, $x_{i, n}=W(t_i, \mathcal{F}_{i})\ (i=1, \ldots, n)$, and $ e_{j,n} = H(t_j,\mathcal{F}_j, \mathcal{G}_j)\ (j=1, \ldots, n)$ with} \WC{$$ H(t,\mathcal{F}_{i},\mathcal{G}_{i}) =  B\left(t,\mathcal{G}_{i}\right)\{1+W^2(t, \mathcal{F}_{i})\}^{1/2}\quad (i \in Z;\ t \in [0,1]),$$  where $
     W\left(t, \mathcal{F}_{i}\right)= 0.1\cos(2\pi t)W(t, \mathcal{F}_{i-1})+ 0.2\zeta_{i} + 0.4(t-0.5)^2,
$ and $B(t, \mathcal{G}_{i})=\{0.3 - 0.4(t-0.5)^2\} B(t, \mathcal{G}_{i-1})+ 0.6\varepsilon_{i}$. }

\WC{\cref{fig:graphical} displays an instance for the visualization. The $y$-axis is the average logarithm of Frobenius norm of the long-run covariance estimator, while the $x$-axis is the logarithm of the parameter $m$. The displayed fitted regression line of \eqref{fittedreg} is $y=-0.82+0.41x$, and the estimated $d$ is close to the half of the slope, i.e. $0.205$. Furthermore, 100 times of simulations yield the average estimated $d$ being $0.204(0.006)$.}
\begin{figure}[!ht]
    \centering
    \includegraphics[width = 0.5\linewidth]{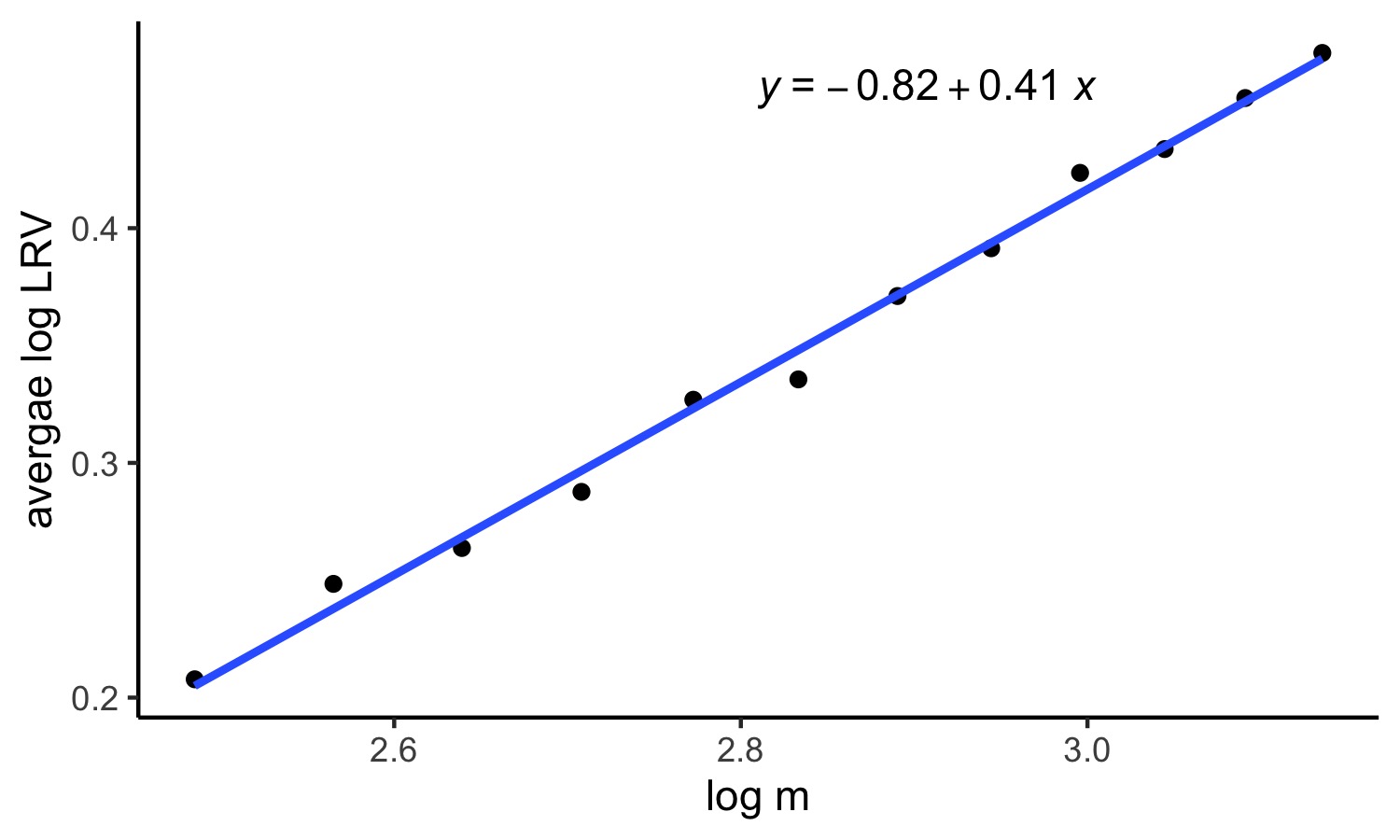}
    \caption{\WC{Regression using data from the long-run covariance matrix estimator. The estimated $d$ is half of the slope.} }
    \label{fig:graphical}
\end{figure} \par
 \WC{\textbf{Simultaneous confidence tubes.} We consider a similar functional linear model as in Section 6.3 of the main article with $d=0$, namely}
\WC{\begin{align}
    y_{i, n}=\beta_{1}(t_i)+\beta_{2}(t_i) x_{i, n}+e_{i,n} \quad (i=1, \ldots, n),
\label{M1}
\end{align}
where $\beta_{1}(t)=4 \sin (\pi t) $, $\beta_{2}(t)=4 \exp \{-2 \left(t-0.5\right)^{2}\} $, $x_{i, n}=W(t_i, \mathcal{F}_{i})$ with $
     W\left(t, \mathcal{F}_{i}\right)= \{0.25 + 0.25\cos(2\pi t)\}W(t, \mathcal{F}_{i-1})+ 0.2\zeta_{i}$, and $ e_{i,n} = H(t_i,\mathcal{F}_i, \mathcal{G}_i)$, where $t_i=i/n$ and $$
    H(t, \mathcal{F}_{i},\mathcal{G}_{i}) =  B\left(t,\mathcal{G}_{i}\right)\{1+W^2(t, \mathcal{F}_{i})\}^{1/2},$$ 
where $B(t, \mathcal{G}_{i})=\{0.2 - 0.4(t-0.5)^2\} B(t, \mathcal{G}_{i-1})+ 0.8\varepsilon_{i}$.  }
\WC{\citep{zhou2010simultaneous} considers the simultaneous confidence tubes for regression coefficient functions using plug-in estimators,  see estimator (17) of their paper, which is also available in our R package. We compare the performance of simultaneous confidence tubes jointly for $\beta_1(t)$ and $\beta_2(t)$, using our estimator and the plug-in estimator advocated by \citep{zhou2010simultaneous} in \cref{tb:1}, which shows the advantage of our estimator. Notice that $H(t, \mathcal F_i, \mathcal G_i)$ depends on the covariates in a nonlinear way, which has not been investigated empirically by \citep{zhou2010simultaneous}.  }
\begin{table}[ht]
\centering
\def~{\hphantom{0}}
\begin{tabular}{rrrrr}
\hline
  &\multicolumn{2}{c}{Ours}  & \multicolumn{2}{c}{Plug-in}\\
  \hline
 b/nominal &95\% &  90\%& 95\%  & 90\% \\ 
  \hline
0.2500 & 89.5 & 83.8 & 87.0 & 78.5 \\ 
  0.2750 & 89.9 & 84.8 & 86.8 & 77.2 \\ 
  0.2875 & 90.9 & 85.1 & 85.7 & 78.5 \\ 
  0.3000 & 92.1 & 86.3 & 87.1 & 77.8 \\ 
  0.3125 & 92.8 & 87.3 & 86.5 & 78.2 \\ 
  0.3250 & 92.2 & 86.5 & 87.4 & 79.8 \\ 
  0.3375 & 93.3 & 87.6 & 88.0 & 78.8 \\ 
  0.3500 & 91.2 & 85.3 & 86.6 & 78.1 \\ 
   \hline
\end{tabular}
\caption{Empirical coverage rates (in \%) via 1000 times of simulations for the simultaneous confidence tubes of $(\beta_1(t), \beta_2(t))$ with sample size $500$  using our estimator and the plug-in estimator advocated by \citep{zhou2010simultaneous}.}
\label{tb:1}
\end{table}
\section{Extra simulation}\label{sec:lrv}
\subsection{Sensitivity analysis}
\WC{For the sensitivity check, we examine the performance of our method with different choices of $m$'s and $\tau_n$'s under both null and various alternative hypotheses and compare it to that of baseline methods using nonparametric or ordinary least square residuals for estimating the long-run covariance matrix.}  
\begin{itemize}
    \item 
\WC{Figure \ref{fig:diff} demonstrates the power performance of the tests equipped with the debiased difference-based estimator compared with the baseline method based on ordinary least square residuals $\hat e_{i,n}=y_{i,n}-x_{i,n}^{\T}\hat \beta_n$ in the CP1 model in the main paper, where we  compare fixed $m$'s ($10$ and $20$ with $\tau_n = n^{-2/15}$) and fixed $\tau_n$'s ($0.2$, $0.3$, $0.4$ with $m = 10$) as well as $m$ selected via extended minimum volatility over different ranges. In the paper, we choose $m$ from $6$ ($\lfloor 10/7n^{4/15} \rfloor$) to $9$ ($\lfloor 15/7n^{4/15}\rfloor$).
In \cref{fig:diff}, we show the results when  choosing $m$ from $6$ ($\lfloor 10/7n^{4/15} \rfloor$) to $30$ ($\lfloor 50/7n^{4/15} \rfloor$) and from $10$  ($\lfloor 18/7n^{4/15} \rfloor$) to $50$ ($\lfloor 75/7n^{4/15} \rfloor$). Using our difference-based debiased long-run covariance matrix estimator, the test outperforms that based on ordinary least square residuals by a large margin under different choices of tuning parameters. }
\item  \WC{Similar improvement can also be found in \cref{fig:diff4} in the CP4 model.}
\item \WC{In \cref{fig:lrdvs}, the roles of tuning parameters in the detection of long memory are investigated. The bootstrap tests equipped with the debiased difference-based estimator under different smoothing parameters are compared with the baseline method based on plugging in nonparametric residuals in \citep{zhou2010simultaneous} as in the main paper. Long-memory tests with the proposed difference-based estimator  achieve much better trade-offs in type-I and type-II errors under various choices of tuning parameters.}
\end{itemize}


\begin{figure}
    \centering
\includegraphics[width=0.8\linewidth]{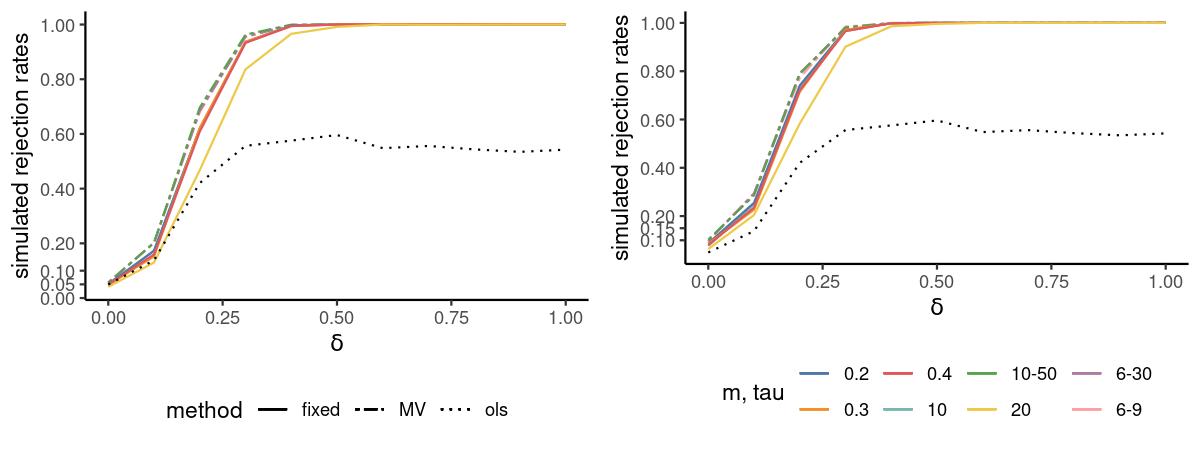}
    \caption{\WC{Comparing structural stability tests of using plug-in estimators of ordinary least square residuals (ols, dotted), the proposed difference-based estimator using extended minimum volatility selection procedure (MV, dashes), choosing $m$ from $6$ ($\lfloor 10/7n^{4/15} \rfloor$) to $9$ ($\lfloor 15/7n^{4/15}\rfloor$) versus choosing $m$ from $6$ ($\lfloor 10/7n^{4/15} \rfloor$) to $30$ ($\lfloor 50/7n^{4/15} \rfloor$), and $10$ ($\lfloor 18/7n^{4/15} \rfloor$) to $50$ ($\lfloor 75/7n^{4/15} \rfloor$), as well as the proposed difference-based estimator with several fixed $m$'s with $\tau_n = n^{-2/15}$ and $\tau_n$'s with $m = 10$ (fixed, solid) when there is one change point (CP1).}}
    \label{fig:diff}
\end{figure}

\begin{figure}
    \centering
\includegraphics[width=0.8\linewidth]{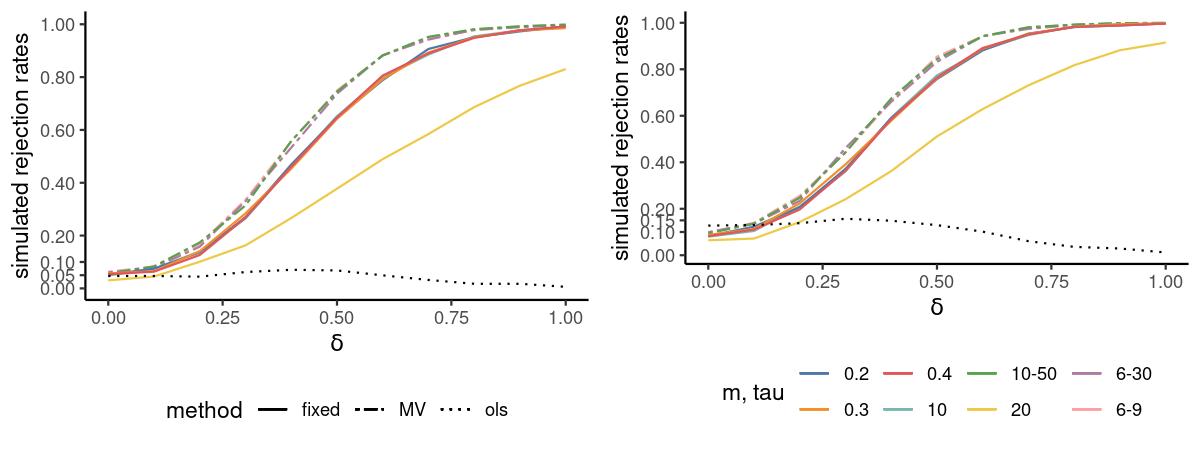}
    \caption{\WC{Comparing structural stability tests of using plug-in estimators of ordinary least square residuals (ols, dotted), the proposed difference-based estimator using extended minimum volatility selection procedure (MV, dashes), choosing $m$ from $6$ ($\lfloor 10/7n^{4/15} \rfloor$) to $9$ ($\lfloor 15/7n^{4/15}\rfloor$) versus choosing $m$ from $6$ ($\lfloor 10/7n^{4/15} \rfloor$) to $30$ ($\lfloor 50/7n^{4/15} \rfloor$), and $10$ ($\lfloor 18/7n^{4/15} \rfloor$) to $50$ ($\lfloor 75/7n^{4/15} \rfloor$) as well as the proposed difference-based estimator with several fixed $m$'s with $\tau_n = n^{-2/15}$ and $\tau_n$'s with $m = 10$ (fixed, solid) when there are 4 change points (CP4).}}
    \label{fig:diff4}
\end{figure}

\begin{figure}
    \centering
\includegraphics[width=0.8\linewidth]{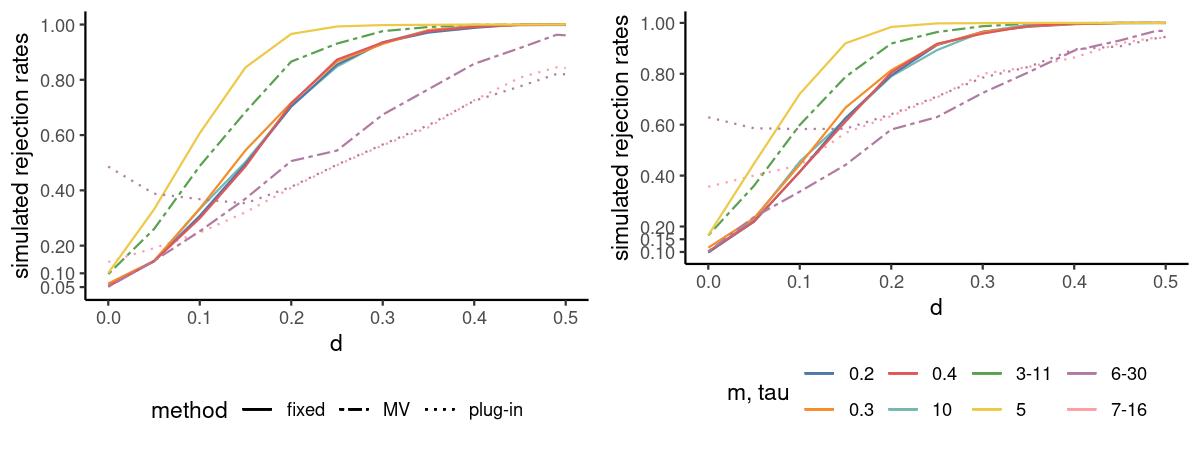}
    \caption{\WC{Comparing V/S-type tests of  using plug-in estimators of nonparametric residuals (plug-in, dashes), the proposed difference-based estimator choosing $m$ from $3$ to $11$  versus from $6$ to $30$, and from $7$ to $16$ (MV, dotted) as well as using several fixed $m$'s with $\tau_n = n^{-2/15}$ and $\tau_n$'s with $m = 10$ (fixed, solid).}}
    \label{fig:lrdvs}
\end{figure}

\subsection{Simulated rejected rates with sample size 750}
The following \cref{tb:size} reports the simulated sizes of KPSS, R/S, V/S and K/S-type tests when $n = 750$. 
\begin{table}[ht]

\centering
\begin{tabular}{rrrrrrrrr}
 \hline
 & \multicolumn{2}{c}{KPSS} & \multicolumn{2}{c}{R/S}   & \multicolumn{2}{c}{V/S}  & \multicolumn{2}{c}{K/S}    \\ 
  \hline
 & $5\%$ & $10\%$ & $5\%$ & $10\%$ & $5\%$ & $10\%$ & $5\%$ & $10\%$ \\ 
  \hline
 plug-in & 8.40 & 15.00 & 21.00 & 32.80 & 52.00 & 67.10 & 13.30 & 22.30 \\
diff & 4.00 & 7.50 & 6.10 & 11.50 & 10.00 & 15.50 & 3.30 & 7.30 \\
   \hline
\end{tabular}
\caption{
Simulated rejection rates of KPSS, R/S, V/S and K/S-type tests for long memory when $d=0$, $n = 750$.}
\label{tb:size}
\end{table}
\subsection{Long-run covariance estimates with change points}
To further illustrate the role of long-run variance estimators in bootstrap tests, we conduct $1000$ times simulation, where we find the MSE of long-run variance estimators using ordinary least square residuals is $974.12$, more than twice the magnitude of the MSE of  that of difference-based method $275.94$ when there are four change points in the time series structure (scenario CP4 with $
\delta = 1$). Moreover, two-sample t test also shows that under the alternative with $\delta = 1$, the difference-based method yields long-run variance estimate with much smaller MSE than the  ordinary least square method with $p$-value smaller than $0.01\%$.
\section{Proof}\label{sec:proof}\label{diff}
\subsection{Notation}
We first introduce some notation that will be frequently used in the mathematical argument of this section. In the following proofs, we will omit the index $n$ in $e_{i,n}, x_{i,n}, y_{i,n}$ for simplicity. Define filtration $\FF_i=(\varepsilon_{-\infty},...,\varepsilon_i)$ for independent and identically distributed random variables $(\varepsilon_{i})_{i\in   Z}$. For a random vector $(v_i)_{i=1}^n \in \F_s$, let $v_{i,\{s\}}$ denote the series replacing the $\varepsilon_{s}$ with its independent and identically distributed copy.  For a random matrix $(A_i)_{i=1}^n \in \F_s$, 
define $A_{j,\{s\}}$ as the random matrix replacing  $\varepsilon_{s}$ in $A_{j}$ with its independent and identically distributed copy. Recall that $ e_{i,n}^{(d)} = \sum_{j=0}^{\infty}\psi_j(d) e_{i-j, n}$, $ e_{i,n}^{(d_n)} = \sum_{j=0}^{\infty}\psi_j(d_n) e_{i-j, n}$. For the sake of simplicity, we use $\psi_j$ to represent $\psi_j(d)$ when we discuss the fixed alternatives and $\psi_j(d_n)$ for the theory of the local alternatives. Recall $t_i = i/n$, and $K^*(x)$ denotes the jackknife equivalent kernel $2 \surd{2} K(\surd{2}x) - K(x)$. Let $0 \times \infty = 0$, $a_n \sim b_n$ denote $\lim_{n \to \infty} a_n/b_n = 1$ for real sequences $a_n$ and $b_n$.  Let $I = [\gamma_n, 1-\gamma_n] \subset (0,1)$, $\gamma_{n}=\tau_{n}+(m+1) / n.$ Recall  $$A_{j,m} = \frac{1}{m} \sum_{i=j-m+1}^j\{x_i x_i^{\T} { \beta}(t_i) - x_{i+m} x_{i+m}^{\T} { \beta}(t_{i+m})\},\quad { \Sigma}^A (t) = \sum_{j=m}^{n-m} \frac{m A_{j,m} A_{j,m}^{\T}}{2}\omega(t, j),$$
$$\hat{A}_{j,m} = \frac{1}{m} \sum_{i=j-m+1}^j\{x_i x_i^{\T} \breve{ \beta}(t_i) - x_{i+m} x_{i+m}^{\T} \breve{ \beta}(t_{i+m})\},\quad \breve{ \Sigma} (t) = \sum_{j=m}^{n-m} \frac{m \hat{A}_{j,m} \hat{A}_{j,m}^{\T}}{2}\omega(t, j),$$
where $ \omega(t, i)= K_{\tau_n}\left(t_i-t\right) / \sum_{i=1}^{n} K_{\tau_{n}} \left(t_i-t\right)$.
Let 
\begin{align}
{\grave A}_{j,m} = \frac{1}{m}\sum_{i=j-m+1}^{j} (x_i x_i^{\T}-x_{i+m} x_{i+m}^{\T}) \{x_i x_i^{\T} { \beta}(t_i) -x_{i+m} x_{i+m}^{\T} { \beta}(t_{i+m}) \},\label{eq:graveA}
\end{align}
and
\begin{align}
  \grave { \Delta}_{j} = \frac{1}{m}\sum_{i=j-m+1}^{j} (x_i x_i^{\T}-x_{i+m} x_{i+m}^{\T}) (x_i e_i-x_{i+m} e_{i+m}).\label{eq:graveD}
\end{align}
 
 We define below the counterparts of $Q_{k,m}$, ${ \Delta}_j$ and $\hat{{ \Sigma}}(\cdot)$ in \cref{sec:diff}. Define for $m \geq 2$, $t \in [m/n, 1 - m/n]$,
    \begin{align}
      \tilde{Q}_{k,m} = \sum_{i=k}^{k+m-1} x_ie_i,\quad
      \tilde{ \Delta}_{j}=\frac{\tilde{Q}_{j-m+1, m}- \tilde{Q}_{j+1, m}}{m},\quad \tilde{{ \Sigma}}(t)=\sum_{j=m}^{n-m} \frac{m \tilde{ \Delta}_{j}\tilde{ \Delta}_{j}^{\T}}{2}\omega(t, j).\label{eq:tildeSigma}
    \end{align}
\subsection{Proof of \texorpdfstring{ \cref{thm:lrv_diff}}{Proof of Theorem 1} }\label{proof:thm7.1}

The following lemma presents the consistency of bias correction for the difference-based estimator with time series covariates, which is crucial to establish the consistency result of \cref{thm:lrv_diff}.

\begin{lemma}
Under the condition of \cref{thm:lrv_diff}, we have 
\begin{align}
\sup_{t \in \I}\lt| \breve { \Sigma}(t) - { \Sigma}^A (t)\rt|= \Op\left( \surd{m}\tau_n^{3-1/\kappa} + \surd{\frac{m}{n\tau_n^{3/2+2\kappa}}} \right).
\end{align}
\label{lm:bias}\label{lm:correction}
\end{lemma}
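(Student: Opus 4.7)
The plan is to first reduce the problem to a perturbation analysis via the algebraic identity
\begin{equation*}
\hat A_{j,m}\hat A_{j,m}^{\T} - A_{j,m}A_{j,m}^{\T} = R_j A_{j,m}^{\T} + A_{j,m} R_j^{\T} + R_j R_j^{\T},\qquad R_j := \hat A_{j,m}-A_{j,m},
\end{equation*}
so that $\breve\Sigma(t)-\Sigma^A(t)$ is a kernel-weighted sum of these three pieces. For the cross pieces, Cauchy--Schwarz gives
\begin{equation*}
\Big|\sum_j \omega(t,j)\, m R_j A_{j,m}^{\T}\Big|^2 \leq \Big(\sum_j \omega(t,j)\, m|R_j|^2\Big)\Big(\sum_j \omega(t,j)\, m|A_{j,m}|^2\Big),
\end{equation*}
where the second factor is comparable to $|\Sigma^A(t)|$ and hence $\Op(1)$ uniformly in $t\in\I$ from the moment computations underlying \cref{thm:lrv_diff}. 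The purely quadratic piece $R_jR_j^{\T}$ is of strictly smaller order than the cross pieces once the bound on $|R_j|$ below is available, so it can be absorbed.

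The second step is to express $R_j$ through $\delta(\cdot):=\breve\beta(\cdot)-\beta(\cdot)$. Writing
\begin{equation*}
R_j = \frac{1}{m}\sum_{i=j-m+1}^{j}\tilde X_{i,m}\,\delta(t_i) + \frac{1}{m}\sum_{i=j-m+1}^{j} x_{i+m}x_{i+m}^{\T}\{\delta(t_i)-\delta(t_{i+m})\},
\end{equation*}
I center $\delta(t_i)$ at $\delta(t_j)$; the dominant term becomes $\delta(t_j)\cdot m^{-1}\sum \tilde X_{i,m}$, and the remainders are controlled using the Lipschitz regularity of the kernel-smoothed $\delta$. The block average $m^{-1}\sum\tilde X_{i,m}$ has mean $O(m/n)$ by smoothness of $M(\cdot)$ in Assumption \ref{A:Mt}, and standard deviation $O(m^{-1/2})$ by the short-range physical dependence in Assumption \ref{E:WW}. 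This gives a high-probability bound $|R_j|\leq C\,\sup_{t\in\I}|\delta(t)|\cdot(m^{-1/2}+m/n)$, uniformly in $j$.

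The decisive step is the uniform rate for $\sup_{t\in\I}|\delta(t)|$. Using $\delta(t)=\Omega^{-1}(t)\{\varpi(t)-\Omega(t)\beta(t)\}$, I first obtain uniform invertibility of $\Omega(t)$ from its convergence to a deterministic limit whose smallest eigenvalue is bounded below by Assumption \ref{Ass-E}. Plugging in $\tilde Y_{i,m} = \tilde X_{i,m}\beta(t_i) + x_{i+m}x_{i+m}^{\T}\{\beta(t_i)-\beta(t_{i+m})\}+(x_ie_i - x_{i+m}e_{i+m})$ decomposes $\varpi(t)-\Omega(t)\beta(t)$ into (a) a Taylor bias of order $\tau_n^3$ driven by $\beta\in C^3$ (Assumption \ref{A:beta}) and the smoothing bandwidth $\tau_n^{3/2}$ in $\tilde\omega$; (b) an $O(m/n)$ within-block drift contribution; and (c) a mean-zero stochastic term, for which a Rosenthal/chaining-type maximal inequality applied to the kernel-averaged locally stationary array yields variance of order $(n\tau_n^{3/2})^{-1}$. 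Uniformization over $t\in\I$ combined with truncation of the polynomial tails of $W(t,\F_0)$ and $H(t,\F_0)$ at a $\tau_n^{-1/\kappa}$ level -- feasible by the $16\kappa$-th moment condition of Assumption \ref{B:H_delta} -- produces the multiplicative correction $\tau_n^{-1/\kappa}$ in the final bound.

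Combining, the Cauchy--Schwarz output is bounded by a constant times $\sqrt m\cdot\sup_{t\in\I}|\delta(t)|$ up to the moment-tail factors, matching the two terms in the claimed rate. The principal obstacle is the third step: obtaining the uniform rate for $\delta$ under the joint interplay of the inner block length $m$ and the outer smoothing bandwidth $\tau_n^{3/2}$, with only $16\kappa$-th moments on $W$ and $H$ at our disposal. The bandwidth conditions $m\tau_n\to\infty$ and $\sqrt m\,\tau_n^{3-1/\kappa}\to 0$ from \cref{thm:lrv_diff} enter precisely to balance these bias-variance contributions and to guarantee that the quadratic piece $R_jR_j^{\T}$ is of negligible order relative to the cross pieces.
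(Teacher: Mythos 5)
Your proposal follows essentially the same route as the paper's proof: the perturbation identity for $\hat A_{j,m}\hat A_{j,m}^{\T}-A_{j,m}A_{j,m}^{\T}$, reduction of $\hat A_{j,m}-A_{j,m}$ to $\delta(\cdot)=\breve\beta(\cdot)-\beta(\cdot)$, uniform invertibility of $\Omega(t)$ from its convergence to a nondegenerate limit, the bias/variance split of $\varpi(t)-\Omega(t)\beta(t)$ into an $O(\tau_n^3)$ Taylor bias, an $O(m/n)$ drift and a mean-zero term of size $(n\tau_n^{3/2})^{-1/2}$ via Burkholder-type bounds, and a chaining/truncation step producing the $\tau_n^{-1/\kappa}$ corrections; this is exactly the paper's architecture. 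The one place you deviate is the claimed intermediate bound $|R_j|\leq C\,\sup_{t\in\I}|\delta(t)|\,(m^{-1/2}+m/n)$, obtained by centering $\delta$ at $t_j$ and exploiting cancellation in $m^{-1}\sum_i\tilde X_{i,m}$. As stated this is not justified: the remainder $m^{-1}\sum_i x_{i+m,n}x_{i+m,n}^{\T}\{\delta(t_i)-\delta(t_{i+m})\}$ involves increments of the \emph{stochastic} part of $\breve\beta$ over a gap of order $m/n$, and the modulus of continuity of the kernel-smoothed error at that scale is not $o(\sup|\delta|)$ without a genuinely separate argument (compare the paper's \cref{lm:beta_cp}, where the analogous increment is only shown to be $O\{(n\tau_n^{3/2})^{-1/2}+mq_n/(n\tau_n^{3/2})\}$, i.e.\ of the same order as $\sup|\delta|$ itself). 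Fortunately this refinement is dispensable and, tellingly, your final combination does not actually use it: the crude bound $\|\hat A_{j,m}-A_{j,m}\|_{2\kappa}\leq 2\max_i\|x_{i,n}x_{i,n}^{\T}\|_{4\kappa}\sup_t\|\delta(t)\|_{4\kappa}=O(\sup_t\|\delta(t)\|_{4\kappa})$, combined with your Cauchy--Schwarz step and $\max_j\|A_{j,m}\|_{2\kappa}=O(m^{-1/2})$, already yields the $\surd m\,\sup_t|\delta(t)|$ bound you state at the end, which matches the lemma; this is precisely what the paper does. So the proof is correct once the unneeded sharpening of $R_j$ is dropped or replaced by the crude bound.
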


\begin{proof}[Proof of \cref{lm:bias}]
Let $l_n$ be a sequence of real numbers so that $l_n \to \infty$ arbitrarily slow.
Define $A_n = \{ \sup_{t \in  \I} |{ \Omega}(t) - M^+(t)| \leq l_n \{(m\tau_n^{3/2})^{-1/2} + m/(n\tau_n^{3/4}) + \tau_n^{3/4}\}\}$. By \eqref{eq:remark5.2}, $\lim_{n\to\infty} \pp(A_n) = 1.$ Since $0 \leq \sum_{j=m}^{n-m} \omega(t,j) \leq 1$, we have
  \begin{align}
    &\sup_{t \in \I} \lt\| (\breve { \Sigma}(t) - { \Sigma}^A (t))1(A_n)\rt\|_{\kappa}\\ 
    &\leq \sup_{t \in \I} \sum_{j=m}^{n-m}\frac{m\omega(t,j)}{2} \lt\|(\hat{A}_{j,m} -  A_{j,m})1(A_n)\rt\|_{2\kappa}\left(\lt\|(\hat{A}_{j,m} -  A_{j,m})1(A_n)\rt\|_{2\kappa} + 2\|A_{j,m}\|_{2\kappa}\right) \\ 
    &\leq m \max_{m \leq j \leq n-m}  \left\|(\hat{A}_{j,m} -  A_{j,m})1(A_n)\right\|_{2\kappa} \\ &\times \lt(\max_{m \leq j \leq n-m}  \lt\|(\hat{A}_{j,m} -  A_{j,m})1(A_n)\rt\|_{2\kappa} + 2\max_{m \leq j \leq n-m}\|A_{j,m}\|_{2\kappa}\rt).\label{eq:fixtA}
  \end{align}
  \par
 First, we shall show that 
 \begin{align}
  \max_{m \leq j \leq n-m}\|A_{j,m}\|_{2\kappa} = O(m^{-1/2} + m/n)  = O(m^{-1/2}).\label{eq:Ajm}
 \end{align}
 Define $B_{j,m} = \frac{1}{m}\sum_{i=j-m+1}^j x_i x_i^{\T}.$
    Notice that 
    \begin{align}
      \| A_{j,m}\|_{2\kappa} &\leq \sup_{m \leq i \leq n - m}||{ \beta}(t_i)-{ \beta}(t_{i+m})| \|B_{j,m}\|_{2\kappa} + \sup_{t\in [0,1]}|{ \beta}(t)|\| B_{j,m} - B_{j+m,m}\|_{2\kappa} = A_1 + A_2.\label{eq:Astar}
    \end{align}
    Since ${ \beta}(t)$ is Lipschitz continuous,
    and under Assumption \ref{E:WW}, $\max_{m \leq j \leq n}\|B_{j,m}\|_{2\kappa}$ is bounded, we have 
    \begin{align}\label{eq:A1}
      A_1 = O(m/n).
    \end{align}
    For the calculation of $A_2$, notice that
    \begin{align}
      \| B_{j,m} - B_{j+m,m}\|_{2\kappa} & \leq \|B_{j,m}- \E(B_{j,m})\|_{2\kappa} +  \|B_{j+m,m}-\E(B_{j+m,m})\|_{2\kappa}\\ &+\|\E(B_{j,m})- M(t_j)\|_{2\kappa}+\|\E(B_{j+m,m})- M(t_j)\|_{2\kappa}.
    \end{align}
    Similar to  Lemma 6 in \citep{zhou2010simultaneous}, using rectangular kernel with bandwidth $m/n$, under Assumption \ref{E:WW}, we have
    \begin{align}
      \sup_{m \leq j \leq n}\|B_{j,m}- \E(B_{j,m})\|_{2\kappa} = O(m^{-1/2}).
    \end{align}
    Since $\E(B_{j,m}) = \frac{1}{m}\sum_{i=j-m+1}^j M(t_j)$ and $M(t)$ is Lipschitz continuous, 
   $
      \|\E(B_{j,m})- M(t_j)\|_{2\kappa}  = O(m/n).
   $
    Finally, by the boundedness of $\sup_{t \in [0,1]} |{ \beta}(t)|$, we have 
    \begin{align}\label{eq:A2}
      A_2 = O(m^{-1/2 }+ m/n).
    \end{align}
    
    Therefore, by \eqref{eq:A1} and \eqref{eq:A2}, we  have shown \eqref{eq:Ajm}.

Second, by triangle inequality, we have
\begin{align} 
\max_{m \leq j \leq n-m}  \lt\|(\hat{A}_{j,m} -  A_{j,m})1(A_n) \rt\|_{2\kappa} 
& \leq 2\max_{1 \leq i \leq n} \lt\| x_i x_i^{\T}\rt\|_{4\kappa}\|\{{ \beta}(t_i) - \breve{ \beta}(t_i)\}1(A_n) \|_{4\kappa}.
\end{align}
Since under Assumption \ref{E:WW}, $\max_{1 \leq i \leq n} \lt\| x_i x_i^{\T} \rt\|_{4\kappa}  = O(1)$, we shall show that
\begin{align}
\sup_{t \in \I} \|\{ { \beta}(t) - \breve{ \beta}(t)\}1(A_n) \|_{4\kappa} = O\left\{\tau_n^3 + (n\tau_n^{3/2})^{-1/2}\right\}.\label{eq:betabreve}
\end{align}

Let  $M^+(t) =\E\{\bar{J}(t, \F_0)\bar{J}^{\T}(t, \F_0)\}$. 
Following similar arguments in Lemma 6 of \citep{zhou2010simultaneous} and Theorem 5.2 of \citep{dette2019detecting}, under Assumptions \ref{E:WW} and \ref{Ass-E}, we have 
\begin{align}
   \sup_{m \leq j\leq n} \|\acute { \Delta}_j/2 - M^+(t_j)\| = O(m^{-1/2} + m/n).
\end{align}
Then, it follows that 
\begin{align}
      \sup_{t \in \I } \|{ \Omega}(t) - M^+(t)\| = O(m^{-1/2} + m/n + \tau_n^{3/2}).
\end{align}
By the chaining argument in Proposition B.1 in Section B.2 in \citep{dette2018change}, we have
\begin{align}
      \sup_{t \in \I} |{ \Omega}(t) - M^+(t)| = \Op\{(m\tau_n^{3/2})^{-1/2} + m/(n\tau_n^{3/4}) + \tau_n^{3/4}\}.\label{eq:remark5.2}
\end{align}

Note that ${ \Omega}(t)$ is invertible on $A_n$. 
Then, for a sufficiently large constant $C$, we have
\begin{align}
    \|\{\breve { \beta}(t) - { \beta}(t)\} 1(A_n)\|_{4\kappa} & = \| { \Omega}^{-1}(t)\{{ \varpi}(t) - { \Omega}(t){ \beta}(t)\}1(A_n)\|_{4\kappa} \\ 
    &\leq \| \rho(  { \Omega}^{-1}(t)) | { \varpi}(t) - { \Omega}(t){ \beta}(t)| 1(A_n)\|_{4\kappa} \leq C \| { \varpi}(t) - { \Omega}(t){ \beta}(t)\|_{4\kappa}.
    \label{eq:betabreveA}
\end{align}
Then, it's sufficient to show that
\begin{align}
\sup_{t \in \I} \|{ \varpi}(t) - { \Omega}(t) { \beta}(t)\|_{4\kappa} = O\left\{\tau_n^3 + (n\tau_n^{3/2})^{-1/2}\right\}.
\end{align}

Recall the definition of 
$\grave { \Delta}_{j}$ and $\grave{A}_{j,m}$ in \eqref{eq:graveD} and \eqref{eq:graveA} respectively.
Observe that 
\begin{align}
{ \varpi}(t)& = \sum_{j=m}^{n-m} \frac{\tilde \omega(t, j)}{2} \grave{A}_{j,m} + \sum_{j=m}^{n-m} \frac{\tilde \omega(t, j)}{2} \grave { \Delta}_j : =  W_1(t) + W_2(t),\label{eq:varpi}
\end{align}
where $W_1(t), W_2(t)$ are defined in the obvious way. Recall that 
$\acute { \Delta}_{j} = \frac{1}{m}\sum_{i=j-m+1}^{j} (x_{i,n} x_{i,n}^{\T}-x_{i+m, n} x_{i+m, n}^{\T})^2$.
By triangle inequality, we have 
\begin{align}
\sup_{t \in \I} \left\| W_1(t) - { \Omega}(t) { \beta}(t) \right\|_{4\kappa} 
&\leq \sup_{t \in \I} \left\|W_1(t) - \sum_{j=m}^{n-m} \frac{ \acute { \Delta}_j  \tilde \omega(t, j)}{2} { \beta}(t_j) \right\|_{4\kappa}  +  \sup_{t \in \I} \left\| \sum_{j=m}^{n-m} \frac{\acute { \Delta}_j  \tilde \omega(t, j)}{2} \{{ \beta}(t_j)-{ \beta}(t)\} \right\|_{4\kappa}\\ 
& = W_{11} + W_{12}.
\end{align}
Again, by triangle inequality, under Assumptions \ref{A:beta} and \ref{E:WW}, we obtain
\begin{align}
W_{11} 
&\leq \max_{m \leq j \leq n-m}  \|\grave{A}_{j,m} - \acute { \Delta}_j { \beta}(t_j)\|_{4\kappa}, \\
&\leq \frac{1}{m}\max_{m \leq j \leq n-m} \left(\sum_{i = j-m+1}^{j+m} \left\|x_i x_i^{\T} - x_{i+m} x_{i+m}^{\T}\right\|_{8\kappa}\|x_i x_i^{\T}\|_{8\kappa} | { \beta}(t_i) - { \beta}(t_j)| \right) \\ &= O(m/n). \label{eq:W11}
\end{align}
Under Assumption \ref{E:WW}, we have $\max_{m \leq j \leq n-m} \|\acute { \Delta}_j\|_{4\kappa} = O(1).$ Then, by similar arguments in Lemma 3 of \citep{zhou2010simultaneous} and the continuity of $M^{+}(t)$, $m\tau_n^{3/2} \to \infty$, we obtain  
\begin{align}
 W_{12} 
& = \sup_{t \in \I} \left\|\sum_{j=m}^{n-m} \frac{\acute { \Delta}_j  \tilde \omega(t, j)}{2} \{{ \beta}^{\prime}(t)(t_j - t) + O(\tau_n^3)\}  \right\|_{4\kappa}\\ 
& \leq \sup_{t \in \I} \left\|\sum_{j=m}^{n-m} \{\acute { \Delta}_j/2 - M^{+}(t_j)\} \tilde \omega(t, j) { \beta}^{\prime}(t)(t_j - t) \right\|_{4\kappa}\\ &+ \sup_{t \in \I} \left\|\sum_{j=m}^{n-m}  M^{+}(t_j)\tilde \omega(t, j) { \beta}^{\prime}(t)(t_j - t)  \right\|_{4\kappa} + O(\tau_n^3) \\ 
& = \sup_{t \in \I} \left\|\sum_{j=m}^{n-m}  \tilde \omega(t, j) { \beta}^{\prime}(t)M^{+}(t)(t_j - t)  \right\|_{4\kappa} + O\{\tau_n^{3/2}(m^{-1/2} + m/n) + \tau_n^3\} \\ 
&= O\left(\tau_n^3\right).
\label{eq:W12}
\end{align}

Therefore, combining \eqref{eq:W11}  and \eqref{eq:W12}, since $n\tau_n^3 \to \infty$, $m/(n\tau_n^3) \to 0$, we have
\begin{align}
\sup_{t \in \I} \| W_1(t) - \Omega(t) { \beta}(t)\|_{4\kappa}  = O\left(\tau_n^3 \right).\label{eq:W1}
\end{align}

To proceed, we shall show that 
\begin{align}
\sup_{t \in \I} \|W_2(t)\|_{4\kappa} = O\{(n \tau_n^{3/2})^{-1/2} + \chi^m\}.\label{eq:W2}
\end{align} 
Under Assumption \ref{E:WW}, we have
\begin{align}
    \lt|\E(\grave { \Delta}_j)\rt| 
   & =  \frac{1}{m}\left|\sum_{i=j-m+1}^{j} \E\lt(x_i x_i^{\T}x_{i+m} e_{i+m}\rt)+ \E\lt(x_{i+m} x_{i+m}^{\T}x_i e_i\rt) \right| 
     = O(\chi^m).\label{eq:W2E}
\end{align}
Then, by Burkholder's inequality, for a sufficiently large $C$, we have
\begin{align}
\left\|  \sum_{j=m}^{n-m} \frac{\tilde \omega(t, j)}{2} \lt\{\grave { \Delta}_j -  \E( \grave { \Delta}_j) \rt\} \right\|_{4\kappa} &= \left\| \sum_{s = -m}^{\infty}\sum_{j=m}^{n-m} \frac{\tilde \omega(t, j)}{2} \proj_{j-s} \grave{ \Delta}_j \right\|_{4\kappa}\\
 &\leq C  \sum_{s = -m}^{\infty} \left\{ \sum_{j=m}^{n-m} \frac{ \tilde \omega^2(t, j)}{4} \lt\|\proj_{j-s} \grave{ \Delta}_j \rt\|_{4\kappa}^2\right\}^{1/2}.\label{eq:W2-1}
\end{align}
Under Assumptions \ref{E:WW} and \ref{B:H_delta}, using similar techniques in Lemma 3 of \citep{zhou2010simultaneous}, we have
\begin{align}
\lt\|\proj_{j-s} \grave{ \Delta}_j \rt \|_{4\kappa}&\leq \lt\|\grave{ \Delta}_j  - \grave{ \Delta}_{j,\{j-s\}} \rt\|_{4\kappa} 
 \leq \frac{1}{m}\sum_{i=j-m+1}^{j+m} \{\delta_{8\kappa}(J, i-j+s) +\delta_{8\kappa}(U, i-j+s) \}.
\label{eq:grave_delta}
\end{align}
Then, \eqref{eq:W2} follows from \eqref{eq:W2E}, \eqref{eq:W2-1} and \eqref{eq:grave_delta}.
Combining \eqref{eq:W1} and \eqref{eq:W2}, since $m = O(n^{1/3})$, we have \eqref{eq:betabreve}.
Hence, by \eqref{eq:fixtA}, under conditions $m = O(n^{1/3})$, we have
\begin{align}
 \sup_{t \in \I}\left\| \{\breve { \Sigma}(t) - { \Sigma}^A (t)\}1(A_n) \right\|_{\kappa}= O\left(\surd{m}\tau_n^3+ \surd{\frac{m}{n\tau_n^{3/2}}}\right).
\end{align}
Finally, the result follows from the chaining argument in Proposition B.1 in Section B.2 in \citep{dette2018change} and Proposition A.1 in \citep{wu2018gradient}.
\end{proof}

\begin{proof}[ \cref{thm:lrv_diff}]
Recall the definition of $\tilde{ \Sigma}(t)$ in \eqref{eq:tildeSigma}. Observe that
    \begin{align}
      \hat{{ \Sigma}}(t) - \tilde{{ \Sigma}}(t)  
      & = \sum_{j=m}^{n-m} \frac{m \lt({ \Delta}_{j}{ \Delta}_{j}^{\T} - \tilde{ \Delta}_{j}\tilde{ \Delta}_{j}^{\T}\rt) }{2}\omega(t, j)-\breve { \Sigma}(t)\\ 
      & =  { \Sigma}^A(t) -  \breve { \Sigma}(t) + \frac{m}{2}\sum_{j=m}^{n-m} \lt(A_{j,m}\tilde { \Delta}_{j}^{\T} + \tilde { \Delta}_{j}A_{j,m}^{\T}\rt)\omega(t, j).
    \end{align}
    Then, we have 
    \begin{align}
      \sup_{t \in \I} \lt|\hat{{ \Sigma}}(t) - \tilde{{ \Sigma}}(t)\rt| \leq   \sup_{t \in \I} \lt|{ \Sigma}^A(t) - \breve { \Sigma}(t)\rt| + m \sup_{t \in \I} \left|\sum_{j=m}^{n-m} \omega(t, j)\tilde { \Delta}_{j}A_{j,m}^{\T} \right| .\label{eq:Sigma_decomp}
    \end{align}
    Take note that $\Sigma^A(t)$ is the leading term of bias, so we introduce the correction.
    By \cref{lm:correction}, 
    we have
    \begin{align}
      \sup_{t \in \I}\lt|{ \Sigma}^A (t) -  \breve { \Sigma}(t)\rt| = \Op\left(\surd{m}\tau_n^{3-1/\kappa} + \surd{\frac{m}{n\tau_n^{3/2+2/\kappa}}}\right).\label{eq:SigmaA}
  \end{align}
  To proceed, define 
    \begin{align}
      h_{s}(t) = \sum_{j=m}^{n-m}\omega(t,j)\proj_{j-s} (\tilde { \Delta}_{j}A_{j,m}^{\T})=\sum_{j=m}^{n-m} h_{s,j}(t).
    \end{align}
    Under Assumption \ref{E:HW}, it's straightforward that $\E(\tilde { \Delta}_{j}A_{j,m}^{\T}) = 0$, for $m \leq j \leq n-m $. Then, we can write $ \sum_{j=m}^{n-m} \omega(t, j)\tilde { \Delta}_{j}A_{j,m}^{\T}$ as a summation of martingale differences, i.e.
    \begin{align}
    \sum_{j=m}^{n-m} \omega(t, j)\tilde { \Delta}_{j}A_{j,m}^{\T} = \sum_{s = -m}^{\infty}h_{s}(t),\label{eq:sumh}
    \end{align}

  Next, we shall show that $\| h_{s,j}(t)\|_{\kappa} = O\{(1/n + m^{-3/2})\min(\chi^{s-m},1)\}$.
  Under Assumptions \ref{Ass-E} and \ref{B:H_delta}, similar to (36) in Lemma 3 in \citep{zhou2010simultaneous}, we have
   \begin{align}
    \delta_{2\kappa}(\tilde{ \Delta}(m),k): = \sup_{1 \leq j \leq n} \| \tilde{ \Delta}_j - \tilde{ \Delta}_{j,\{j-k\}}\|_{2\kappa} &= O\left\{ \frac{1}{m}\sum_{i=-m+1}^m\delta_{2\kappa}(U,k+i)\right\}\\ &= O\{\min(\chi^{k-m}, 1)/m\}.\label{eq:lrv2}
  \end{align}

   Using similar arguments in \eqref{eq:lrv2}, by the boundedness of ${ \beta}(\cdot)$, we have
  \begin{align}
    \sup_{1 \leq j \leq n} \left\|A_{j,m} - A_{j,m,\{j-k\}} \right\|_{2\kappa}
    & \leq \frac{M }{m}\sum_{i=-m+1}^m\delta_{4\kappa}(W,k+i) = O\{\min(\chi^{k-m}, 1)/m\}, \label{eq:lrv3}
  \end{align}
  where $M$ is a sufficiently large positive constant.
   Following similar arguments in Theorem 1 of \citep{wu2007strong}, $\sup_j \|\tilde{ \Delta}_j\|_{2\kappa} = O(m^{-1/2})$. Similar to \eqref{eq:Ajm}, we have uniformly for $m \leq j\leq n- m$, 
   \begin{align}
   \| A_{j,m}\|_{2\kappa} = O(m/n + m^{-1/2}).\label{eq:lrv1}
   \end{align}
  Under Assumption \ref{Ass-U}, from \eqref{eq:lrv2}, \eqref{eq:lrv3} and \eqref{eq:lrv1}, we obtain uniformly for $m \leq j \leq n-m$, $s \geq -m$,
  \begin{align}
    \|h_{s,j}\|_{\kappa}/\omega(t,j) & = \| \proj_{j-s} \{ \tilde { \Delta}_{j}A_{j,m}^{\T} \}\|_{\kappa} \notag\\ 
     &\leq \|A_{j,m,\{j-s\}} - A_{j,m}\|_{2\kappa}\lt\|\tilde{ \Delta}_j^{\T}\rt\|_{2\kappa} + \|A_{j,m,\{j-s\}}\|_{2\kappa} \lt\|\tilde{ \Delta}^{\T}_j - \tilde{ \Delta}^{\T}_{j,\{j-s\}}\rt\|_{2\kappa}\\
    & = O\{(1/n + m^{-3/2})\min(\chi^{s-m},1)\}.\label{eq:h_null}
  \end{align}
  Since $h_{s,j}(t)$ are martingale differences with respect to $j$, we have for $t \in [m/n, 1 - m/n]$,
  \begin{align}
    \|h_s(t)\|_{\kappa}^2 =\sum_{j=m}^{n-m}\|h_{s,j}(t)\|_{\kappa}^2 
    = O \left\{(n \tau_n)^{-1}(1/n^2 + m^{-3})\min(\chi^{2s-2m},1)\right\}.\label{eq:hsj}
  \end{align}
  By \eqref{eq:sumh} and \eqref{eq:hsj}, we obtain 
  \begin{align}
    &\left\|  m \sum_{j=m}^{n-m} \omega(t, j)\tilde { \Delta}_{j}A_{j,m}^{\T} \right\|_{\kappa}
    \leq m\sum_{s=-m}^{m}\| h_{s}(t)\|_{\kappa} + m\sum_{s=m+1}^{\infty} \| h_{s}(t)\|_{\kappa} 
    = O\left(\surd{\frac{m}{n \tau_n}}\right).
    \label{eq:Sigmat0}
  \end{align}
  By the chaining argument in Proposition B.1 in Section B.2 in \citep{dette2018change}, we have
  \begin{equation}
    \sup _{t \in I} \left| m \sum_{j=m}^{n-m} \omega(t, j)\tilde { \Delta}_{j}A_{j,m}^{\T} \right| = \Op\left(\surd{\frac{m}{n \tau_{n}^{1+2/\kappa}}}\right).\label{eq:Sigmat}
   \end{equation} 
  Combining \eqref{eq:Sigma_decomp},  \eqref{eq:SigmaA} and \eqref{eq:Sigmat}, we obtain
  \begin{equation}
    \sup _{t \in I} \lt| \hat{{ \Sigma}}(t)-\tilde{{ \Sigma}}(t) \rt| = \Op\left(\surd{m}\tau_n^{3-1/\kappa} +  \surd{\frac{m}{n\tau_n^{3/2+2/\kappa}}}\right).
    \label{eq:Sigmatsup}
  \end{equation}
 By Lemma 3 in \citep{zhou2010simultaneous}, Proposition B.1 in Section B.2 in \citep{dette2018change}, we have 
  \begin{equation}
   \sup _{t \in I} \lt| \tilde{{ \Sigma}}(t)- {E}\tilde{{ \Sigma}}(t)\rt|=\Op\left(\surd{\frac{m}{n\tau_n^{1+2/\kappa}}}\right). 
    \label{eq:Sigmat0sup}
  \end{equation}
  Using similar techniques in Lemma 4 and Lemma 5 of \citep{zhou2010simultaneous}, which hold uniformly for $t\in(0,1)$,  \eqref{eq:Sigmat0sup} leads to 
  \begin{equation}
    \sup _{t \in I} \lt| \tilde{{ \Sigma}}(t) -{ \Sigma}(t)\rt|  = 
    \Op\left(\surd{\frac{m}{n\tau_n^{3/2+2/\kappa}}} + \frac{1}{m}+\surd{m}\tau_n^{3-1/\kappa}\right),
    \label{eq:Sigmatsup0}
  \end{equation}
  With \eqref{eq:Sigmatsup} and \eqref{eq:Sigmatsup0}, the supreme bound is thus proved. \end{proof}
  \subsection{Proof of \texorpdfstring{\cref{thm:cp}}{Proof of Theorem 2}}
  Let $\mathcal C_n = \{j:j- n \tau_n^{3/2}  \leq i \leq j + n \tau_n^{3/2}, \beta^{\prime\prime\prime}(t_i) ~\text{exists and continuous} \}$, $\mathcal C_m = \{j:j- m \leq i \leq j + m, \beta^{\prime\prime\prime}(t_i) ~\text{exists and continuous} \}$. Recall in \cref{lm:bias} $A_n = \{ \sup_{t \in  \I} |{ \Omega}(t) - M^+(t)| \leq l_n \{(m\tau_n^{3/2})^{-1/2} + m/(n\tau_n^{3/4}) + \tau_n^{3/4}\}\}$, and $\lim_{n\to\infty} \pp(A_n) = 1.$
\begin{lemma}\label{lm:beta_cp}
      Under the conditions of \cref{thm:cp},  we have 
      \begin{align}
           \sup_{j \in  \mathcal C_m}\sup_{m_1 = 1,\ldots, m} \|\{ \check \beta(t_{j+m}
      ) - \check { \beta}(t_j)\}1(A_n) \|_{4\kappa} = O\left\{ \frac{mq_n}{n \tau_n^{3/2}} + (n \tau_n^{3/2})^{-1/2}\right\}.
      \end{align}
  \end{lemma}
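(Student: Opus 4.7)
The strategy is to exploit the product structure $\breve\beta(t) = \Omega^{-1}(t)\varpi(t)$ and to control the matrix factor $\Omega(t)$ (which is jump-free because the covariate process is locally stationary under Assumption \ref{E:WW}) separately from the vector factor $\varpi(t)$ (which inherits the $q_n$ jumps of $\beta$). First, write
$$
\breve\beta(t_{j+m_1}) - \breve\beta(t_j) = \Omega^{-1}(t_{j+m_1})[\varpi(t_{j+m_1}) - \varpi(t_j)] - \Omega^{-1}(t_{j+m_1})[\Omega(t_{j+m_1}) - \Omega(t_j)]\Omega^{-1}(t_j)\varpi(t_j).
$$
On the event $A_n$ from the proof of \cref{lm:bias}, $\Omega^{-1}(t)$ is uniformly bounded; a direct moment computation (paralleling the one used to derive \eqref{eq:Ajm}) yields $\sup_t \|\varpi(t) 1(A_n)\|_{4\kappa} = O(1)$. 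The task therefore reduces to bounding $\|\Omega(t_{j+m_1}) - \Omega(t_j)\|_{4\kappa}$ and $\|\varpi(t_{j+m_1}) - \varpi(t_j)\|_{4\kappa}$ uniformly in $j \in \mathcal{C}_m$ and $1 \leq m_1 \leq m$.

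Both differences have the form $\sum_l (\cdot)_l[\tilde\omega(t_{j+m_1}, l) - \tilde\omega(t_j, l)]$, which I split into a conditional-mean part and a centered stochastic part. For $\Omega$, the mean $\mathbb{E}(\acute\Delta_l/2)$ is globally Lipschitz in $t_l$ (no covariate jumps), so contributes $O(m/n)$, while the stochastic part is $O_p((n\tau_n^{3/2})^{-1/2})$ by the same Burkholder-type argument with physical-dependence measures already used in the proof of \cref{thm:lrv_diff}. For $\varpi$, the conditional mean satisfies $\mathbb{E}(\breve\Delta_l/2) \approx M^+(t_l)\beta(t_l)$, which is only piecewise Lipschitz with $q_n$ jumps: the smooth pieces again contribute $O(m/n)$, while each jump at a point $a$ inside the kernel window contributes at most $C\|K'\|_\infty \cdot m/(n\tau_n^{3/2})$. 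The latter estimate is obtained by writing the smoother of a piecewise constant function near $a$ as $f_{<a}(t)W_{<a}(t) + f_{>a}(t)W_{>a}(t)$ and using that $|dW_{<a}/dt| = O(1/\tau_n^{3/2})$, a consequence of Assumption \ref{A:K} and the explicit form of $\tilde\omega(t, l) = K((t_l-t)/\tau_n^{3/2})/\sum_i K((t_i-t)/\tau_n^{3/2})$. Summing over the at most $q_n$ jump locations yields the target order $mq_n/(n\tau_n^{3/2})$, and the centered stochastic part contributes $O_p((n\tau_n^{3/2})^{-1/2})$.

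The main obstacle is the jump book-keeping: one has to verify that the heuristic pointwise bound $|dW_{<a}/dt| = O(1/\tau_n^{3/2})$ survives in the $L^{4\kappa}$ norm and uniformly across all admissible $(j, m_1)$. This is handled by observing that $\tilde\omega(\cdot, l)$ is nonzero only in a window of width $\tau_n^{3/2}$ around $t_l$, so its increment across $t_j \to t_{j+m_1}$ is uniformly $O(m_1/(n^2\tau_n^3))$ with cancellation on smooth regions, and a standard chaining argument analogous to Proposition B.1 of \citep{dette2018change} upgrades the pointwise $L^{4\kappa}$-moment bounds to uniform ones. Combining the bounds on $\Omega(t_{j+m_1}) - \Omega(t_j)$ and $\varpi(t_{j+m_1}) - \varpi(t_j)$ via the product-decomposition above, together with the uniform bound on $\Omega^{-1}(t)1(A_n)$, produces the stated order $mq_n/(n\tau_n^{3/2}) + (n\tau_n^{3/2})^{-1/2}$.
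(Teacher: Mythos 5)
Your proposal is correct, and its opening move is identical to the paper's: the product decomposition of $\breve\beta(t_{j+m_1})-\breve\beta(t_j)$ into a $\varpi$-increment term and an $\Omega$-increment term, the uniform bound on $\rho\{\Omega^{-1}(t)\}$ on the event $A_n$, and the observation that the $\Omega$-increment is jump-free and contributes only $O(m/n)$. Where you genuinely diverge is in the treatment of $\varpi(t_{j+m_1})-\varpi(t_j)$. The paper proceeds in two steps: it first proves a \emph{uniform-in-$t$} approximation of $\varpi(t)$ by the ideal smoother $\sum_j \acute\Delta_j\tilde\omega(t,j)\beta(t_j)/2$ — this is where the $mq_n/(n\tau_n^{3/2})$ term arises, from the $O(q_nm)$ indices $j\notin\mathcal C_m$ whose blocks straddle a jump so that $\grave A_{j,m}$ cannot be replaced by $\acute\Delta_j\beta(t_j)$ — and then bounds $\sum_j\acute\Delta_j\{\tilde\omega(t_r,j)-\tilde\omega(t_{r+m_1},j)\}\beta(t_j)$ by $O\{m/(n\tau_n^{3/2})\}$ using only boundedness of the summands and the total variation of the weight increments. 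You instead work directly with the increment, splitting $\breve\Delta_l$ into its mean and centered parts and doing the jump bookkeeping through the reallocation of kernel mass across each discontinuity. This is a valid alternative; in fact the crude bound $\max_l\|E(\breve\Delta_l)\|\cdot\sum_l|\tilde\omega(t_{j+m_1},l)-\tilde\omega(t_j,l)|=O\{m/(n\tau_n^{3/2})\}$ already gives the deterministic part \emph{without} the factor $q_n$, so your per-jump accounting is conservative but safely inside the stated order. Two small caveats: first, $E(\breve\Delta_l/2)$ is not literally the piecewise-Lipschitz function $M^+(t_l)\beta(t_l)$ near a discontinuity — the block averaging over $2m$ indices mollifies the jump — but its total variation in an $O(m)$-neighborhood of each jump remains $O(1)$, so your summation-by-parts estimate survives; second, the chaining step you invoke at the end is unnecessary here, since the suprema over $j$ and $m_1$ sit \emph{outside} the $\|\cdot\|_{4\kappa}$ norm in the statement, so a uniform moment bound over the finitely many index pairs is all that is required (this is also all the paper proves).
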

 \begin{proof}[Proof of \cref{lm:beta_cp}]
  Following the arguments in \cref{lm:correction}, since $\Omega(t)$ is invertible and continuous on $A_n$, and $\sup_{j \in  \mathcal C_m}\|\varpi(t_{j})\|_{8\kappa} = O(1)$, we have 
  \begin{align}
      &\sup_{j \in  \mathcal C_m} \sup_{m_1 = 1,\ldots, m} \|\{ \check \beta(t_{j+m_1}
      ) - \check { \beta}(t_j)\}1(A_n) \|_{4\kappa}\\ &= \sup_{j \in  \mathcal C_m} \sup_{m_1 = 1,\ldots, m} \|\{\Omega^{-1}(t_{j+m_1}) \varpi(t_{j+m_1}) - \Omega^{-1}(t_{j}) \varpi(t_{j}) 
    \}1(A_n) \|_{4\kappa} \\ 
      & \leq \sup_{j \in  \mathcal C_m} \sup_{m_1 = 1,\ldots, m}\sqrt{p}\|\rho\{\Omega^{-1}(t_{j+m_1})\} |\varpi(t_{j+m_1}) -  \varpi(t_{j}) |1(A_n) \|_{4\kappa} \\ 
      & + \sup_{j \in  \mathcal C_m} \sup_{m_1 = 1,\ldots, m}\sqrt{p}\|\rho\{\Omega^{-1}(t_{j+m_1})\} |\Omega(t_{j+m_1}) - \Omega(t_{j})| \rho\{\Omega^{-1}(t_{j}) \} \varpi(t_{j}) 1(A_n) \|_{4\kappa}\\ 
      & \leq C_1 \sup_{j \in  \mathcal C_m}\sup_{m_1 = 1,\ldots, m} \| |\varpi(t_{j+m_1}) -  \varpi(t_{j}) |1(A_n) \|_{4\kappa} + C_2 m/n,\label{eq:newdiff}
  \end{align}
  where $C_1$ and $C_2$ are sufficiently large constants.\par
By \eqref{eq:newdiff}, it is sufficient to show 
  \begin{align}
      \sup_{r \in  \mathcal C_m}\sup_{m_1 = 1,\ldots, m} \| \{\varpi(t_{r+m_1}) -  \varpi(t_{r}) \}1(A_n) \|_{4\kappa} = O\{(mq_n)/(n \tau_n^{3/2}) + (n\tau_n^{3/2})^{-1/2} \}.\label{eq:varpi_cp}
  \end{align}
By triangle inequality, we have
  \begin{align}
        &\sup_{r \in  \mathcal C_m}\sup_{m_1 = 1,\ldots, m} \| \{\varpi(t_{r+m_1}) -  \varpi(t_{r}) \}1(A_n) \|_{4\kappa} \\ 
         & \leq \sup_{r \in  \mathcal C_m}\sup_{m_1 = 1,\ldots, m} \left\| \left\{{ \varpi}(t_{r + m_1}) - \sum_{j=m}^{n-m} \frac{\acute \Delta_j \tilde \omega(t_{r + m_1},j)}{2} \beta(t_j) \right\} 1(A_n) \right\|_{4\kappa}\\
         &+ \sup_{r \in  \mathcal C_m}\sup_{m_1 = 1,\ldots, m} \left\| \left\{{ \varpi}(t_{r}) - \sum_{j=m}^{n-m} \frac{\acute \Delta_j \tilde 
 \omega(t_{r},j)}{2} \beta(t_j) \right\} 1(A_n) \right\|_{4\kappa}\\
        & +  \sup_{r \in \mathcal C_m} \sup_{m_1 = 1,\ldots, m} \left\|\left[\sum_{j=m}^{n-m} \acute \Delta_j \{\tilde  \omega(t_r,j) - \tilde 
 \omega(t_{r+m_1},j)\}  \beta(t_j)/2 \right]1(A_n)\right\|_{4\kappa}\\
        & \leq 2\sup_{t \in I} \left\| \left\{{ \varpi}(t) - \sum_{j=m}^{n-m} \frac{\acute \Delta_j \tilde  \omega(t,j)}{2} \beta(t_j) \right\} 1(A_n) \right\|_{4\kappa}\\
        & +  \sup_{r \in \mathcal C_m} \sup_{m_1 = 1,\ldots, m} \left\|\left[\sum_{j=m}^{n-m} \acute \Delta_j \{\tilde  \omega(t_r,j) - \tilde \omega(t_{r+m_1},j)\}  \beta(t_j)/2 \right] 1(A_n)\right\|_{4\kappa}.\label{eq:decom_varpi}
  \end{align}
  Based on \eqref{eq:decom_varpi}, we break the proof for \eqref{eq:varpi_cp} into two parts. First, we investigate the first term in \eqref{eq:decom_varpi}. Recall the definition of 
$\grave { \Delta}_{j}$ and $\grave{A}_{j,m}$ in \eqref{eq:graveD} and \eqref{eq:graveA} and we have from \eqref{eq:varpi} that
\begin{align}
{ \varpi}(t) & = \sum_{j=m}^{n-m} \frac{ \tilde \omega(t, j)}{2} \grave{A}_{j,m} + \sum_{j=m}^{n-m} \frac{\tilde \omega(t, j)}{2} \grave { \Delta}_j =  W_1(t) + W_2(t),
\end{align}
where we directly have from \eqref{eq:W2} that 
\begin{align}
  \sup_{t \in \I} \|W_2(t)\|_{4\kappa} = O\{(n \tau_n^{3/2})^{-1/2}\}.
\end{align}
Therefore, we have
  \begin{align}
     & \sup_{t \in I} \left\|\left\{{ \varpi}(t) - \sum_{j=m}^{n-m} \frac{\acute \Delta_j \tilde  \omega(t,j)}{2} \beta(t_j) \right\} 1(A_n)\right\|_{4\kappa}\\ 
     & \leq  \sup_{t \in I} \sum_{j=m}^{n-m} \frac{ \tilde  \omega(t, j)}{2m}\left(\sum_{i = j-m+1}^{j} \left\|x_i x_i^{\T} - x_{i+m} x_{i+m}^{\T}\right\|_{8\kappa}\right.\\ & 
\times \left. \|x_i x_i^{\T}({\beta}(t_i) - { \beta}(t_j)) - x_{i+m} x_{i+m}^{\T}({\beta}(t_{i+m}) - { \beta}(t_j))\|_{8\kappa} \right) +\sup_{t \in \I} \|W_2(t)\|_{4\kappa}\\
& \leq \sup_{t \in I} \left(\sum_{j \in  \mathcal C_m, m \leq j \leq n-m} + \sum_{j \notin  \mathcal C_m, m \leq j \leq n-m}\right)\frac{\tilde  \omega(t, j)}{2m} \left(\sum_{i = j-m+1}^{j} \left\|x_i x_i^{\T} - x_{i+m} x_{i+m}^{\T}\right\|_{8\kappa}\right. \\ &  \left.
\times  \|x_i x_i^{\T}({\beta}(t_i) - { \beta}(t_j)) - x_{i+m} x_{i+m}^{\T}({\beta}(t_{i+m}) - { \beta}(t_j))\|_{8\kappa}\right)+\sup_{t \in \I} \|W_2(t)\|_{4\kappa}\\ 
     &= O\left\{\frac{m}{n} + \frac{mq_n}{n \tau_n^{3/2}} + (n \tau_n^{3/2})^{-1/2}+\chi^m\right\}.\label{eq:varpi_1}
  \end{align}
  Secondly, by the continuity of the kernel function $K(\cdot)$ and since $\sup_{j= m, \ldots, n-m}\| \acute \Delta_j\|=O(1)$, 
we have 
  \begin{align}
    \sup_{r \in \mathcal C_m} \sup_{m_1 = 1,\ldots, m} \left\|\left[\sum_{j=m}^{n-m} \acute \Delta_j \{\tilde 
 \omega(t_r,j) - \tilde 
 \omega(t_{r+m_1},j)\} \beta(t_j)/2 \right] 1(A_n)\right\|_{4\kappa} = O\{m/(n\tau_n^{3/2})\}.\label{eq:varpi_2}
  \end{align}
Then, \eqref{eq:varpi_cp} follows from \eqref{eq:varpi_1} and \eqref{eq:varpi_2}. The lemma follows from \eqref{eq:newdiff} and \eqref{eq:varpi_cp}.
\end{proof}

  \begin{proof}[Proof of Theorem 2]
For sufficiently large constants $C_1$ and $C_2$, we have
  \begin{align}
    & \sup_{t \in \I} \left|\{\breve { \Sigma}(t) - { \Sigma}^A (t)\}1(A_n)\right|\\
    &\leq \sup_{t \in \I} \left| \sum_{j\in \mathcal C_n, m \leq j \leq n-m}\frac{m\omega(t,j)}{2} (\hat{A}_{j,m}\hat{A}_{j,m}^{\T} - A_{j,m} A_{j,m}^{\T}) 1(A_n) \right| \\ &+ C_1 \sum_{j\notin \mathcal C_n,  j\in \mathcal C_m m \leq j \leq n-m}\frac{m}{n\tau_n} | (\hat{A}_{j,m}\hat{A}_{j,m}^{\T} - A_{j,m} A_{j,m}^{\T})1(A_n)| \\ 
    &+ C_2 \sum_{j \notin \mathcal C_m, m \leq j \leq n-m} \frac{m}{n\tau_n} | (\hat{A}_{j,m}\hat{A}_{j,m}^{\T} - A_{j,m} A_{j,m}^{\T})1(A_n)|.
    \label{eq:fixAcp}
  \end{align}
 For  $j \in \mathcal C_m$, by \eqref{eq:lrv1}, we have $\| A_{j,m} \|_{2\kappa} = O(m^{-1/2}+m/n)$. 
We shall  show that
\begin{align}
 \sup_{j \in  \mathcal C_m} \lt\|(\hat{A}_{j,m} -  A_{j,m})1(A_n)\rt\|_{2\kappa} = O(m^{-1/2}),\label{eq:jcm0}
 \end{align}
 and
 \begin{align}
    \sup_{j \in  \mathcal C_n}\|(\hat{  A}_{j,m} -   A_{j,m})1(A_n) \|_{2\kappa} = O\{(mq_n)/(n \tau_n^{3/2}) + (n \tau_n^{3/2})^{-1/2} + m^{-1/2}\tau_n^3\}.\label{eq:jcm1}
\end{align}
By \cref{lm:beta_cp}, we have 
  \begin{align}
      & \sup_{j \in  \mathcal C_m}\|(\hat{  A}_{j,m} -   A_{j,m})1(A_n) \|_{2\kappa}\\  
      & \leq \sup_{j \in  \mathcal C_m}\left\|m^{-1}\sum_{i=j-m+1}^j x_i x_i^{\T}\left[\{\check \beta(t_i)-\check \beta(t_{i+m})\} - \{\beta(t_i)- \beta(t_{i+m})\} \right] 1(A_n) \right\|_{2\kappa} \\ 
      & +  \sup_{j \in  \mathcal C_m}\left\|m^{-1}\sum_{i=j-m+1}^j ( x_i x_i^{\T} - x_{i+m} x_{i+m}^{\T}) \left\{\check \beta(t_{i+m}) -  \beta(t_{i+m}) \right\} 1(A_n) \right\|_{2\kappa} \\ 
      & \leq C_1 \left\{ \frac{mq_n}{n \tau_n^{3/2}}  + (n \tau_n^{3/2})^{-1/2}\right\}+ C_2 m/n + \sup_{j \in  \mathcal C_m}\left\|m^{-1}\sum_{i=j-m+1}^j ( x_i x_i^{\T} - x_{i+m} x_{i+m}^{\T})\right\|_{4\kappa} \left\| \left\{\check \beta(t_{j}) -  \beta(t_{j}) \right\} 1(A_n) \right\|_{2\kappa} \\ 
       &= O(m^{-1/2}), \label{eq:cmdiffA}
  \end{align}
  where the last line follows from similar arguments in proving \eqref{eq:A2}. Therefore, we have shown \eqref{eq:jcm0}.
Following the arguments in \eqref{eq:cmdiffA}, for the indices $j \in \mathcal C_n \subset \mathcal C_m$, by \eqref{eq:betabreve}, we have 
\begin{align}
     \sup_{j \in  \mathcal C_n}\|(\hat{  A}_{j,m} -   A_{j,m})1(A_n) \|_{2\kappa} = O\{(mq_n)/(n \tau_n^{3/2}) + (n \tau_n^{3/2})^{-1/2} + m^{-1/2} \tau_n^3 \}.
\end{align}
For the first term of \eqref{eq:fixAcp}, note that  
\begin{align}
    &\left\| \sum_{j\in \mathcal C_n, m \leq j \leq n-m}\frac{m\omega(t,j)}{2} (\hat{A}_{j,m}\hat{A}_{j,m}^{\T} - A_{j,m} A_{j,m}^{\T}) 1(A_n) \right\|_{\kappa}\\ &\leq \sum_{j\in \mathcal C_n, m \leq j \leq n-m} \frac{m\omega(t,j)}{2} \lt\|(\hat{A}_{j,m} -  A_{j,m})1(A_n)\rt\|_{2\kappa} \\ & \times \left(\lt\|(\hat{A}_{j,m} -  A_{j,m})1(A_n)\rt\|_{2\kappa} + 2\|A_{j,m}\|_{2\kappa}\right) \\ 
    & = O\left(\frac{m^{3/2}q_n}{n \tau_n^{3/2}} + \tau_n^3 + \surd\frac{m}{n\tau_n^{3/2}}\right).\label{eq:firstterm}
\end{align}
By the chaining argument of Proposition B.1 of \citep{dette2018change}, by we have 
\begin{align}
    \sup_{t \in \I} \left| \sum_{j\in \mathcal C_n, m \leq j \leq n-m}\frac{m\omega(t,j)}{2} (\hat{A}_{j,m}\hat{A}_{j,m}^{\T} - A_{j,m} A_{j,m}^{\T}) 1(A_n) \right| = \Op\left(\surd\frac{m}{n\tau_n^{3/2 + 2/\kappa}}  + \tau_n^{3-1/\kappa} + \frac{m^{3/2}q_n}{n \tau_n^{3/2+1/\kappa}}\right).
\end{align}
For the second term of \eqref{eq:fixAcp}, by \eqref{eq:jcm0}, we have 
\begin{align}
    &\sum_{j\notin \mathcal C_n,  j\in \mathcal C_m m \leq j \leq n-m}\frac{m}{n\tau_n} \E | (\hat{A}_{j,m}\hat{A}_{j,m}^{\T} - A_{j,m} A_{j,m}^{\T})1(A_n)|\\ & \leq \sum_{j\notin \mathcal C_n,  j\in \mathcal C_m m \leq j \leq n-m}\frac{m}{n\tau_n} \lt\|(\hat{A}_{j,m} -  A_{j,m})1(A_n)\rt\|  \left(\lt\|(\hat{A}_{j,m} -  A_{j,m})1(A_n)\rt\| + 2\|A_{j,m}\|\right)\\ 
    & = O(q_n \tau_n^{1/2}). \label{eq:secondterm}
\end{align}
Since for $j \notin \mathcal C_m$,  $\| A_{j,m} \|_{2\kappa} = O(1)$ and $\lt\|(\hat{A}_{j,m} -  A_{j,m})1(A_n)\rt\|_{2\kappa} = O(1)$, for the third term of \eqref{eq:fixAcp}, we have 
\begin{align}
    \sum_{j \notin \mathcal C_m, m \leq j \leq n-m} \frac{m}{n\tau_n} \E | (\hat{A}_{j,m}\hat{A}_{j,m}^{\T} - A_{j,m} A_{j,m}^{\T})1(A_n)| =  O\left( \frac{m^2q_n}{n\tau}\right).\label{eq:thirdterm}
\end{align}
Combining \eqref{eq:fixAcp}, \eqref{eq:firstterm}, \eqref{eq:secondterm} and \eqref{eq:thirdterm}, 
since $\kappa \geq 1$, $m\tau_n^3 \to \infty$, 
we have 
\begin{align}
\sup_{t \in \I}\lt| \breve { \Sigma}(t) - { \Sigma}^A (t)\rt| 
& = \Op\left(q_n\tau_n^{1/2} + \surd\frac{m}{n\tau_n^{3/2 + 2/\kappa}}  + \frac{m^2q_n}{n\tau_n} + \tau_n^{3-1/\kappa} + \frac{m^{3/2}q_n}{n \tau_n^{3/2+1/\kappa}}\right)\\ 
& = \Op(q_n\tau_n^{1/2} + \surd\frac{m}{n\tau_n^{7/2}}  + \frac{m^2q_n}{n\tau_n} + \tau_n^{2} ) \label{eq:cpcorrect}
\end{align}
We proceed to investigate \cref{thm:lrv_diff} in the presence of change points in $\beta(t)$. Recall that from \eqref{eq:Sigma_decomp},
    \begin{align}
      \sup_{t \in \I} \lt|\hat{{ \Sigma}}(t) - \tilde{{ \Sigma}}(t)\rt| \leq   \sup_{t \in \I} \lt|{ \Sigma}^A(t) - \breve { \Sigma}(t)\rt| + m \sup_{t \in \I} \left|\sum_{j=m}^{n-m} \omega(t, j)\tilde { \Delta}_{j}A_{j,m}^{\T} \right|. \label{eq:cpdecomp}
    \end{align}
    
Inspecting \eqref{eq:hsj}, we have 
\begin{align}
    \|h_s(t)\|_{\kappa}^2 &= \sum_{j \in \mathcal C_m, m \leq j \leq n-m} \| h_{s,j}(t)\|_{\kappa}^2  + \sum_{j \notin \mathcal C_m, m \leq j \leq n-m} \| h_{s,j}(t)\|_{\kappa}^2\\  & = O\left\{(n\tau_n)^{-1}(1/n^2 + m^{-3})\min (\chi^{2s-2m}, 1) + \frac{q_n}{m(n\tau_n)^2}\min (\chi^{2s-2m}, 1) \right\}.
\end{align}
Similar to  \eqref{eq:Sigmat0}, we have 
\begin{align}
    \left\|  m \sum_{j=m}^{n-m} \omega(t, j)\tilde { \Delta}_{j}A_{j,m}^{\T} \right\|_{\kappa}
    &\leq m\sum_{s=-m}^{m}\| h_{s}(t)\|_{\kappa} + m\sum_{s=m+1}^{\infty} \| h_{s}(t)\|_{\kappa} 
    \\ &= O\left\{\surd{\frac{m}{n \tau_n}} +  \surd{\frac{q_n m^3}{(n \tau_n)^2}}\right\} = O\left(\surd{\frac{m}{n \tau_n}}\right).
  \end{align}
   By the chaining argument in Proposition B.1 in Section B.2 in \citep{dette2018change}, similar to \eqref{eq:Sigmat}, we have
  \begin{equation}
    \sup _{t \in I} \left| m \sum_{j=m}^{n-m} \omega(t, j)\tilde { \Delta}_{j}A_{j,m}^{\T} \right| = \Op\left(\surd{\frac{m}{n \tau_{n}^{1+2/\kappa}}}\right). \label{eq:cpmain}
   \end{equation} 
  Finally, combining \eqref{eq:cpcorrect}, \eqref{eq:cpdecomp}, and \eqref{eq:cpmain}, following similar argument of \cref{thm:lrv_diff}, we have 
  shown the first part of the results. The second part follows from \cref{lm:davis} and Theorem 3.4 of \citep{wu2018gradient}. \end{proof}

\subsection{Proof of \texorpdfstring{\cref{prop:stat_alt}}{Proof of Proposition 1}}
\begin{proof}[Proof of \cref{prop:stat_alt}]
\cref{prop:stat_alt} follows from similar but easier arguments of Theorem 3.1 and  Theorem 3.2 of \citep{wu2018gradient}.
\end{proof}

\subsection{Proof of \texorpdfstring{\cref{lm:Sigma_d}}{Proof of Theorem 3}}\label{subsec:lrv_fix}
For  the clarity of presentation, write $\hat{ \Sigma}_d(t)$ for $\hat{ \Sigma}(t)$ defined in \eqref{eq:diff_correct} under the fixed alternative.  Define $${ \Sigma}_d(t) = \kappa_2(d)\sigma_H^2(t)  \mu_W(t) \mu^{\T}_W(t),\quad t \in [0,1].$$
 For the quantities under the fixed alternatives, let $\tilde{Q}^{(d)}_{k,m} = \sum_{i=k}^{k+m-1} x_ie_i^{(d)}$,  
    $$
            \tilde{ \Delta}_{j}^{(d)}=\frac{\tilde{Q}^{(d)}_{j-m+1, m}- \tilde{Q}^{(d)}_{j+1, m}}{m}
 , \quad \tilde{{ \Sigma}}_d(t)=\sum_{j=m}^{n-m} \frac{m \tilde { \Delta}_{j}^{(d)}\tilde { \Delta}^{(d), \T}_{j}}{2}\omega(t, j), $$
  and $$  \grave { \Delta}_{j}^{(d)} = \frac{1}{m}\sum_{i=j-m+1}^{j} (x_i x_i^{\T}-x_{i+m} x_{i+m}^{\T}) (x_i e_i^{(d)}-x_{i+m} e_{i+m}^{(d)})
 .$$ Let $\hat{A}^{(d)}$, $ \varpi^{(d)}(\cdot)$, $\breve { \beta}^{(d)}(\cdot)$,  $\breve { \Sigma}_d(\cdot)$ denote the counterparts of $\hat{A}$, $ \varpi(\cdot)$, $\breve { \beta}(\cdot)$ and $\breve { \Sigma}(\cdot)$  in  \eqref{eq:diff_correct} under the fixed alternatives. Define 
 $ H^{(d)} (t_i, \mathcal{F}_i) = \sum_{k=0}^{\infty} \psi_k H(t_{i-k},  \mathcal{F}_{i-k}) $, 
  $U^{(d)}(t,\FF_i) = W(t,\FF_i) H^{(d)}(t,\FF_i)$. 
The following proposition is from Theorem 4.2 of \citep{bai2021}.
 \begin{proposition}\label{prop:5.2}
       Under Assumptions \ref{Ass-W} and \ref{assumptionHp},  we have
       \begin{align}
         \max_{\lf nb_n \rf + 1 \leq r \leq n- \lf  nb_n \rf }\left|\sum_{ i=\lf nb_n \rf+1}^r x_{i,n} e_{i,n}^{(d)}-\sum_{ i=\lf nb_n \rf+1}^r  \mu_W(t_i) e_{i,n}^{(d)}\right| 
         &= \Op (\surd{n}(\log n)^d).
         \end{align}
     \end{proposition}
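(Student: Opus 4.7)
The aim is to bound $S_r := \sum_{i=\lfloor nb_n \rfloor+1}^{r} \tilde W_i\, e_{i,n}^{(d)}$ uniformly in $r$, where $\tilde W_i := W(t_i,\F_i)-\mu_W(t_i)$ is the centred, short-range dependent covariate. My plan is to combine a second-moment computation with a Móricz-type maximal inequality; the $(\log n)^d$ factor will ultimately capture the mild inflation caused by the interaction of the long-memory fractional filter $\psi_k(d)\sim k^{d-1}/\Gamma(d)$ with the mean-zero short-range process $\tilde W$.

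The first substantive step is to expand $e_{i,n}^{(d)}=\sum_{k\ge 0}\psi_k(d)\,H(t_{i-k},\F_{i-k})$ and exploit the state-heteroscedastic decomposition $H(t_j,\F_j)=\tilde H(t_j,\mathcal H_j)\tilde G(t_j,\mathcal G_j)$. Because $\mathcal G$ is independent of $\mathcal H$, $E[\tilde G(t,\mathcal G_0)]=0$, and $\tilde W_i$ is adapted to the $\mathcal H$-filtration, one has $E[\tilde W_i H_j]=E[\tilde W_i\tilde H_j]\,E[\tilde G_j]=0$ for every $j\le i$, so in particular $E[S_r]=0$. The second moment factorises similarly:
\[
E\bigl[\tilde W_{i_1}\tilde W_{i_2}H_{i_1-k_1}H_{i_2-k_2}\bigr]=E\bigl[\tilde W_{i_1}\tilde W_{i_2}\tilde H_{i_1-k_1}\tilde H_{i_2-k_2}\bigr]\cdot E\bigl[\tilde G_{i_1-k_1}\tilde G_{i_2-k_2}\bigr].
\]
The $\tilde H$-$\tilde W$ factor is uniformly bounded by H\"older together with Assumptions \ref{Ass-W} and \ref{assumptionHp}, while the $\tilde G$-covariance decays geometrically in $|(i_1-k_1)-(i_2-k_2)|$ under Assumption \ref{assumptionHp}. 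Substituting, changing variables to $s=(i_1-k_1)-(i_2-k_2)$, and summing over $i_1,i_2$ first gives $\operatorname{var}(S_r)\le C n\sum_{k,k'\ge 0}\psi_k(d)\psi_{k'}(d)\chi^{|k-k'|}$, and this double series is finite whenever $d<1/2$ since the diagonal contribution is $\sum_k k^{2d-2}<\infty$ and the off-diagonal piece decays like $\chi^{s}\cdot s^{2d-1}$. Thus $\operatorname{var}(S_r)=O(n)$ uniformly in $r$, and the same bookkeeping yields the increment bound $E|S_{r'}-S_r|^2\le C|r'-r|$.

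The final step upgrades the pointwise variance bound to $\max_r |S_r|$ via a Móricz/chaining inequality, which gives a logarithmic correction. To obtain the sharper $(\log n)^d$ factor rather than a crude $\log n$, I would repeat the moment expansion of Step 2 at order $2q$ for $q$ large enough to satisfy the moment restriction $\kappa\ge\max\{4/(1/2-d),2/(3d),4\}$ in Assumption \ref{assumptionHp}, pair it with Burkholder's inequality applied after a projection decomposition $\proj_j[\tilde W_i e_{i,n}^{(d)}]$, and invoke chaining over dyadic blocks. The anticipated main obstacle is precisely this accounting: a naive triangle-inequality bound yields $\sum_k \psi_k(d)\cdot O_p(\sqrt{n})=O_p(n^{d+1/2})$, and only the cancellation produced by the state-heteroscedastic factorisation together with the convolution structure of $\psi_k(d)$ eliminates the excess power $n^d$ and leaves only the logarithmic remainder $(\log n)^d$.
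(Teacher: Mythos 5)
First, a framing point: the paper does not prove this proposition internally at all --- it is imported verbatim from Theorem 4.2 of \citep{bai2021} --- so your attempt has to stand entirely on its own, and it does not yet. Your Step 1 misreads the model. You assert that $\tilde W_i$ is adapted to the $\mathcal H$-filtration and deduce the factorisation $E[\tilde W_{i_1}\tilde W_{i_2}H_{j_1}H_{j_2}]=E[\tilde W_{i_1}\tilde W_{i_2}\tilde H_{j_1}\tilde H_{j_2}]\,E[\tilde G_{j_1}\tilde G_{j_2}]$, but the paper's condition is the reverse inclusion $\mathcal H_i\subset\sigma(x_{1,n},\ldots,x_{i,n})$: the covariates are not $\mathcal H_i$-measurable, and nothing in \cref{Ass-W} or \cref{assumptionHp} makes the pair $(\tilde W,\tilde H)$ jointly independent of $\tilde G$. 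The two facts you want from this step --- $E(S_r)=0$ and geometric decay of $\operatorname{cov}(\tilde W_{i_1}H_{j_1},\tilde W_{i_2}H_{j_2})$ in the index gaps --- do hold under the stated assumptions, but the correct tool is a joint-cumulant/physical-dependence bound for centred sequences (the paper itself uses exactly this device, Lemma 7 of \citep{zhou2014vstat}, in Step 3.1 of the proof of \cref{lm:Sigma_dn}), not an independence factorisation. So your conclusion $\operatorname{var}(S_r)=O(n)$ is correct and salvageable, just not by the route you give.

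The genuine gap is Step 3. With only $\operatorname{var}(S_r)=O(n)$ and $E|S_{r'}-S_r|^2\le C|r'-r|$, the M\'oricz/Rademacher--Menshov inequality sits at its boundary exponent and yields $E\max_r|S_r|^2=O\{n(\log n)^2\}$, i.e.\ $\max_r|S_r|=\Op(\surd{n}\log n)$, which is strictly weaker than the claimed $\Op\{\surd{n}(\log n)^d\}$ because $d<1/2$. Your proposed repair --- higher moments via Burkholder after a projection decomposition --- fails exactly where the difficulty lies: the termwise bound $\|\proj_l(\tilde W_ie_{i,n}^{(d)})\|=O\{(i-l+1)^{d-1}\}$ discards the centring cancellation, so Burkholder returns only $\|S_n\|_{2q}=O(n^{1/2+d})$, and the excess power $n^d$ you set out to remove reappears. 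What is missing is a mechanism that actually produces $(\log n)^d$: for instance, split the fractional filter at a lag $K\asymp\log n$ (legitimate because $W$ and $H$ have geometrically decaying dependence), bound the head by $\sum_{k\le K}\psi_k(d)\max_r|\sum_{i\le r}\tilde W_iH_{i-k}|$, where each maximum is $\Op(\surd{n})$ uniformly in $k$ because $(\tilde W_iH_{i-k})_i$ is centred with geometrically decaying physical dependence uniformly in $k$, so that the filter contributes exactly $\sum_{k\le K}\psi_k(d)\asymp(\log n)^d$; then control the tail $\sum_i\tilde W_iV_i$ with $V_i=\sum_{k>K}\psi_k(d)H_{i-k}\in\F_{i-K-1}$ by exploiting $\|V_i\|=O(K^{d-1/2})$ together with $\|E(\tilde W_i\mid\F_{i-K-1})\|=O(\chi^K)$, which makes that part $\op(\surd{n})$. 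Without this (or an equivalent device) your sketch establishes at best the $\surd{n}\log n$ rate, not the statement.
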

   
\begin{corollary}\label{cor:5.2}
Under the conditions of \cref{prop:5.2}, assume $l /\log n \to \infty$, $l/n \to 0$, we have
\begin{align}
  \max_{1 \leq k \leq n - l + 1}\left\|\sum_{i=k}^{k+l-1} (x_i - \mu_W(t_i)) e_i^{(d)}\right\| = O(\surd{l}(\log n)^d).
\end{align}
\end{corollary}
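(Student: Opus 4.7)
The goal is to adapt the internal mechanism of Proposition C.1 (Theorem 4.2 of \citep{bai2021}) from a partial sum with a fixed origin of range $O(n)$ to arbitrary blocks of length $l$. A direct telescoping approach, writing $\sum_{i=k}^{k+l-1} = S_{k+l-1} - S_{k-1}$ with $S_r = \sum_{i=1}^{r}(x_i - \mu_W(t_i)) e_i^{(d)}$ and applying Proposition C.1 to each endpoint, would only deliver the $O(\surd n (\log n)^d)$ bound of the proposition, which is too coarse: the proof must exploit the window length $l$ directly.

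The plan is to expand the long-memory error through its MA representation $e_i^{(d)} = \sum_{j\ge 0}\psi_j(d) e_{i-j}$ with $\psi_j(d) \asymp (j+1)^{d-1}$, and then interchange the order of summation to get
\begin{align}
\sum_{i=k}^{k+l-1} (x_i-\mu_W(t_i))\, e_i^{(d)} = \sum_{j\ge 0}\psi_j(d)\, T_{k,l}(j),\qquad T_{k,l}(j) = \sum_{i=k}^{k+l-1}(x_i-\mu_W(t_i))\, e_{i-j}.
\end{align}
Each $T_{k,l}(j)$ is a sum of $l$ zero-mean, locally stationary, short-range dependent terms, so standard maximal moment inequalities for physically dependent sequences (for instance the chaining bound of Proposition B.1 of \citep{dette2018change}, already invoked several times in this paper) give $\max_k \|T_{k,l}(j)\| = O(\surd l)$ up to polylogarithmic factors coming from the maximum over the $n-l+1$ starting indices.

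The main obstacle, and the point at which $(\log n)^d$ rather than $n^d$ must be extracted, is the summation of $\psi_j(d) T_{k,l}(j)$ over $j$: a naive bound using $\sum_{j\le N}|\psi_j(d)| \asymp N^d$ at $N\asymp n$ would overshoot. I would split the MA expansion into a short backbone at lags $j\le \log n$ and a long tail at lags $j>\log n$. The backbone contributes at most $\surd l \cdot \sum_{j\le \log n}|\psi_j(d)| \asymp \surd l\,(\log n)^d$, which matches the target rate. The tail is handled by bounding its $L^2$ contribution via the full long-memory covariance structure of $e_i^{(d)}$ over a window of length $l$, producing a contribution of the same or smaller order once $l/\log n \to \infty$ is used to absorb residual logarithms. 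This is essentially the same route as inside the proof of Proposition C.1, so the adaptation to length-$l$ windows should be almost verbatim, with $n$ replaced by $l$ in the moment part and $\log n$ preserved only through the maximum over starting positions.
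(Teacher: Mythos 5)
The paper's own proof of this corollary is a one-line deferral: it states only that the result ``follows from a careful check of the proof of Theorem 4.2 of \citep{bai2021}.'' Your reconstruction therefore supplies more detail than the paper does, and its core --- expand $e_i^{(d)}$ in its MA representation, interchange the sums, split the lags at $\log n$, bound the backbone by the triangle inequality using $\sum_{j\le \log n}|\psi_j(d)|\asymp (\log n)^d$ together with $\|T_{k,l}(j)\|=O(\surd l)$, and bound the tail in $L^2$ --- is a sound route to the claimed rate and is consistent with the mechanism the paper gestures at.

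Two points need correcting. First, you misattribute the source of the $(\log n)^d$ factor. The left-hand side is $\max_k\|\cdot\|$, the maximum over $k$ of \emph{deterministic} $L^2$ norms (in this paper $\|\cdot\|$ with no subscript is $\|\cdot\|_2$), not a stochastic maximum over a path; hence no chaining bound is needed and no ``polylogarithmic factors coming from the maximum over the $n-l+1$ starting indices'' arise --- each $\|T_{k,l}(j)\|$ is $O(\surd l)$ uniformly in $k$ and $j$ by Burkholder's inequality applied to the geometrically physically dependent, centered sequence $i\mapsto (x_i-\mu_W(t_i))e_{i-j}$, and the maximum of such deterministic bounds costs nothing. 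Your closing sentence, which claims $\log n$ is ``preserved only through the maximum over starting positions,'' contradicts your own (correct) middle paragraph: the $(\log n)^d$ comes entirely from the truncated coefficient sum $\sum_{j\le\log n}\psi_j(d)$. Second, the tail step is under-specified. Invoking ``the full long-memory covariance structure of $e_i^{(d)}$'' is not enough, because the summands are the products $(x_i-\mu_W(t_i))e_{i-j}$; what is actually required is that $\E\{T_{k,l}(j)T_{k,l}^{\T}(j')\}$ be of order $l$ for $j=j'$ and geometrically small in $|j-j'|$ otherwise, so that the lags are nearly orthogonal. This is precisely the fourth-moment cross-covariance bound of Lemma 7 of \citep{zhou2014vstat} that the paper uses in Step 3.1 of the proof of \cref{lm:Sigma_dn} (a martingale projection argument as in \eqref{eq:Qkm_dn} is an alternative). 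With it, $\|\sum_{j>\log n}\psi_j T_{k,l}(j)\|^2 = O(l\sum_{j>\log n}\psi_j^2) = O\{l(\log n)^{2d-1}\} = o(l)$, and the corollary follows; once these two repairs are made your argument goes through.
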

\begin{proof}[Proof of \cref{cor:5.2}]
  The corollary follows from a careful check of the proof of Theorem 4.2 of \citep{bai2021}.
\end{proof}

  \begin{lemma}\label{lm:correctiond}
Under the conditions of  \cref{lm:Sigma_d}, we have
\begin{align}
\sup_{t \in \I}\lt| \breve { \Sigma}_d(t) - { \Sigma}^A (t)\rt|= \Op\left[\surd{m} \{(n\tau_n^{3/2})^{d-1/2} + \tau_n^3\}\tau_n^{-1/\kappa} +  m \{(n\tau_n^{3/2})^{2d-1}+\tau_n^6\}\tau_n^{-1/\kappa}\right] = \op(m^{2d}).
\end{align}
\end{lemma}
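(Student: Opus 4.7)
The plan is to mirror the proof of \cref{lm:bias} closely, adapted to the long-memory setting. First I would establish the operator inequality, analogous to \eqref{eq:fixtA},
$$\sup_{t \in \I}\|\{\breve{\Sigma}_d(t) - \Sigma^A(t)\}1(A_n)\|_\kappa \leq m \max_j \|(\hat A_{j,m}^{(d)} - A_{j,m})1(A_n)\|_{2\kappa}\bigl(\max_j \|(\hat A_{j,m}^{(d)} - A_{j,m})1(A_n)\|_{2\kappa} + 2 \max_j \|A_{j,m}\|_{2\kappa}\bigr),$$
where $A_n$ is the invertibility event defined in the proof of \cref{lm:bias}. Since $A_{j,m}$ does not involve $e_i^{(d)}$, the bound $\max_j \|A_{j,m}\|_{2\kappa} = O(m^{-1/2})$ from \eqref{eq:Ajm} transfers without modification.

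The central task is to establish the long-memory analog of \eqref{eq:betabreve}, namely
$$\sup_{t \in \I}\|\{\breve{\beta}^{(d)}(t) - \beta(t)\}1(A_n)\|_{4\kappa} = O\{\tau_n^3 + (n\tau_n^{3/2})^{d-1/2}\}.$$
Decompose $\varpi^{(d)}(t) = W_1(t) + W_2^{(d)}(t)$ exactly as in \eqref{eq:varpi}. Since $W_1(t)$ does not involve the error process, the bound $\|W_1 - \Omega \beta\|_{4\kappa} = O(\tau_n^3)$ from \eqref{eq:W11}--\eqref{eq:W12} is unchanged. The term $W_2^{(d)}(t) = \sum_j \tilde\omega(t,j)\grave\Delta_j^{(d)}/2$ is the main obstacle, because the martingale-projection argument in \eqref{eq:W2-1} used in the short-range case degrades under long-range dependence. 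To circumvent this, I would re-order the double summation, split $x_i e_i^{(d)}$ into $\mu_W(t_i) e_i^{(d)}$ plus the mean-zero remainder $(x_i - \mu_W(t_i)) e_i^{(d)}$, and then invoke \cref{cor:5.2} on block sums of effective length $\asymp n\tau_n^{3/2}$. This produces the desired $O((n\tau_n^{3/2})^{d-1/2})$ rate, which is the long-memory replacement for the short-range $(n\tau_n^{3/2})^{-1/2}$ rate.

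Once the $\|\breve{\beta}^{(d)} - \beta\|_{4\kappa}$ bound is available, the same manipulation used below \eqref{eq:fixtA} yields $\|(\hat A_{j,m}^{(d)} - A_{j,m})1(A_n)\|_{2\kappa} = O\{\tau_n^3 + (n\tau_n^{3/2})^{d-1/2}\}$. Substituting into the operator inequality produces the pointwise rate $m R^2 + \surd m\, R$ with $R = \tau_n^3 + (n\tau_n^{3/2})^{d-1/2}$. A final application of the chaining argument of Proposition B.1 of \citep{dette2018change} lifts the pointwise bound to a uniform bound, introducing the factor $\tau_n^{-1/\kappa}$ and yielding the announced rate. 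The $\op(m^{2d})$ conclusion is then verified arithmetically from the bandwidth conditions $m\tau_n^{3/2}/\log n \to \infty$, $m = O(n^{1/3})$, $\tau_n^{3-1/\kappa}\surd m \to 0$, together with the moment condition $\kappa \geq \max\{4/(1/2-d),\, 2/(3d),\, 4\}$, which is tuned precisely so that both $\surd m\,\tau_n^3\,\tau_n^{-1/\kappa}$ and $\surd m\,(n\tau_n^{3/2})^{d-1/2}\,\tau_n^{-1/\kappa}$ are $\op(m^{2d})$.
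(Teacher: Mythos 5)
Your skeleton is right: the operator inequality \eqref{eq:fixtA}, the unchanged bounds on $A_{j,m}$ and on $W_1(t)$, the reduction to a $\|\breve\beta^{(d)}-\beta\|_{4\kappa}$ bound of order $\tau_n^3+(n\tau_n^{3/2})^{d-1/2}$, and the final assembly via chaining all match the paper. The gap is in the one step that actually changes under long memory, namely the bound on $W_2^{(d)}(t)=\sum_j \tilde\omega(t,j)\grave\Delta_j^{(d)}/2$. You propose to split $x_ie_i^{(d)}$ into $\mu_W(t_i)e_i^{(d)}$ plus a mean-zero remainder and invoke \cref{cor:5.2} on blocks of length $\asymp n\tau_n^{3/2}$. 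But \cref{cor:5.2} controls unweighted partial sums of $(x_i-\mu_W(t_i))e_i^{(d)}$, whereas each summand of $\grave\Delta_j^{(d)}$ carries the random factor $(x_ix_i^{\T}-x_{i+m}x_{i+m}^{\T})$ multiplying $(x_ie_i^{(d)}-x_{i+m}e_{i+m}^{(d)})$; after your split the dominant piece is still $\sum_i(x_ix_i^{\T}-x_{i+m}x_{i+m}^{\T})\mu_W(t_i)e_i^{(d)}$, a randomly weighted sum to which \cref{cor:5.2} does not apply. That centering device is used in the paper only later (Step 2 of the proof of \cref{lm:Sigma_d}, via \eqref{eq:cor_prop5.2}) for $\tilde Q^{(d)}_{k,m}$, where the summands really are $x_ie_i^{(d)}$ with no extra random weight.

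You also assert that the martingale-projection argument of \eqref{eq:W2-1} must be abandoned, but the paper's proof keeps exactly that argument and only replaces the geometric decay by the polynomial bound $\|\proj_{j-s}\grave\Delta_j^{(d)}\|_{4\kappa}=O\{m^{-1}\sum_{i=-m+1}^m(i+s)^{d-1}\}$, which follows from $\delta_{8\kappa}(U^{(d)},k)=O(k^{d-1})$ (\cref{lm:delta_xed}). The resulting projection series is no longer summable in $s$, so the paper splits at $N=\lf n\tau_n^{3/2}\rf$: for $s\le N$ it exploits orthogonality of $\proj_{j-s}\grave\Delta_j^{(d)}$ across $j$ (giving $O(N^d/\surd{n\tau_n^{3/2}})$), and for $s>N$ orthogonality across $s$ (giving $O(N^{d-1/2})$); both pieces are $O\{(n\tau_n^{3/2})^{d-1/2}\}$, which is the bound you need. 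Without this two-regime projection argument (or a working substitute that handles the random weights), your proof of the key estimate does not go through.
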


    \begin{proof}[Proof of \cref{lm:correctiond}]
After a careful check of the proof of \cref{lm:correction}, the behavior of $W_1(t)$ is unchanged under the fixed alternatives and  it's sufficient to show that $W_2^{(d)}(t)$, $W_2(t)$ under the fixed alternatives, s.t.
    \begin{align}
    \sup_{t \in \I} \lt\|W_2^{(d)}(t)\rt\|_{4\kappa} = O\{(n\tau_n^{3/2})^{d-1/2}\}.\label{eq:W2d}
    \end{align} 
Then the lemma will follow from the similar steps in \cref{lm:correction}.
Similar to \eqref{eq:grave_delta}, under Assumptions \ref{Ass-W} and \ref{B:H_delta}, following similar arguments in \cref{lm:delta_xed}, we have
\begin{align}
\lt\|\proj_{j-s} \grave{ \Delta}_j^{(d)} \rt\|_{4\kappa}
 &\leq \frac{1}{m}\sum_{i=j-m+1}^{j+m} \lt\{\delta_{8\kappa}(J, i-j+s)+\delta_{8\kappa}(U^{(d)}, i-j+s) \rt\} = O\left\{\frac{1}{m}\sum_{i=-m+1}^m (i+s)^{d-1} \right\}.
\label{eq:grave_deltad}
\end{align}
Let $N = N_n =\lf  n \tau_n^{3/2}\rf$. Note that under Assumption \ref{E:HW}, we can write 
\begin{align}
    W_2^{(d)}(t) &= \sum_{s=0}^N \sum_{j=m}^{n-m}\frac{\tilde \omega(t,j)}{2} \proj_{j-s} \grave { \Delta}_j^{(d)} + \sum_{j=m}^{n-m} \frac{\tilde \omega(t,j)}{2} \sum_{s=N+1}^{\infty} \proj_{j-s} \grave { \Delta}_j^{(d)} = W_{21}^{(d)}(t) + W_{22}^{(d)}(t).
\end{align}
Since $\proj_{j-s}\grave { \Delta}_j^{(d)}$ are martingale differences with respect to $j$, for $0 \leq s\leq N$, we have
\begin{align}
     \sup_{t \in \I} \left\|\sum_{j=m}^{n-m}\frac{\tilde \omega(t,j)}{2} \proj_{j-s} \grave { \Delta}_j^{(d)} \right\|^2_{4\kappa} &= \sum_{j=m}^{n-m}\frac{\tilde \omega^2(t,j)}{4} \lt\|\proj_{j-s} \grave { \Delta}_j^{(d)} \rt\|^2_{4\kappa}\\ & = O\left[\frac{1}{n\tau_n^{3/2} m^2}\left\{\sum_{i=-m+1}^{m} (i+s)^{d-1}\right\}^2 \right].\label{eq:W21d1}
\end{align}
Therefore, by \eqref{eq:W21d1}, we have 
\begin{align}
     \sup_{t \in \I} \left\| W_{21}^{(d)} (t) \right\|_{4\kappa} = O(N^{d}/\surd{n\tau_n^{3/2}}) = O\{(n\tau_n^{3/2})^{d-1/2}\}.\label{eq:W21d}
\end{align}
 Since $\proj_{j-s} \grave { \Delta}_j^{(d)}$ are martingale differences with respect to $s$, elementary calculations shows
\begin{align}
    \left\|\sum_{s=N+1}^{\infty} \proj_{j-s} \grave { \Delta}_j^{(d)} \right\|_{4\kappa}^2  = O\left[m^{-2} \sum_{s=N+1}^{\infty}\left\{\sum_{i=-m+1}^{m} (i+s)^{d-1}\right\}^2 \right] = O(N^{2d-1}).\label{eq:W22d1}
\end{align}
Therefore, by  \eqref{eq:W22d1} and triangle inequality we have 
\begin{align}
     \sup_{t \in \I} \lt\|W_{22}^{(d)}(t) \rt\|_{4\kappa} = O(N^{d-1/2}) = O\{(n\tau_n^{3/2})^{d-1/2}\}.\label{eq:W22d}
\end{align}
Finally, \eqref{eq:W2d} follows from \eqref{eq:W21d} and \eqref{eq:W22d}.

\end{proof}

\begin{proof}[Proof of \cref{lm:Sigma_d}]
  Recall $\tilde{{Q}}_{k,m}^{(d)} = \sum_{i=k}^{k+m-1}x_i e_i^{(d)}$, and $$\tilde { \Delta}_{j}^{(d)}=\frac{\tilde{{Q}}^{(d)}_{j-m+1, m}- \tilde{{Q}}^{(d)}_{j+1, m}}{m},\quad \tilde{{ \Sigma}}_d(t)=\sum_{j=m}^{n-m} \frac{m \tilde { \Delta}_{j}^{(d)}\{\tilde { \Delta}^{(d)}_{j}\}^{\T}}{2}\omega(t, j).$$
  We break the proof into 6 steps. 

\textbf{Step 1}: 
We shall prove that under the bandwidth conditions $\kappa \geq 4/(1/2-d)$, $m/(n\tau_n^3) \to 0$, $\surd{m}\tau_n^{3-1/\kappa} \to 0$, $m = O(n^{1/3})$,  $m \to \infty$, $n \tau_n^3 \to \infty$,
\begin{align}
 \sup_{t \in [0,1]} \lt| \hat{{ \Sigma}}_d(t) - \tilde{{ \Sigma}}_d(t) \rt| &= \Op\left[ \surd{m} \{(n\tau_n^{3/2})^{d-1/2} + \tau_n^3\}\tau_n^{-1/\kappa} +  m \{(n\tau_n^{3/2})^{2d-1}+\tau_n^6\}\tau_n^{-1/\kappa}\right]\\ &= \op(m^{2d}).\label{eq:Sigmad_step1}
\end{align}
  
Recall that ${ \Sigma}^A(t) = \sum_{j=m}^{n-m} \frac{m\omega(t, j)}{2} A_{j,m}A_{j,m}^{\T}$, and similar to \eqref{eq:Sigma_decomp}, we have
  \begin{align}
    \sup_{t \in \I}\lt| \hat{{ \Sigma}}_d(t) - \tilde{{ \Sigma}}_d(t) \rt| \leq \sup_{t \in \I}\lt|\breve { \Sigma}_d(t) - { \Sigma}^A(t) \rt| + \sup_{t \in \I} \left|\sum_{j=m}^{n-m} m\omega(t, j) \tilde { \Delta}^{(d)}_{j}A_{j,m}^{\T}\right|,
    \label{eq:correctiond_sup}
  \end{align}
  where the first term has been  investigated in \cref{lm:correctiond}.

  Define $
      h_{s,j}^{(d)}(t) = \proj_{j-s}(\tilde { \Delta}_{j}^{(d)}A_{j,m}^{\T}).
 $
   Let $N = N_n = \lfloor n\tau_n \rfloor$. 
  Observe that under Assumption  \ref{E:HW},
  \begin{align}
  \sum_{j=m}^{n-m} \omega(t, j) \tilde { \Delta}^{(d)}_{j}A_{j,m}^{\T} = \sum_{s = 0}^N \sum_{j=m}^{n-m} \omega(t, j) h_{s,j}^{(d)} + \sum_{j=m}^{n-m} \omega(t, j) \sum_{s = N+1}^{\infty} h_{s,j}^{(d)}= S_1 + S_2,\label{eq:Sd}
  \end{align}
  where $S_1$ and $S_2$ are defined in the obvious way. To proceed, we first calculate $\| h_{s,j}^{(d)}(t)\|_{\kappa}. $
  
  By \cref{lm:psum_dn}, $\sup_j\|\tilde{ \Delta}^{d}_j\|_{2\kappa} = O(m^{d-1/2})$.
  Then,
  we have
  \begin{align}
    \|h_{s,j}^{(d)}(t) \|_{\kappa}
    &= O\left\{m^{d-3/2} \min(chi^{s-m},1) + (1/n + m^{-3/2})\sum_{i=-m+1}^m \psi_{s + i}\right\}.\label{eq:hd}
  \end{align}
  Since $h_{s,h}^{(d)}(t)$ are martingale differences with respect to $j$, since $m/(n\tau_n) \to 0$, we have for $t \in \I$,
  \begin{align}
  \| S_1 \|_{\kappa} &\leq \sum_{s = 0}^{N}  \left\| \sum_{j=m}^{n-m} \omega(t,j) h_{s,j}^{(d)} \right\|_{\kappa}= O\left[ \sum_{s = 0}^{N} \left\{ \sum_{j=m}^{n-m} \omega^2(t,j) \|h_{s,j}^{(d)} \|_{\kappa}^2 \right\}^{1/2} \right]
   = O\{m^{-1/2} (n \tau_n)^{d-1/2}\}.\label{eq:S1d}
  \end{align}
  By \eqref{eq:hd} and triangle inequality, elementary calculation shows that 
  \begin{align}
  \|S_2\|_{\kappa}   &\leq \sum_{j=m}^{n-m} \omega(t,j) \left\| \sum_{s = N+1}^{\infty} h_{s,j}^{(d)} \right\|_{\kappa}  = O\left\{ \left(\sum_{s = N+1}^{\infty} \max_{m \leq j \leq n - m}\| h_{s,j}^{(d)}\|_{\kappa}^2\right)^{1/2} \right\}
   = O\{m^{-1/2}(n\tau_n)^{d-1/2}\}.  \label{eq:S2d}
  \end{align}
  where the first big $O$ follows from the fact that $h_{s,j}^{(d)}$ are martingale differences with respect to $s$.
  Combining \eqref{eq:S1d} and \eqref{eq:S2d}, 
  by chaining argument in Proposition B.1 in  \citep{dette2018change},  we have 
   \begin{align}
  \sup_{t \in \I} \left |\sum_{j=m}^{n-m} m\omega(t, j) \tilde { \Delta}^{(d)}_{j}A_{j,m}^{\T} \right| = \Op\left\{\surd{\frac{m}{(n\tau_n)^{1-2d}}}\tau_n^{-1/\kappa}\right\}.
  \label{eq:supdn2}
  \end{align}
  Combining \cref{lm:correctionlocal} and \eqref{eq:supdn2}, we have shown \eqref{eq:Sigmad_step1}.
\par
\par
\textbf{Step 2}:
Define $\bar{{Q}}_{k,m}^{(d)} = \sum_{i=k}^{k+m-1}{ \mu}_W(t_i) e_i^{(d)} (k = 1,\ldots, n - m + 1)$, $$\bar { \Delta}_{j}^{(d)} = \frac{\bar{{Q}}^{(d)}_{j-m+1, m}- \bar{{Q}}^{(d)}_{j+1, m}}{m}, \quad \bar{{ \Sigma}}_d(t)=\sum_{j=m}^{n-m} \frac{m \bar { \Delta}_{j}^{(d)}(\bar { \Delta}^{(d)}_{j})^{\T}}{2} \omega(t, j).$$ We shall show that under the bandwidth condition $m \tau_n^{3/2}/\log n\to \infty$, $\kappa \geq 2/(3d)$, \begin{align}
  \sup_{t \in \I}\lt|\tilde{{ \Sigma}}_d(t) -\bar{{ \Sigma}}_d(t) \rt| = \Op\{m^{d}(\log n)^d\tau_n^{-1/\kappa}\}= \op(m^{2d}).\label{eq:Sigmad_step2}
\end{align}
Following similar arguments in \cref{cor:5.2}, we have 
\begin{align}
  \max_{1 \leq k \leq n - m + 1}\lt\|\bar{{Q}}_{k,m}^{(d)} - \tilde{{Q}}_{k,m}^{(d)} \rt\|_{2\kappa} = O(\surd{m} (\log n)^d).
  \label{eq:cor_prop5.2}
\end{align}

Using \eqref{eq:cor_prop5.2}, and the fact $\sup_j \|\tilde{ \Delta}_{j}^{(d)}\|_{2\kappa} = O(m^{d-1/2})$, $\sup_j \|\bar { \Delta}_{j}^{(d)}\|_{2\kappa} = O(m^{d-1/2})$, we have 
\begin{align}
  \left\| \tilde { \Delta}_{j}^{(d)}\tilde { \Delta}_{j}^{(d), \T} -  \bar { \Delta}_{j}^{(d)}\bar { \Delta}_{j}^{(d), \T} \right\|_{\kappa} &\leq \left\| \tilde { \Delta}_{j}^{(d)} -  \bar { \Delta}_{j}^{(d)}\right\|_{2\kappa} \left\|\bar { \Delta}_{j}^{(d), \T} \right\|_{2\kappa}\\ &  + \left\| \tilde { \Delta}_{j}^{(d)} \right\|_{2\kappa} \left\|\lt(\tilde { \Delta}_{j}^{(d)} - \bar { \Delta}_{j}^{(d)}\rt)^{\T} \right\|_{2\kappa}\\ &= O\{m^{d-1}(\log n)^d\}.
\end{align}
Since $m \tau_n^{3/2}/\log n \to \infty$, \eqref{eq:Sigmad_step2} follows from triangle inequality and Proposition B.1 in  \citep{dette2018change}.
\par
\par
\textbf{Step 3}:
Let $\zeta_j = \sum_{i=j}^{\infty} \proj_j e_{i}$, $\zeta_j^{\circ} = \zeta_j (t_j) = \sum_{i=j}^{\infty} \proj_j H(t_j,\FF_i)$. Define $\overline{Z}_{k,m} = \sum_{j=0}^L \psi_j  \sum_{i=k}^{k+m-1}{ \mu}_W(t_i) \zeta^{\circ}_{i-j}$,
\begin{align}
  { \Delta}_j^{(d), \circ} = \frac{\overline{{Z}}_{j-m+1,m} -\overline{{Z}}_{j+1,m}}{m}, \quad { \Sigma}_d^{\circ}(t) = \sum_{j=m}^{n-m} \frac{m { \Delta}_j^{(d), \circ} ({ \Delta}_j^{(d), \circ })^{\T}}{2}\omega(t,j).
\end{align}
 Let $L= M m^{1 + \frac{1}{2d+1}} \tau_n^{1/2}$, where $M$ is a sufficiently large constant. We will show that
\begin{align}
  \sup_{t \in \I}|\bar{{ \Sigma}}_d(t) - { \Sigma}_d^{\circ}(t)|  = \Op\left\{m^{2d}\lt(m^{- \frac{1/2-d}{2d+1}}  \tau_n^{d/2-1/4-1/\kappa}\rt) \right\} = \op(m^{2d}).
  \label{eq:Sigmad_step3}
\end{align}
 Since $m\tau_n \to \infty$, $m^2\tau_n^{1/2}/n  = O(n^{-1/3}\tau_n^{1/2}) = o(1)$, then  $L/m \to \infty$, $L/m^2 \to 0$, $m^{1+1/(2d)}/L \to \infty$, $L(\log n )^2/n \to 0$.
 Observe that 
\begin{align}
  &\lt\|\bar{{ \Sigma}}_d(t) - { \Sigma}_d^{\circ}(t)\rt\|_{\kappa}\\
  &\leq \sum_{j=m}^{n-m} \frac{m\omega(t,j)}{2}\left\|\bar { \Delta}_{j}^{(d)}\lt(\bar { \Delta}_{j}^{(d)}\rt)^{\T} -  { \Delta}_j^{(d), \circ} \lt({ \Delta}_j^{(d), \circ }\rt)^{\T}\right\|_{\kappa} \\ 
  & \leq m \max_{m \leq j \leq n-m} \left(\lt\|\bar { \Delta}_{j}^{(d)}\rt\|_{2\kappa}\lt\|\lt(\bar { \Delta}_{j}^{(d)}\rt)^{\T} - \lt({ \Delta}_j^{(d), \circ }\rt)^{\T}\rt\|_{2\kappa} + \lt\|\bar { \Delta}_{j}^{(d)} -  { \Delta}_j^{(d), \circ}\rt\|_{2\kappa} \lt\|\lt({ \Delta}_j^{(d), \circ }\rt)^{\T} \rt\|_{2\kappa} \right), \label{eq:step3a}
\end{align}
where 
$$
 \lt\|\bar { \Delta}_{j}^{(d)} -  { \Delta}_j^{(d), \circ}\rt\|_{2\kappa} \leq \frac{1}{m}\left(\lt\|\bar{{Q}}_{j-m+1,m}^{(d)} - \overline{{Z}}_{j-m+1,m} \rt\|_{2\kappa} + \lt\|\bar{{Q}}_{j+1,m}^{(d)} -\overline{{Z}}_{j+1,m} \rt\|_{2\kappa}\right).
$$

Define $$\overline{W}_{k,m} = \sum_{j=0}^L \psi_j \sum_{i=k}^{k+m-1}{ \mu}_W(t_i) e_{i-j},\quad  1 \leq k \leq n - m + 1.$$Then, we have for $1 \leq k \leq n - m + 1$ that 
\begin{align}
    \lt\|\bar{{Q}}_{k-m+1,m}^{(d)}  - \overline{Z}_{k-m+1,m}\rt\|_{2\kappa}  &\leq \lt\|\bar{{Q}}_{k-m+1,m}^{(d)} -  \overline{W}_{k-m+1,m}\rt\|_{2\kappa} +  \lt\|\overline{W}_{k-m+1,m} - \overline{Z}_{k-m+1,m}\rt\|_{2\kappa}\\
    &= C_{2\kappa , 1} + C_{2\kappa , 2},\label{eq:step3c}
\end{align}
where $C_{2\kappa , 1}$ and $C_{2\kappa , 2}$ are defined in the obvious way. 

 Under conditions \ref{B:H_delta} and \ref{Ass-W}, by Burkholder's inequality, we have for $k \geq m$,
\begin{align}
 C_{2\kappa , 1}&= \left \|\sum_{j= L + 1}^{\infty} \left\{\sum_{i=0}^{\min\{m, j - L\} - 1} \psi_{j-i}{ \mu}_W\left(\frac{k-i}{n}\right)\right\} e_{k-j}\right\|_{2\kappa} \\
  & \leq \sum_{t = 0}^{\infty} \left[\sum_{j= L + 1}^{\infty} \left\{\sum_{i=0}^{\min\{m, j - L\} - 1} \psi_{j-i}{ \mu}_W\left(\frac{k-i}{n}\right)\right\}^2 \|\proj_{k-j-t} e_{k-j}\|_{2\kappa}^2 \right]^{1/2}\\ 
  & \leq \left[\sum_{j = L + 1}^{\infty} \left\{\sum_{i=0}^m \psi_{j-i}{ \mu}_W\left(\frac{k-i}{n}\right)\right\}^2 \right]^{1/2}  \sum_{t = 0}^{\infty} \delta_{2\kappa}(H, t, (-\infty, 1]) \\ 
  & = O(L^{d-1/2} m).\label{eq:step3_1}
\end{align}

Then, we consider the upper bound of $C_{2\kappa , 2}$. Let $p_{j,k,m} = \sum_{i=(j-L)_+}^{(m-1) \wedge j} \psi_{j-i} { \mu}_W\left(\frac{k-i}{n}\right)$, $m \leq k \leq n-m$.
Then, for $m \leq k \leq n-m$, we can write 
$$
  \overline{W}_{k-m+1,m} - \overline{Z}_{k-m+1,m} = \sum_{j= 0}^{L+m-1} p_{j,k,m} (e_{k-j} - \zeta^{\circ}_{k-j}).
$$

After a careful check on Lemma 2 in \citep{wu2011gaussian}, we have
\begin{align}
  \max_{0 \leq l \leq L+m-1} \left\| \sum_{j=0}^{l} (e_{k-j} - \zeta_{k-j}) \right\|_{2\kappa}^2 \leq M \sum_{i=1}^{L+m} \left\{\sum_{j=i}^{\infty} \delta_{2\kappa}(H,j,(-\infty, 1])\right\}^2 = O(1),\label{eq:u_zeta}
\end{align}

where $M$ is a sufficiently large constant.
Following the proof of  Corollary 2  in \citep{wu2011gaussian}, under Assumption  \ref{A:H_long}, we obtain
\begin{align}
  \|\zeta_i - \zeta_i^{\circ}\|_{2\kappa}
  & = O\{(\log n)^2/n\}.\label{eq:zeta_zeta}
\end{align}
Observe that
\begin{align}
  \sum_{l=0}^{L+m-1} |p_{l,k,m}- p_{l-1,k,m}|
= O(L^d).\label{eq:p_dif}
\end{align}
Then, by the summation-by-parts formula, combining \eqref{eq:u_zeta}, \eqref{eq:zeta_zeta} and \eqref{eq:p_dif},  since $L(\log n)^2/n \to 0$, we have 
\begin{align}
  \|\overline{W}_{k-m+1,m} - \overline{Z}_{k-m+1,m}\|_{2\kappa} &= \left\|\sum_{j= 0}^{L+m-1} p_{j,k,m} (e_{k-j} - \zeta^{\circ}_{k-j}) \right\|_{2\kappa} 
  = O(L^d). \label{eq:step3_2}
\end{align}

Since $\sup_j \|\bar { \Delta}_{j}^{(d)}\|_{2\kappa} = O(m^{d-1/2})$, $\sup_j \|{ \Delta}_{j}^{(d),\circ}\|_{2\kappa} = O(m^{d-1/2})$, by \eqref{eq:step3a},
 \eqref{eq:step3c}, \eqref{eq:step3_1} and \eqref{eq:step3_2},  since $L / m^2 \to 0$, we obtain 
\begin{align}
  \|\bar{{ \Sigma}}_d(t) - { \Sigma}_d^{\circ}(t)\|_{\kappa}   = O( L^{d-1/2} m^{d+1/2}).
\end{align}
Under the  conditions $m\tau_n^{3/2} \to \infty$ and $\kappa \geq 4/(1/2 - d)$, 
\eqref{eq:Sigmad_step3} then follows from  Proposition B.1 in \citep{dette2018change}.
\par
\par
\textbf{Step 4}: 
We shall show that under condition $m \tau_n^{3/2} \to \infty$,
\begin{align}
  \sup_{t \in \I} \left| { \Sigma}_d^{\circ}(t) - \E { \Sigma}_d^{\circ}(t)\right| = \Op\left\{m^{2d} (m\tau_n^{3/2})^{-1/2} \right\} = \op(m^{2d}).\label{eq:Sigmad_step4}
\end{align}
 Following similar arguments in the proof of Theorem 3.1 \citep{wu2006invariance}, 
for $k =0,\ldots, \lf \frac{n - m}{2L}\rf$,  let
$D_{k,i} =  \Delta_{2kL+i}^{(d), \circ} (\Delta_{2kL+i}^{(d), \circ })^{\T}- \E\{ \Delta_{2kL+i}^{(d), \circ} ( \Delta_{2kL+i}^{(d), \circ })^{\T}|\F_{2kL+i - 2L})\}$,$(i = 0,\ldots, 2L-1)$, and 
$$
E_{h} = \E\{\Delta_{h}^{(d), \circ} (\Delta_{h}^{(d), \circ })^{\T}|\F_{h- 2L}\} - \E\{\Delta_{h}^{(d), \circ} ( \Delta_{h}^{(d), \circ })^{\T}\} ,\quad (h = m,,\ldots,  n-m).$$
 Let $D_{k,i} = 0$, if $2kL+i < m$ or $2kL+i  > n-m$.
Then, we have
\begin{align}
   { \Sigma}_d^{\circ}(t) - \E { \Sigma}_d^{\circ}(t) 
  & =  \sum_{h=m}^{n-m} \frac{m \omega(t,h)}{2} E_h + \sum_{i = 0}^{2L-1}\sum_{k=0}^{\lf n/(2L)\rf} \frac{m \omega(t, 2kL +i)}{2}  D_{k,i}.\label{eq:Sigmao}
\end{align}

Recall that $ \Delta_h^{(d), \circ} = \frac{\overline{{Z}}_{h-m+1,m} -\overline{{Z}}_{h+1,m}}{m}$, in which $$\overline{Z}_{h,m} = \sum_{j=0}^L \psi_j  \sum_{i=h}^{h+m-1}{ \mu}_W(t_i) \zeta^{\circ}_{i-j} = \sum_{j=0}^{L+m-1} p_{j,h + m -1,m} \zeta^{\circ}_{h + m -1 -j},$$
 where $p_{j, h + m - 1,m} = \sum_{i=(j-L)_+}^{(m-1) \wedge j} \psi_{j-i}{ \mu}_W\left(\frac{h + m -1 -i}{n}\right)$,  $\{\zeta^{\circ}_j\}$ are martingale differences. \par Under the geometric measure contraction condition, for $j = 0,\ldots, L$, we have
 \begin{align}
  \|\E\{(\zeta^{\circ}_{r- j})^2 | \FF_{r - 2L}\} - \E\{(\zeta^{\circ}_{r - j})^2)\} \| = O(\chi^L).\label{eq:zetaGMC}
 \end{align}
By \cref{lm:basic} and elementary calculation, we have  
\begin{align}
  \sum_{j=0}^{L+m-1} | p_{j,s,m}p^{\T}_{j,s,m} | = O(m^{2d+1}),\quad  \sum_{j=0}^{L-1} | p_{j,s,m}p^{\T}_{j+m,s,m}| = O(m^{2d+1}).\label{eq:pp}
\end{align}
Therefore, combining \eqref{eq:zetaGMC} and \eqref{eq:pp},  we derive
\begin{align}
\|E_h\| = O(m^{2d-1}\chi^L).\label{eq:Eki}
\end{align}

By Burkholder's inequality, uniformly for all $i = 0,\ldots, 2L-1$,
\begin{align}
  \left\|\sum_{k=1}^{\lf n/(2L)\rf}\omega(t, 2kL + i) D_{k,i} \right\|^2 & \leq C \sum_{k=1}^{\lf n/(2L)\rf}\omega^2(t, 2kL + i) \| D_{k,i}\|^2\\  
   &\leq 2C \sum_{k \in \{r: |2rL + i - nt| \leq n\tau_n \}}  (\| \Delta_{2kL+i}^{(d), \circ }\|_4 \|( \Delta_{2kL+i}^{(d), \circ })^{\T}\|_4)^2 /(n\tau_n)^2\\ 
  &= O\{(n\tau_n)^{-1}L^{-1} m^{4d-2}\},\label{eq:Dki}
\end{align}
where $C$ is a sufficiently large constant. Therefore, since $L /(n\tau_n) = m^{1 + \frac{1}{2d+1}} /(n \tau_n^{1/2}) = O\{1/(m\tau_n^{1/2})\}$,  by \eqref{eq:Sigmao}, \eqref{eq:Eki} and \eqref{eq:Dki},  and Proposition B.1 in  \citep{dette2018change},  we have shown \eqref{eq:Sigmad_step4}.

\par
\par
\textbf{Step 5}:
Recall that $L= M m^{1 + \frac{1}{2d+1}} \tau_n^{1/2}$, $m\to \infty$, $m = O(n^{1/3})$. It follows that $m^{1+\frac{1}{d+1}}/L  \to \infty$, $L^2/(mn) =O(m^3\tau_n/n) = o(1)$.
We shall show that uniformly for $s \in \I$, 
\begin{align}
  m^{-2d} \E { \Sigma}_d^{\circ}(s) = \kappa_2(d) { \mu}_W(s){ \mu}_W^{\T}(s){\sigma}^2_H(s)+ O(f_n),\label{eq:Sigmad_step5}
\end{align}
where $\kappa_2(d) = \Gamma^{-2}(d+1)\int_{0}^{\infty}\{t^d - (t-1)_+^d\}\{2t^d - (t-1)_+^d - (t+1)^d\} dt$,
and $f_n = m^{-d} + \tau_n^2 + L^{d+1}/m^{d+2} + L^2/(mn) +  (L/m^2)^{-\frac{1}{d-2}} = o(1).$
Note that $\overline{Z}_{k-m+1,m} = \sum_{j= 0}^{L+m-1} p_{j,k,m}  \zeta^{\circ}_{k-j}. $ Recall 
\begin{align}
  p_{j,k,m} &= \sum_{i=(j-L)_+}^{(m-1) \wedge j} \psi_{j-i}{ \mu}_W\left(\frac{k-i}{n}\right) = 
  \begin{cases}
    \sum_{i=0}^{j-1} \psi_{j-i}{ \mu}_W\left(\frac{k-i}{n}\right) + { \mu}_W(\frac{k-j}{n}), & j = 0,\ldots, m - 1\\ 
    \sum_{i=0}^{m-1} \psi_{j-i}{ \mu}_W\left(\frac{k-i}{n}\right), &j =  m,\ldots,  L\\ 
    \sum_{i=j-L}^{m-1} \psi_{j-i}{ \mu}_W\left(\frac{k-i}{n}\right)= O(mL^{d-1}), & j= L+1,\ldots,L+m.\\ 
  \end{cases}
\end{align}

Then, approximate $p_{j,k,m}$ by integrals. When $j = 0,\ldots, m - 1$, by the continuity of ${ \mu}_W$ and \cref{lm:basic},
\begin{align}
  m^{-d} \Gamma(d) p_{j,k,m}  = d^{-1}{ \mu}_W\left(k/n\right)   (j/m)^d + O(m^{-d} + m/n).
\end{align}
When $j =  m,\ldots,  L$,
\begin{align}
  m^{-d}  \Gamma(d)p_{j,k,m} = d^{-1}{ \mu}_W\left(k/n\right)  \{(j/m)^d - ((j+1)/m-1)^d\} + O\{m/n + m^{-1}(j/m-1)^{d-1} \}.
\end{align}
Since $(\zeta_i^{\circ})$ are martingale differences, and $\sigma_H(t_j) = \| \sum_{i=j}^{\infty} \proj_j H(t_j,\FF_i)\| = \| \zeta^{\circ}_j \|$, \eqref{eq:Sigmad_step5} then follows from elementary calculation.
\par
\par
\textbf{Step 6}:
Let $g_{\kappa,n} = m^{-d} (\log n)^{d} \tau_n^{-1/\kappa} +\surd{m}\{ (n\tau_n^{3/2})^{d-1/2} + \tau_n^3\}\tau_n^{-1/\kappa} +  m \{(n\tau_n^{3/2})^{2d-1} + \tau_n^6\}\tau_n^{-1/\kappa} + m^{- \frac{1/2-d}{2d+1}}  \tau_n^{d/2-1/4-1/\kappa} +  (m\tau_n^{3/2})^{-1/2} + f_n$.
Summarizing Step 1-5, we have
\begin{align}
  \sup_{t \in \I}\left| m^{-2d}\hat{{ \Sigma}}_d(t) - \kappa_2(d)\sigma_H^2(t) { \mu}_W(t) \mu^{\T}_W(t)\right|   = \Op (g_{\kappa, n} ) = 
  \op(1).\label{eq:fixdrate}
\end{align}

\end{proof}
\subsection{Proof of \texorpdfstring{\cref{lm:Sigma_dn}}{Proof of Theorem 4}}

  We define the notation under the local alternatives by replacing $d$ with $d_n$.
In the following, 
\cref{lm:correctionlocal} studies the asymptotic behavior of the bias correction term under the local alternatives. \cref{lm:psum_dn} investigates the physical dependence of $U^{(d_n)}(t, \FF_i)$ as well as the order of its partial sum process under the local alternatives. 
\begin{lemma}\label{lm:correctionlocal}
Under the conditions of \cref{lm:Sigma_dn}, we have
  \begin{align}
  \sup_{t \in \I}\lt| \breve { \Sigma}_{d_n}(t) - { \Sigma}^A (t)\rt|= \Op\left( \surd{m}\tau_n^{3-1/\kappa} + \surd{\frac{m}{n\tau_n^{3/2+2\kappa}}}\right)=\op(1).
  \end{align}
  \end{lemma}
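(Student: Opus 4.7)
The plan is to imitate the proof of \cref{lm:correction} (the short-range-dependence version), since the architecture of $\breve{\Sigma}_{d_n}(t) - \Sigma^A(t)$ is identical, with $e_i$ replaced by the long-memory series $e_i^{(d_n)}$ where $d_n = c/\log n$. The key point we would exploit is that any factor of the form $N^{d_n}$ with $N = n^{O(1)}$ or $N = (n\tau_n^{3/2})^{O(1)}$ satisfies $N^{d_n} = \exp(d_n \log N) = O(1)$, because $d_n \log n \to c < \infty$ and $\tau_n$ decays only polynomially. Thus rates established under short-range dependence should transfer to the local alternative up to harmless constants, giving the same final bound as in \cref{lm:correction}.

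First, I would rewrite $\breve{\Sigma}_{d_n}(t) - \Sigma^A(t)$ as in display \eqref{eq:fixtA}, which reduces the problem to bounding $\max_j \|\hat{A}_{j,m}^{(d_n)} - A_{j,m}\|_{2\kappa}$ on the good event $A_n = \{\sup_t |\Omega(t) - M^+(t)| \leq l_n \cdots\}$, since $\|A_{j,m}\|_{2\kappa} = O(m^{-1/2})$ is unchanged. As in \cref{lm:bias}, this reduces in turn to the uniform control
\begin{equation*}
\sup_{t \in \I} \|\{\beta(t) - \breve{\beta}^{(d_n)}(t)\} 1(A_n)\|_{4\kappa} = O\{\tau_n^3 + (n\tau_n^{3/2})^{d_n - 1/2}\},
\end{equation*}
which, using $(n\tau_n^{3/2})^{d_n} = O(1)$, collapses to the familiar $O\{\tau_n^3 + (n\tau_n^{3/2})^{-1/2}\}$ rate. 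The invertibility argument for $\Omega(t)$ on $A_n$ is identical because $\Omega(t)$ does not involve the errors.

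The only substantive new estimate is for $W_2^{(d_n)}(t) := \sum_{j=m}^{n-m} \tilde{\omega}(t,j) \grave{\Delta}_j^{(d_n)}/2$, where $\grave{\Delta}_j^{(d_n)}$ is the obvious $d_n$-analogue of $\grave{\Delta}_j$. Here I would appeal to \cref{lm:psum_dn} to obtain a physical-dependence bound of the form $\|\proj_{j-s}\grave{\Delta}_j^{(d_n)}\|_{4\kappa} = O\{m^{-1}\sum_{i=-m+1}^{m}(i+s)_+^{d_n-1}\}$, exactly parallel to \eqref{eq:grave_deltad}. Splitting $W_2^{(d_n)}$ into the block of martingale differences with $0 \leq s \leq N := \lfloor n\tau_n^{3/2}\rfloor$ and the tail $s > N$, Burkholder's inequality and the summability of $(i+s)_+^{d_n-1}$ (because $d_n < 1/2$) yield $\sup_t \|W_2^{(d_n)}(t)\|_{4\kappa} = O\{(n\tau_n^{3/2})^{d_n - 1/2}\} = O\{(n\tau_n^{3/2})^{-1/2}\}$. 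The treatment of $W_1(t)$, involving only the covariate differences and $\beta(\cdot)$, is verbatim from \cref{lm:correction} since no errors appear there.

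The final step would be the standard chaining argument from Proposition~B.1 of \citep{dette2018change} to upgrade the pointwise $\|\cdot\|_{4\kappa}$ bound to a uniform-in-$t$ stochastic bound, picking up the extra $\tau_n^{-1/\kappa}$ factors in the statement. The main obstacle I anticipate is the bookkeeping in the $W_2^{(d_n)}$ step: one must verify that every place where the proof of \cref{lm:correction} used geometric decay $\chi^s$ can be replaced by polynomial decay of order $d_n - 1$ without losing the rate, which relies on the $\kappa$-moment and bandwidth conditions of \cref{lm:Sigma_dn} ensuring $(n\tau_n^{3/2})^{2d_n - 1}\tau_n^{-2/\kappa} = o(1)$ so that the contribution of the long-memory tail is dominated by $\surd{m/(n\tau_n^{3/2 + 2/\kappa})}$.
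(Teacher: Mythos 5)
Your proposal is correct and takes essentially the same route as the paper, whose proof of this lemma simply reruns the fixed-$d$ argument of \cref{lm:correctiond} with $d=d_n=c/\log n$; the one substantive step there is exactly your bound $\sup_{t\in\I}\|W_2^{(d_n)}(t)\|_{4\kappa}=O\{(n\tau_n^{3/2})^{d_n-1/2}\}=O\{(n\tau_n^{3/2})^{-1/2}\}$ from the $0\le s\le N$ versus $s>N$ martingale split, with $W_1$ handled verbatim and the rates collapsing to the short-range ones because $(n\tau_n^{3/2})^{d_n}=O(1)$. The only quibble is a citation: the polynomial decay $\|\proj_{j-s}\grave{\Delta}_j^{(d_n)}\|_{4\kappa}=O\{m^{-1}\sum_{i}(i+s)_+^{d_n-1}\}$ follows from \cref{Prop31} and \cref{lm:delta_xed} rather than from \cref{lm:psum_dn}, which only controls the partial sums.
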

  \begin{proof}[ of \cref{lm:correctionlocal}]

  Letting $d_n = c/\log n$, the proof follows from similar steps in \cref{lm:correctiond}.
  \end{proof}

  \begin{lemma}\label{lm:psum_dn}
    Under Assumptions \ref{Ass-W} and \ref{assumptionHp}, $m \to \infty$, $m = O(n)$, we have
      \begin{align}
      \sup_{1 \leq k \leq n - m + 1}\left\|\sum_{i=k}^{k+m-1} x_i e_i^{(d_n)}\right\|_4 = O(\surd{m}).
      \end{align}
    \end{lemma}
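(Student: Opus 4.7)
The plan is to split the sum by recentering the covariates around their mean and then control each piece by standard long-memory partial-sum arguments, exploiting the fact that $d_n=c/\log n$ keeps the quantity $m^{d_n}$ bounded uniformly in $m\leq n$. Concretely, write
\begin{align*}
\sum_{i=k}^{k+m-1} x_i e_i^{(d_n)} \;=\; \sum_{i=k}^{k+m-1}\mu_W(t_i)\,e_i^{(d_n)} \;+\; \sum_{i=k}^{k+m-1}\{x_i-\mu_W(t_i)\}\,e_i^{(d_n)} \;=:\; S_1(k)+S_2(k),
\end{align*}
and treat $S_1$ and $S_2$ separately. The first step common to both is to record that $\psi_j(d_n)\asymp j^{d_n-1}/\Gamma(d_n)$, and that $m^{2d_n}=\exp(2c\log m/\log n)\leq e^{2c}$ whenever $m\leq n$, so every occurrence of $m^{2d_n}$ can be absorbed into a constant.

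For $S_1(k)$, I would expand $e_i^{(d_n)}=\sum_{j\ge 0}\psi_j(d_n)e_{i-j}$ and interchange summations to write $S_1(k)=\sum_{l\le k+m-1} c_{l,k}\,e_l$, where $c_{l,k}=\sum_{i=(k\vee l)}^{k+m-1}\mu_W(t_i)\psi_{i-l}(d_n)$ is deterministic and bounded by $\|\mu_W\|_\infty$ times a sum of filter coefficients. A direct computation using $\psi_j(d_n)\asymp j^{d_n-1}/\Gamma(d_n)$ gives $\sum_l c_{l,k}^2=O(m^{1+2d_n})=O(m)$ uniformly in $k$. Since $H$ has $16\kappa$-order short-range dependence by Assumption 7 and $(e_l)$ admits a physical-dependence representation as a function of $\F_l$, I can apply a Burkholder-type $\mathcal{L}^4$ inequality for weighted sums of short-range dependent sequences (e.g.\ Theorem 1 of \citep{wu2007strong} combined with the geometric decay of $\delta_4(H,k)$) to obtain $\|S_1(k)\|_4\lesssim (\sum_l c_{l,k}^2)^{1/2}=O(\sqrt m)$ uniformly in $k$.

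For $S_2(k)$, I would view $V^{(d_n)}(t_i,\F_i):=\{W(t_i,\F_i)-\mu_W(t_i)\}\,H^{(d_n)}(t_i,\F_i)$ as a centered locally stationary process and bound its physical dependence in $\mathcal{L}^4$. Writing $H^{(d_n)}(t_i,\F_i)=\sum_{k\ge 0}\psi_k(d_n)H(t_{i-k},\F_{i-k})$ and using $\delta_{8\kappa}(W,k)=O(\chi^k)$ together with $\delta_{8\kappa}(H,k)=O(\chi^k)$ yields
\begin{align*}
\delta_4(V^{(d_n)},k)\;\lesssim\;\sum_{j=0}^{k}\psi_j(d_n)\chi^{k-j}+\chi^k\sum_{j>k}\psi_j(d_n) \;\lesssim\; \frac{k^{d_n-1}}{\Gamma(d_n)},
\end{align*}
for $k\ge 1$, and $\|V^{(d_n)}\|_4$ is uniformly bounded. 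A standard partial-sum inequality for causal $\mathcal{L}^4$ processes under physical dependence then gives $\|S_2(k)\|_4\lesssim\sqrt m\sum_{j=0}^{m}\delta_4(V^{(d_n)},j)\vee\sqrt{m\sum_j\delta_4^2}=O(\sqrt m\cdot m^{d_n})=O(\sqrt m)$, where again the bound $m^{d_n}\le e^c$ is essential. Combining $S_1$ and $S_2$ by the triangle inequality and noting the bounds are uniform in $k$ yields the claim.

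The main technical obstacle is the second step: the filter defining $H^{(d_n)}$ acts on \emph{shifted} arguments $t_{i-k}$ of a locally stationary (not stationary) process, so the physical dependence calculation must be carried out carefully and the bound $\delta_4(V^{(d_n)},k)\lesssim k^{d_n-1}/\Gamma(d_n)$ needs Assumption 7 applied on $(-\infty,1]$ to control $\sup_{t}\|H(t,\F_0)\|_{16\kappa}$ and its Lipschitz modulus. A secondary but crucial point is tracking the $d_n$-dependence so that the $O(\sqrt m)$ is genuinely uniform; this reduces to verifying $m^{d_n}\lesssim 1$, which holds because $m\le n$ and $d_n\log m\le c$.
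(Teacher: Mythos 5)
Your proposal is correct in substance but takes a different route from the paper's. The paper does not center the covariates: it treats $U^{(d_n)}(t,\F_i)=W(t,\F_i)H^{(d_n)}(t,\F_i)$ as a single process, bounds $\delta_4(U^{(d_n)},k)=O\{\psi_k(d_n)\}$ via the product rule for physical dependence (its Lemma C.6 together with Proposition C.2), applies Burkholder once to get $\|\sum_i x_ie_i^{(d_n)}\|_4^2\le C\sum_l\{\sum_i\delta_4(U^{(d_n)},i-l)\}^2$, and then invokes its Lemma C.7(a), which shows $\sum_{l\ge 0}\{(h+l)^{d_n}-l^{d_n}\}^2=O(h/\log h)$ --- the precise uniformity-in-$d_n$ statement that your remark ``$m^{d_n}\le e^{c}$'' encapsulates. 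Your $S_1+S_2$ split via $\mu_W$ is the decomposition the paper uses elsewhere (Corollary C.1, Step 3 of Theorem 4) to extract the \emph{sharper} rate $O(\surd{m}(\log n)^{d})$ for the centered piece; for the plain $O(\surd m)$ bound here it buys nothing, since both pieces are handled by the same projection--Burkholder calculation, but it is not wrong. Three local slips are worth fixing. First, the term $\chi^k\sum_{j>k}\psi_j(d_n)$ in your dependence bound is divergent as written; the correct second term from the product rule is $\|H^{(d_n)}\|_{8}\,\delta_{8}(W,k)=O(\chi^k)$, with $\|H^{(d_n)}\|_{8}=O(1)$ established by Burkholder. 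Second, the resulting bound is $\delta_4(V^{(d_n)},k)=O\{k^{d_n-1}/\Gamma(d_n)+\chi^k\}$; you cannot absorb $\chi^k$ into $k^{d_n-1}/\Gamma(d_n)\asymp d_nk^{d_n-1}$, which tends to $0$ for fixed $k$ as $n\to\infty$ --- though the geometric term is harmless in the final sum. Third, the ``standard partial-sum inequality'' $\|\sum_iV_i\|_4\lesssim\surd m\sum_{j=0}^{\infty}\delta_4(j)$ is not applicable because $\sum_{j}j^{d_n-1}$ diverges for each fixed $n$; you must use the projection form $\|\sum_iV_i\|_4^2\le C\sum_l\{\sum_i\delta_4(i-l)\}^2$ (as the paper does), after which the $O(m)$ bound follows from exactly the computation you sketch for $\sum_lc_{l,k}^2$.
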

    \begin{proof}[Proof of \cref{lm:psum_dn}]
    Define $\tilde {{Q}}_{k,m}^{(d_n)} = \sum_{i=k}^{k+m-1} x_i e_i^{(d_n)}$.
    By similar arguments in \cref{lm:delta_xed}, we obtain 
    \begin{align}
    \delta_4(U^{(d_n)}, k)  = O\{\psi_k(d_n)\},\quad k \geq 0,
      \label{eq:delta_xedn}
    \end{align}
    and $\delta_4(U^{(d_n)}, k) = 0$, for $k < 0$.
    Then, uniformly for $1 \leq k \leq n - m + 1$, by Burkholder's inequality we have 
    \begin{align}
      \lt\|\tilde {{Q}}_{k,m}^{(d_n)}\rt\|_4^2 \leq B_4^2 \left\|\sum_{l = -\infty}^{\infty} \lt|\proj_l \tilde{{Q}}_{k,m}^{(d_n)}\rt|^2 \right\| \leq B_4^2 \sum_{l = -\infty}^{k+m-1}\lt\| \proj_l \tilde{{Q}}_{k,m}^{(d_n)}\rt\|_4^2 \leq B_4^2  \sum_{l = -\infty}^{k+m-1}\left\{\sum_{i = k-l}^{k+m-1-l}\delta_4(U^{(d_n)}, i)\right\}^2, \label{eq:Qkm_dn}
    \end{align}
    where $B_4$ is a constant.
    Therefore, combining \eqref{eq:delta_xedn} and \eqref{eq:Qkm_dn}, it follows  from  \cref{cor:kar_dn} that
   \begin{align}
   \max_{1 \leq k \leq n-m+1} \lt\|\tilde{{Q}}_{k,m}^{(d_n)}\rt\|_4^2 &= O \left[ \sum_{l = -\infty}^{k}\left\{(k+m-l)^{d_n} - (k-l+1)^{d_n}\right\}^2 + \sum_{l = k+1}^{k+m-1}(k+m-l)^{2d_n} \right]
    = O(m). \label{eq:sum_delta_dn}
   \end{align}
   
   \end{proof}

\begin{proof}[Proof of \cref{lm:Sigma_dn}]
    Recall that
    $$\tilde{{ \Sigma}}_{d_n}(t)=\sum_{j=m}^{n-m} \frac{m \tilde { \Delta}_{j}^{(d_n)}\tilde{ \Delta}^{(d_n),\T}_{j}}{2}\omega(t, j),\quad \tilde { \Delta}_{j}^{(d_n)}=\frac{\tilde{Q}^{(d_n)}_{j-m+1, m}- \tilde{Q}^{(d_n)}_{j+1, m}}{m}$$ where $\tilde{Q}^{(d_n)}_{k, m}=\sum_{i=k}^{k+m-1} x_{i}e_i^{(d_n)}$.\par We break the proof in the following 8 steps.
    
    \textbf{Step 1}: Following the proof of  \cref{lm:Sigma_d} by replacing $d$ by $d_n$, since $\kappa \geq 4$,  we have \begin{align}
   \sup_{t \in \I} \lt| \hat{{ \Sigma}}_{d_n}(t) - \tilde{{ \Sigma}}_{d_n}(t) \rt| = \Op\left( \surd{m}\tau_n^{3-1/\kappa} + \surd{\frac{m}{n\tau_n^{3/2+2\kappa}}} \right) = \op(1).\label{eq:local_step1}
   \end{align}
  \par
  \textbf{Step 2:}
  Let $L = m^2 \tau_n^{1/2}$, $\check e_{i,L}^{(d_n)} = \sum_{j=0}^{L} \psi_j e_{i-j}.$
    Define 
   $$\check{Q}^{(d_n)}_{k, m} = : \sum_{i=k}^{k+m-1} x_{i}\check e_{i, L}^{(d_n)},\quad \check { \Delta}_{j}^{(d_n)}=\frac{\check{{Q}}^{(d_n)}_{j-m+1, m}- \check{{Q}}^{(d_n)}_{j+1, m}}{m}, \quad \check{{ \Sigma}}_{d_n}(t)=\sum_{j=m}^{n-m} \frac{m \check { \Delta}_{j}^{(d_n)}\check { \Delta}^{(d_n),\T}_{j}}{2}\omega(t, j).$$
  
  In this step, we shall show that under bandwidth condition $m \tau_n^{3/2} \to \infty$,
   \begin{align}
   \sup_{t \in \I }\lt|\tilde { \Sigma}_{d_n}(t) - \check { \Sigma}_{d_n}(t)\rt| = \Op(m^{-1/2}\tau_n^{-3/4}) = \op(1).\label{eq:local_step2}
   \end{align}
   Observe that 
   \begin{align}
   e_i^{(d_n)} = \check{e}^{(d_n)}_{i,L}  + \tilde{e}^{(d_n)}_{i,L},~\text{where}~ \check e_{i,L}^{(d_n)} = \sum_{j=0}^{L} \psi_j e_{i-j}, ~ \tilde{e}^{(d_n)}_{i,L} = \sum_{j=L+1}^{\infty} \psi_j e_{i-j}.
   \end{align}
  By \cref{physical}, we have
  $
  \| \tilde{e}^{(d_n)}_{i,L}\|^2_4
   = O\{\sum_{s=L+1}^{\infty}(s+1)^{2d_n-2}\} = O(L^{-1}).
  $
  Then, under Assumption  \ref{Ass-W}, we have uniformly for $1 \leq k \leq n - m + 1$,
  \begin{align}
  \lt\| \check{Q}^{(d_n)}_{k, m} - \tilde{Q}^{(d_n)}_{k, m}\rt\|  \leq  m \max_{1 \leq i \leq n}\lt\| x_i \tilde{e}^{(d_n)}_{i,L}\rt\| \leq m \max_{1 \leq i \leq n} \| x_i \|_4 \lt\| \tilde{e}^{(d_n)}_{i,L} \rt\|_4  = O(m/\surd{L})\label{eq:diffQdn}.
  \end{align}
By \cref{lm:psum_dn} and \eqref{eq:diffQdn}, we  have  
  \begin{align}
  \left\| \check { \Sigma}_{d_n} (t) - \tilde { \Sigma}_{d_n} (t)\right\|\leq m \max_{m\leq j \leq n-m}\left\| \tilde  { \Delta}_{j}^{(d_n)} -  \check{ \Delta}_{j}^{(d_n)}\right\|_4 \left( \left\| \tilde  { \Delta}_{j}^{(d_n)} -  \check{ \Delta}_{j}^{(d_n)}\right\|_4  + 2\left\| \tilde{ \Delta}_{j}^{(d_n)}\right\|_4\right) = O\left(\surd{m/L}\right).
  \end{align}
  By Proposition B.1 in  \citep{dette2018change}, since $m/(L\tau_n)  = m^{-1}\tau_n^{-3/2} \to 0$, \eqref{eq:local_step2} is proved.
  \par
  \textbf{Step 3}	: 
    Define 
   $\bar{Q}^{(d_n)}_{k, m} = : \sum_{i=k}^{k+m-1}\{x_{i}e_i  + { \mu}_W(t_i) (\check e_{i,L}^{(d_n)} - e_i)\}$,$$\bar { \Delta}_{j}^{(d_n)}=\frac{\bar{{Q}}^{(d_n)}_{j-m+1, m}- \bar{{Q}}^{(d_n)}_{j+1, m}}{m}, \quad \bar{{ \Sigma}}_{d_n}(t)=\sum_{j=m}^{n-m} \frac{m \bar { \Delta}_{j}^{(d_n)}\bar { \Delta}^{(d_n),\T}_{j}}{2}\omega(t, j). $$
  We shall show that 
   \begin{align}
   \sup_{t \in \I }\left|\check { \Sigma}_{d_n}(t) - \bar { \Sigma}_{d_n}(t)\right|  = \Op\left(\surd{\frac{m}{n\tau_n^{1+2\kappa}}} + d_n\right) = \op(1).\label{eq:local_step3}
   \end{align}
  Observe that 
  \begin{align}
  \check {Q}^{(d_n)}_{k, m}
   - \bar{Q}^{(d_n)}_{k, m} = \sum_{j=1}^L \sum_{i=k}^{k+m-1} (x_i -  { \mu}_W(t_i)) \psi_j e_{i-j} = \sum_{j=1}^L \sum_{i=k}^{k+m-1}  \psi_j  \bar{x}_i e_{i-j}, 
    \end{align}
    where $\bar{x}_i = x_i - { \mu}_W(t_i)$. Let 
    \begin{align}
    { \vartheta}_{k,m} = \frac{1}{m} \sum_{j=1}^L \sum_{i=k-m+1}^{k}  \psi_j  (\bar{x}_i e_{i-j} - \bar{x}_{i+m} e_{i+m-j}).
    \end{align}
    Then, it follows that 
   \begin{align}
       \check { \Sigma}_{d_n}(t) - \bar { \Sigma}_{d_n}(t) =  \sum_{j=m}^{n-m} \frac{m\omega(t, j)}{2}( \check { \Delta}_{j}^{(d_n)}\check { \Delta}^{(d_n), \T}_{j} - \bar { \Delta}_{j}^{(d_n)}\bar { \Delta}^{(d_n), \T}_{j}),\label{eq:varstep3}
   \end{align}
   and \begin{align}
    \check { \Delta}_{j}^{(d_n)}\check { \Delta}^{(d_n), \T}_{j} - \bar { \Delta}_{j}^{(d_n)}\bar { \Delta}^{(d_n),\T}_{j}  = { \vartheta}_{k,m} { \vartheta}_{k,m}^{\T} +  { \vartheta}_{k,m}\bar { \Delta}^{(d_n),\T}_{j} + \bar { \Delta}^{(d_n)}_{j}{ \vartheta}_{k,m}^{\T}.\label{eq:vartheta}
   \end{align}
   
   \textbf{Step 3.1}
   We first show that
   \begin{align}
  \sup_{t \in \I } \left|\E\{\check { \Sigma}_{d_n}(t) - \bar { \Sigma}_{d_n}(t)\}\right| = O(d_n) = o(1).\label{eq:local_step31}
  \end{align}
  
  Observe that $(\bar{x}_i)_{i=1}^n$, $(e_i)_{i=-\infty}^n$ are two centered sequences.
    Under Assumptions \ref{Ass-W} and  \ref{assumptionHp}, by Lemma 7 in \citep{zhou2014vstat} we have for $l, j>0$,
    \begin{align}
    |\E(\bar{x}_i e_{i-j}\bar{x}^{\T}_{i+k} e_{i+k-l} )|
    = O(\chi^{\rho^*}),
    \end{align}
    where
  following the lines in the proof of Theorem 2 in \citep{zhou2014vstat}, we have 
  \begin{align}
  \rho^* \geq \frac{1}{2} \min \{\max (|k|,|k-l+j|), \max (|k-l|,|k+j|)\},
  \end{align}
  where we define the right hand side as $\rho_{k,l,j}$.
 Then, we are able to bound the expectation of \eqref{eq:vartheta}, for $ 1 \leq p \leq L$, $1 \leq q \leq L$,
    \begin{align}
    &|\E({ \vartheta}_{k,m} { \vartheta}_{k,m}^{\T} )|\\ &\leq \frac{1}{m^2}  \sum_{p,q = 1}^L \psi_p \psi_q \left| \E \left[ \left\{ \sum_{i=k-m+1}^k(\bar{x}_i e_{i-p} -\bar{x}_{i+m} e_{i+m-p})\right\}\left\{ \sum_{j=k-m+1}^k(\bar{x}_j e_{j-q} -\bar{x}_{j+m} e_{j+m-q})\right\}^{\T} \right] \right|\\
  & = O\left(\frac{1}{m^2}  \sum_{p,q = 1}^L  \psi_p \psi_q  \sum_{i,j =k-m+1}^{k+m} \chi^{\rho_{j-i, q, p}} \right).\label{eq:Vdn}
    \end{align}
  Consider $q \leq p$,  since when $q \geq 1$, $\psi_q = O(d_n(1+q)^{d_n-1})$,  we have 
   \begin{align}
       &\frac{1}{m^2}  \sum_{q = 1}^L \sum_{p = q}^L  \psi_p \psi_q  \sum_{i,j =k-m+1}^{k+m} \chi^{\rho_{j-i, q, p}}\\ &= \frac{1}{m^2}  \sum_{q = 1}^L \sum_{p = q}^L  \psi_p \psi_q  \sum_{i =k-m+1}^{k+m} \left(\sum_{j> i+ (q-p)/2} \chi^{(j-i-q+p)/2} + \sum_{j\leq i+ (q-p)/2} \chi^{(i-j)/2}\right) \\
       & = O \left( \sum_{q = 1}^{L}  \psi_q^2/m  \right)= O(d_n/m).
   \end{align}
   Similarly, 
   \begin{align}
       \frac{1}{m^2}  \sum_{q = 1}^L \sum_{p = 1}^{q-1}  \psi_p \psi_q  \sum_{i,j =k-m+1}^{k+m} \chi^{\rho_{j-i, q, p}}= O(d_n/m).
   \end{align}
   
  
  Then, we have
    \begin{align}
    |\E({ \vartheta}_{k,m} { \vartheta}_{k,m}^{\T} )|  = O(d_n/m).\label{eq:Vdn1}
    \end{align}
 
    Following similar arguments in \cref{lm:psum_dn}, $\| \bar { \Delta}_j^{(d_n)}\| = O(m^{-1/2})$.  Then, it follows that
    \begin{align}
    |\E(\bar { \Delta}_j^{(d_n)}{ \vartheta}_{k,m} ^{\T})| \leq \| \bar { \Delta}_j^{(d_n)}\|  \| { \vartheta}_{k,m} ^{\T}\| = O(d_n^{1/2}/m).\label{eq:Vdn2}
    \end{align}
  Therefore, by \eqref{eq:vartheta}, \eqref{eq:varstep3}, \eqref{eq:Vdn1}, and \eqref{eq:Vdn2}, we have \eqref{eq:local_step31}.
  
  \par
  \textbf{Step 3.2}
   We proceed to show that
   \begin{align}
  \sup_{t \in \I }\left|\check { \Sigma}_{d_n}(t) - \bar { \Sigma}_{d_n}(t) - \E(\check { \Sigma}_{d_n}(t) - \bar { \Sigma}_{d_n}(t))\right| =\Op\left(\surd{\frac{m}{n\tau_n^{1+2\kappa}}}\right)= \op(1).
  \label{eq:local_step32}
  \end{align}
  Notice that $\check e_{i,L}^{(d_n)}-e_i$ has summable physical dependence. Specifically, 
  \begin{align}
  \| { \vartheta}_{k,m} - { \vartheta}_{k,m,\{k-s\}}\|_{2\kappa} &= O\left[\frac{1}{m} \sum_{j=1}^L \psi_j \sum_{i=k-m+1}^{k+m} \{\delta_{4\kappa}(W, i-k+s) + \delta_{4\kappa}(H, i-j-k+s) \}\right]\\ 
  & = O\left[\frac{d_n}{m} \sum_{i=-m+1}^{m}\min\left\{ \chi^{i-L+s}L^{d_n-1}, (i+s)^{d_n-1}\right\}1(i+s > 0 ) \right],
  \end{align}
  where in the last equality, we use the fact $ \delta_{4\kappa}(H, k)  =0$, if $k \leq  0$ and $\sum_{j=1}^L \psi_j \chi^{L-j} = O(\psi_L).$
  From \eqref{eq:Vdn1}, we have $\sup_j \|{ \vartheta}_{j,m} \|_{2\kappa} = O(m^{-1/2}) $. \par  For simplicity, write $r_{i,s, n} =\frac{d_n}{m}\min\left\{ \chi^{i-L+s}L^{d_n-1}, (i+s)^{d_n-1}\right\} 1(i+s > 0 )$. 
  Then,  we obtain for $m \leq j \leq n$, $s \geq 0$,
  \begin{align}
  \left\| \proj_{j-s} { \vartheta}_{j,m} { \vartheta}_{j,m}^{\T} \right\|_{\kappa} &\leq \| { \vartheta}_{j,m}\|_{2\kappa} \| { \vartheta}_{j,m}^{\T} - { \vartheta}_{j,m,\{j-s\}}^{\T}\|_{2\kappa}  + \| { \vartheta}_{j,m} - { \vartheta}_{j,m,\{j-s\}}\|_{2\kappa}  \| { \vartheta}_{j,m,\{j-s\}}^{\T} \|_{2\kappa}
  \\ 
 & = O\left(m^{-1/2} r_{i,s,n}\right).
  \label{eq:Pvv}
  \end{align}
   Under Assumption \ref{Ass-E}, similar to (36) in Lemma 3 in \citep{zhou2010simultaneous}, we have
   \begin{align}
    \delta_{2\kappa}(\tilde{ \Delta}(m),k): = \sup_{1 \leq j \leq n} \| \tilde{ \Delta}_j - \tilde{ \Delta}_{j,\{j-k\}}\|_{2\kappa} &= O\left\{ \frac{1}{m}\sum_{i=-m+1}^m\delta_{2\kappa}(U,k+i)\right\}\\ &= O\{\min(\chi^{k-m}, 1)/m\}.
  \end{align}
  Similar to \eqref{eq:Pvv}, and by \eqref{eq:lrv2}, we have
  \begin{align}
  \left\|\bar { \Delta}_{k}^{(d_n)}  - \bar { \Delta}^{(d_n)} _{k,\{k-s\}}\right\|_{2\kappa} &\leq  \left\|\tilde  { \Delta}_{k} -  \tilde { \Delta}_{k,\{k-s\}} \right\|_{2\kappa} +  \frac{1}{m}\sum_{i=k-m+1}^{k+m} \sum_{j=1}^L \psi_j \left\|{ \mu}_W(t_i)(e_{i-j} - e_{i-j,\{k-s\}}) \right\|_{2\kappa}\\ 
  & =O\left\{\min(\chi^{s-m},1)/m +\sum_{i=-m+1}^{m}r_{i,s, n} \right\}. \label{eq:deltalocal}
  \end{align}
  
  Since $\sup_j \|\bar { \Delta}_{j}^{(d_n)} \|_{\kappa}= O(m^{-1/2}) $, by \eqref{eq:deltalocal}, similar to \eqref{eq:Pvv}, we obtain
  \begin{align}
  \left\| \proj_{j-s}  \bar { \Delta}_{j}^{(d_n)} { \vartheta}_{j,m}^{\T} \right\|_{\kappa}
  & =  O\left\{m^{-1/2} r_{i,s,n}  +  m^{-3/2} \min(\chi^{s-m},1)\right\}.\label{eq:Pdv}
  \end{align}
  
  By Burkholder's inequality, by \eqref{eq:varstep3} and \eqref{eq:vartheta}, combining \eqref{eq:Pvv} and \eqref{eq:Pdv}, we have for $t \in \I$,
  \begin{align}
  &\left\|\check { \Sigma}_{d_n}(t) - \bar { \Sigma}_{d_n}(t) - \E(\check { \Sigma}_{d_n}(t) - \bar { \Sigma}_{d_n}(t))\right\|_{\kappa}\\
   &= O\left\{ \sum_{s=0}^{\infty} \left(\sum_{j=m}^{n-m} \omega^2(t,j)m^2\left\| \proj_{j-s} { \vartheta}_{j,m} { \vartheta}_{j,m}^{\T}  + \proj_{j-s}  \bar { \Delta}_{j}^{(d_n)} { \vartheta}_{j,m}^{\T} + \proj_{j-s}  \bar { \Delta}_{j}^{(d_n)} { \vartheta}_{j,m}^{\T}\right\|_{\kappa}^2 \right)^{1/2}\right\}\\ 
  & = O\left(\surd{\frac{m}{n\tau_n}}\right),
  \end{align}
  where in the last equality, we consider $0<i+s<L$, and $i+s \geq L$ separately and use the fact $\sum_{i=1}^L i^{-1} = O(\log L).$ Then, \eqref{eq:local_step32} follows from the chaining argument in Proposition B.1 in  \citep{dette2018change}.
  
  \par
  \textbf{Step 4: Decomposition }
  Recall that $\tilde{Q}_{k,m} = \sum_{i=k}^{k+m-1} x_ie_i$, 
  \begin{align}
      \tilde{ \Delta}_{j}=\frac{\tilde{Q}_{j-m+1, m}- \tilde{Q}_{j+1, m}}{m}, \quad\tilde{{ \Sigma}}(t)=\sum_{j=m}^{n-m} \frac{m \tilde{ \Delta}_{j}\tilde{ \Delta}_{j}^{\T}}{2}\omega(t, j).
   \end{align}
     
  Define $\breve { \Delta}_j^{(d_n)} = \bar { \Delta}_j^{(d_n)} - \tilde{ \Delta}_{j}  = \frac{1}{m} \sum_{i=j-m+1}^j { \mu}_W(t_i) (\check e_{i,L}^{(d_n)} - e_i) - { \mu}_W(t_{i+m}) (\check e_{i+m,L}^{(d_n)} - e_{i+m})$.
  Let $$ \tilde{s}_1 (t) = \sum_{j=m}^{n-m} \frac{m \omega(t,j)}{2} \breve { \Delta}_j^{(d_n)} \breve { \Delta}_j^{(d_n), \T} ,\quad \tilde{s}_2(t) = \sum_{j=m}^{n-m} \frac{m \omega(t,j)}{2} \tilde { \Delta}_j \breve { \Delta}_j^{(d_n), \T} .$$
   Observe that 
  \begin{align}
  \bar{{ \Sigma}}_{d_n}(t) =  \tilde { \Sigma}(t)  +  \tilde {s}_{1} (t) + \tilde{s}_{2}(t) + \tilde{s}^{\T}_{2}(t).
  \end{align}
  
  \textbf{Step 5: Martingale approximation}
  
  Let 
  \begin{align}
    z_j = \sum_{i=j}^{\infty} \proj_j (x_i e_i),\quad z_j^{\circ} = z_j(t_j) = \sum_{i=j}^{\infty} \proj_j U(t_j, \FF_i).
  \end{align}
   Recall that  in \cref{lm:Sigma_d}, $\zeta_j = \sum_{i=j}^{\infty} \proj_je_{i}$, $\zeta_j^{\circ} = \zeta_j (t_j) = \sum_{i=j}^{\infty} \proj_j H(t_j,\FF_i)$. Let $z_{j,1}$ denote the first element in $z_i$. Then, it follows that $z_{j,1} =\zeta_j $, $ z_{j,1}^{\circ} = \zeta_j^{\circ} $. \par
   Define $\overline{Z}^{(d_n)}_{k,m} =  \sum_{i=k}^{k+m-1} \left\{z_i^{\circ}+ \sum_{j=1}^L \psi_j { \mu}_W(t_i) \zeta^{\circ}_{i-j}\right\}$, $${ \Delta}_j^{(d_n), \circ} = \frac{\overline{{Z}}^{(d_n)}_{j-m+1,m} -\overline{{Z}}^{(d_n)}_{j+1,m}}{m},\quad { \Sigma}_{d_n}^{\circ}(t) = \sum_{j=m}^{n-m} \frac{m { \Delta}_j^{(d_n), \circ} ({ \Delta}_j^{(d_n), \circ })^{\T}}{2}\omega(t,j).$$
   
  Similarly to $p_{j,k,m}$ defined in Step 3 of \cref{lm:Sigma_d}, we define $\underline{p}_{j,k,m} = \sum_{i=(j-L)_+}^{(m \wedge j) -1} \psi_{j-i} { \mu}_W\left(\frac{k-i}{n}\right)$.
   By \eqref{eq:step3_2} and similar arguments in Theorem 1(ii) of \citep{wu2007strong}, we have uniformly for $1 \leq k \leq n-m+1$,
   \begin{align}
   \| \bar{Q}_{k,m}^{(d_n)} -\overline{Z}^{(d_n)}_{k,m}\|_4  &\leq \left\|\sum_{j= 1}^{L+m-1} \underline{p}_{j,k+m-1,m}(e_{k+m-1-j} - \zeta^{\circ}_{k+m-1-j}) \right\|_4  \\ &+ \left\|   \sum_{i=k}^{k+m-1}x_i e_i - \sum_{i=k}^{k+m-1}z_i^{\circ}  \right\|_4  = O(1).
   \end{align}
  Since $\sup_j \|\bar { \Delta}_{j}^{(d_n)} \| = O(m^{-1/2}) $, by triangle inequality and Cauchy–Schwarz inequality, we have for $t \in \I$,
  \begin{align}
    \|\bar{{ \Sigma}}_{d_n}(t) - { \Sigma}_{d_n}^{\circ}(t)\| &\leq \sum_{j=m}^{n-m} \frac{m\omega(t,j)}{2}\left\|\bar { \Delta}_{j}^{(d_n)}(\bar { \Delta}_{j}^{(d_n)})^{\T} -  { \Delta}_j^{(d_n), \circ} ({ \Delta}_j^{(d_n), \circ })^{\T}\right\| 
   = O(m^{-1/2}).
  \end{align}
  By chaining argument in Proposition B.1 in  \citep{dette2018change},  we have 
  \begin{align}
  \sup_{t\in \I} |\bar{{ \Sigma}}_{d_n}(t) - { \Sigma}_{d_n}^{\circ}(t) | = \Op\{(m \tau_n)^{-1/2}\} =
  \op(1).\label{eq:local_step5}
  \end{align}
  \par 
  
  \textbf{Step 6}
  Observe that 
   \begin{align}
   \overline{Z}^{(d_n)}_{k,m} =  \sum_{i=k}^{k+m-1} \left\{z_i^{\circ}+ \sum_{j=1}^L \psi_j { \mu}_W(t_i) \zeta^{\circ}_{i-j}\right\} = \sum_{i=k}^{k+m-1} z_i^{\circ}+ \sum_{j=1}^{L+m-1} \underline{p}_{j, k+m-1, m} \zeta^{\circ}_{k+m-1-j}.
    \end{align}
   After a careful inspection of Step 4 of \cref{lm:Sigma_d}, we have
  \begin{align}
    \sup_{t \in \I}| { \Sigma}_{d_n}^{\circ}(t)- \E { \Sigma}_{d_n}^{\circ}(t)|  = \Op\left(\surd{\frac{1}{m \tau_n^{3/2}}}\right) = \op(1),\label{eq:local_step6}
  \end{align}
  \par
  \textbf{Step 7} Recall that $\check { \Sigma}(t) =  { \Sigma}(t) + (e^{c\alpha_1}-1)^2\sigma_H^2(t) { \mu}_W(t) \mu^{\T}_W(t) + (e^{c\alpha_1}-1)s_{UH}(t) { \mu}_W^{\T}(t) + (e^{c\alpha_1}-1){ \mu}_W(t)s^{\T}_{UH}(t)$.
  We shall show that uniformly for $t \in \I$,
  \begin{align}
  \E{ \Sigma}^{\circ}_{d_n}(t) = \check { \Sigma}(t) + O\{(\log n)^{-1}\}.
  \end{align}
   Let $\breve { \Delta}_k^{(d_n),\circ} = \frac{1}{m} \sum_{i=k-m+1}^k \sum_{j=1}^L \psi_j \left\{{ \mu}_W(t_i) \zeta^{\circ}_{i-j} - { \mu}_W(t_{i+m}) \zeta^{\circ}_{i+m-j} \right\}$, ${ \Delta}_k^{\circ} = \frac{1}{m} \sum_{i=k-m+1}^k (z_i^{\circ} - z_{i+m}^{\circ} )$. Define for $t \in [0,1]$, $\tilde { \Sigma}^{\circ} (t) = \sum_{j=m}^{n-m} \frac{m \omega(t,j)}{2} { \Delta}_j^{\circ} { \Delta}_j^{\circ,\T}$,
  \begin{align}
      &\tilde{s}_1^{\circ} (t) = \sum_{j=m}^{n-m} \frac{m \omega(t,j)}{2} \breve { \Delta}_j^{(d_n),\circ} (\breve { \Delta}_j^{(d_n),\circ})^{\T} ,~\text{and}~ \tilde{s}_2^{\circ}(t) = \sum_{j=m}^{n-m} \frac{m \omega(t,j)}{2} { \Delta}_j^{\circ} (\breve { \Delta}_j^{(d_n), \circ})^{\T} .
  \end{align}
  Then it follows that
  \begin{align}
  \E\{{ \Sigma}_{d_n}^{\circ}(t) \}=  \E\{\tilde { \Sigma}^{\circ}(t)\}  +  \E \{\tilde {s}^{\circ}_{1} (t)\} + \E \{\tilde{s}^{\circ}_{2}(t)\} + \E \{\tilde{s}^{\circ, \T}_{2}(t)\}.
  \end{align}
  Following similar arguments in Step 5 of \cref{lm:Sigma_d}, 
   by the continuity of ${ \mu}_W$,
    we have
  when $j \leq m$, 
  \begin{align}
    \underline{p}_{j,k,m}  = (e^{c\alpha_1} - 1)  { \mu}_W \left(k/n\right) + O(m/n + d_n).\label{eq:pjm_local1}
  \end{align}
  when $j \geq m + 1$,
    since $L / n \to 0$,
  \begin{align}
    \underline{p}_{j,k,m} = { \mu}_W\left(k/n\right)  \{j^{d_n} - (j-m + 1)^{d_n}\} + O(m/n + d_n + m/L).\label{eq:pjm_local2}
   \end{align}
  Since $z_i^{\circ}$ are martingale differences, and $\sigma_H(t_j) = \| \sum_{i=j}^{\infty} \proj_j H(t_j,\FF_i)\| = \| \zeta^{\circ}_j \|$, 
  \begin{align}
    \E(\overline{Z}_{k-m+1,m}^{(d_n)} \overline{Z}_{k-m+1,m}^{(d_n), \T})&= \sum_{j= 1}^{L+m-1} \underline{p}_{j,k,m} \underline{p}^{ \T}_{j,k,m}  \E\{(\zeta^{\circ}_{k-j})^2\} + \sum_{i=k-m+1}^{k} \E(z_i^{\circ} z_i^{\circ,\T}) \\
    & + \sum_{j=1}^{m-1} \underline{p}_{j, k,m} \E(\zeta^{\circ}_{k- j} z_{k-j} ^{\circ,\T}) + \sum_{j=1}^{m-1} \E( z_{k-j}^{\circ}  \zeta^{\circ}_{k- j}) \underline{p}_{j, k,m}^{\T}\\
    & = Z_1 + Z_2 + Z_3 + Z_4.\label{eq:localZ0}
  \end{align}
  By \cref{cor:kar_dn}, we have
  \begin{align}
  Z_1/m  &= \frac{1}{m}\sum_{j= 1}^{m-1} \underline{p}_{j,k,m} \underline{p}^{ \T}_{j,k,m}  \E\{(\zeta^{\circ}_{k-j})^2 \}+ \frac{1}{m}\sum_{j= m}^{L+m-1} \underline{p}_{j,k,m} \underline{p}^{ \T}_{j,k,m}  \E\{(\zeta^{\circ}_{k-j})^2\} \\ 
  & = (e^{c\alpha_1} - 1)^2 { \mu}_W(k/n){ \mu}_W^{\T}(k/n)\sigma_H^2(k/n) + O\{m/n+d_n+(\log m)^{-1}\} .
  \end{align}
  Under Assumption  \ref{Ass-U}, we have 
  \begin{align}
  Z_2/m = { \Sigma}(t)  + O(m/n).
  \end{align}
  Observe that 
  \begin{align}
  \E( z_{j}^{\circ}  \zeta^{\circ}_{j})  = \sum_{k=-\infty}^{\infty} \mathrm{Cov}\{U(t_j, \FF_0), H(t_j, \FF_k)\}  = s_{UH}(t_j).
  \end{align}
  Under Assumption \ref{C:s}, similar arguments in the calculation of $Z_1$ and $Z_2$ imply, 
  \begin{align}
  Z_3/m = (e^{c\alpha_1}-1) { \mu}_W(k/n) s^{\T}_{UH}(k/n) + O(m/n + d_n),
  \end{align}
  and 
  \begin{align}
  Z_4/m = (e^{c\alpha_1}-1) s_{UH}(k/n)  \mu^{\T}_W(k/n) + O(m/n + d_n).
  \end{align}
  By \cref{cor:kar_dn} (a), \eqref{eq:pjm_local1} and \eqref{eq:pjm_local2}, similar techniques of \eqref{eq:localZ0} show that
  \begin{align}
    \E(\overline{Z}_{k+1,m}^{(d_n)} \overline{Z}_{k-m+1,m}^{(d_n), \T}) &= \sum_{j= 1}^{L-1} \underline{p}_{j+m, k+m,m} \underline{p}^{\T}_{j, k,m}  \E\{(\zeta^{\circ}_{k-j})^2\}  + \sum_{j=1}^{m} \underline{p}_{j+m-1, k+m,m} \E(\zeta^{\circ}_{k+1- j} z_{k+1-j} ^{\circ,\T})\\  &= O\{m(\log m)^{-1}\}.
    \label{eq:localZm}
  \end{align}
  Therefore, by  \eqref{eq:localZ0} and \eqref{eq:localZm}, we have
  \begin{align}
  \E { \Sigma}_{d_n}^{\circ}(t) &= \sum_{j=m}^{n-m} \check { \Sigma} (t_j)\omega(t,j)  + O\{m/n + d_n + (\log m)^{-1}\} 
      = \check  { \Sigma} (t) +O\{(\log n)^{-1}\}.
  \end{align}
  \par
  \textbf{Step 8} 
  Summarizing Step 1 - Step 7, we have 
  \begin{align}
  \sup_{t \in \I} |\hat { \Sigma}_{d_n}(t) - \check { \Sigma}(t)| =\Op\left\{\surd{\frac{m}{n\tau_n^{3/2+2\kappa}}} + \surd{m}\tau_n^{3-1/\kappa} +\surd{\frac{1}{m\tau_n^{3/2}}} + (\log n)^{-1}\right\} = \op(1).\label{eq:localdrate}
  \end{align} 
  \end{proof}

\section{Auxiliary results}\label{sec:aux}
   \begin{proposition}\label{physical}
         Suppose $Q_i = L(t_i, \FF_i)$, $t_i \in  \I$,  for $q \geq 1$, we have
         \begin{align}
           \|\proj_{i-l} Q_i \|_q \leq \delta_q(L, l, \I).
         \end{align}
       \end{proposition}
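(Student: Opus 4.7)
The plan is to establish this via a coupling argument that is standard in the physical-dependence literature. First, I would introduce the coupled random variable $Q_i' := L(t_i, \F_{i, \{i-l\}})$, obtained from $Q_i$ by replacing the innovation $\eta_{i-l}$ by its i.i.d.\ copy $\eta_{i-l}'$ while keeping all other innovations fixed. The target quantity $\proj_{i-l} Q_i = E(Q_i \mid \F_{i-l}) - E(Q_i \mid \F_{i-l-1})$ will be rewritten in terms of $Q_i$ and $Q_i'$.

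The two key observations are: (a) since $Q_i'$ depends measurably only on innovations in $\{\ldots,\eta_{i-l-1},\eta_{i-l}',\eta_{i-l+1},\ldots,\eta_i\}$, and $\eta_{i-l}'$ is independent of $\F_{i-l}$, one has $E(Q_i' \mid \F_{i-l}) = E(Q_i' \mid \F_{i-l-1})$; and (b) by the exchangeability of $\eta_{i-l}$ and its i.i.d.\ copy $\eta_{i-l}'$ conditional on $\F_{i-l-1}$, one has $E(Q_i \mid \F_{i-l-1}) = E(Q_i' \mid \F_{i-l-1})$. Combining (a) and (b), I can rewrite
\[
\proj_{i-l} Q_i = E(Q_i - Q_i' \mid \F_{i-l}).
\]

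Next I would apply the conditional Jensen inequality (valid for $q \geq 1$ since $|\cdot|^q$ is convex on $\mathbb{R}^+$) to obtain $\|\proj_{i-l} Q_i\|_q \leq \|Q_i - Q_i'\|_q$. The final step is to invoke the i.i.d.\ structure of $(\eta_j)_{j \in Z}$: by a shift of indices by $l-i$, the joint distribution of $(\F_i,\F_{i,\{i-l\}})$ coincides with that of $(\F_l,\F_{l,\{0\}})$, so
\[
\|Q_i - Q_i'\|_q = \|L(t_i,\F_l) - L(t_i,\F_{l,\{0\}})\|_q \leq \sup_{t\in I}\|L(t,\F_l) - L(t,\F_{l,\{0\}})\|_q = \delta_q(L,l,I),
\]
where the last inequality uses $t_i\in I$.

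There is no real obstacle here: the argument is essentially bookkeeping once the coupling is set up. The only point requiring mild care is verifying that the exchangeability step (b) is legitimate for the possibly infinite-history filter $L$, which follows from a standard measurability and monotone-class argument since $L(t,\cdot)$ is a measurable functional of $(\eta_j)_{j\leq i}$.
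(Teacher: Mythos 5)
Your proof is correct and is essentially the same argument the paper relies on: the paper's proof is a one-line citation to Theorem 1 of \citep{wu2005nonlinear}, whose underlying argument is exactly your coupling identity $\proj_{i-l} Q_i = E(Q_i - Q_i' \mid \F_{i-l})$ followed by conditional Jensen and the shift-invariance of the i.i.d.\ innovations. You have simply written out the details (including taking the supremum over $t \in \I$ needed for the locally stationary setting) that the paper delegates to the citation.
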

       \begin{proof}[Proof of \cref{physical}]
         The proposition follows after a careful investigation of Theorem 1 in \citep{wu2005nonlinear}.
       \end{proof}
       
\begin{proposition}\label{Prop31}
Under Assumption \ref{assumptionHp},  we have uniformly for $l \geq 0$, $0 < d < 1/2$,
\begin{align}
  \delta_p(H^{(d)}, l, (-\infty, 1]) = O\{(1+l)^{d-1}\}.
\end{align}
\end{proposition}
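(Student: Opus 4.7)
The plan is to expand $H^{(d)}(t,\F_l)-H^{(d)}(t,\F_{l,\{0\}})$ using the Wold representation of fractional integration, then apply the triangle inequality together with short-range dependence of $H$ and the well-known asymptotics of the $\psi_j(d)$. Write
\begin{align}
H^{(d)}(t,\F_l)-H^{(d)}(t,\F_{l,\{0\}})
= \sum_{k=0}^{\infty}\psi_k(d)\{H(t_{l-k},\F_{l-k})-H(t_{l-k},\F_{l-k,\{0\}})\}.
\end{align}
For $k>l$ we have $l-k<0$, so $\eta_0$ does not appear in $\F_{l-k}$ and the corresponding summand vanishes; thus only $k=0,\ldots,l$ contribute. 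Taking $\mathcal L^p$-norms and applying Minkowski's inequality gives
\begin{align}
\|H^{(d)}(t,\F_l)-H^{(d)}(t,\F_{l,\{0\}})\|_p
\le \sum_{k=0}^{l}|\psi_k(d)|\cdot \delta_p(H,l-k,(-\infty,1]).
\end{align}

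Next, two ingredients feed into the bound. First, Assumption \ref{assumptionHp} (which extends Assumption \ref{B:H_delta} to $(-\infty,1]$) yields $\delta_p(H,j,(-\infty,1])\le C\chi^{j}$ for some $\chi\in(0,1)$ and $j\ge 0$. Second, the coefficients of $(1-\mathcal B)^{-d}$ satisfy $\psi_j(d)=\Gamma(j+d)/\{\Gamma(d)\Gamma(j+1)\}$, which by Stirling's formula obeys $|\psi_j(d)|\le C'(1+j)^{d-1}$ uniformly in $j\ge 0$. Plugging these in,
\begin{align}
\|H^{(d)}(t,\F_l)-H^{(d)}(t,\F_{l,\{0\}})\|_p
\le CC'\sum_{k=0}^{l}(1+k)^{d-1}\chi^{l-k}.
\end{align}

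Finally, split the summation at $k=\lfloor l/2\rfloor$. On $\{k\le l/2\}$ the geometric factor $\chi^{l-k}\le \chi^{l/2}$ decays faster than any polynomial, contributing $o((1+l)^{d-1})$. On $\{k> l/2\}$ the polynomial factor is uniformly $O((1+l)^{d-1})$, and $\sum_{k>l/2}\chi^{l-k}$ is bounded by the geometric series $\sum_{j\ge 0}\chi^{j}<\infty$. Combining the two pieces and taking the supremum over $t$ yields $\delta_p(H^{(d)},l,(-\infty,1])=O\{(1+l)^{d-1}\}$, as claimed. The argument is essentially routine; the only delicate point is ensuring the split-summation argument uses the right polynomial/geometric trade-off, which is standard for fractionally integrated short-memory series.
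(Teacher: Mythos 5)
Your proof is correct and follows essentially the same route as the paper: both bound $\delta_p(H^{(d)},l,(-\infty,1])$ by the convolution $\sum_{k=0}^{l}\psi_k(d)\,\delta_p(H,l-k,(-\infty,1])$ and then use the geometric decay of $\delta_p(H,\cdot)$ together with $\psi_k(d)\asymp k^{d-1}$. The only difference is that you verify the convolution asymptotics directly via the split at $k=\lfloor l/2\rfloor$, whereas the paper cites Lemma 3.2 of Kokoszka and Taqqu for that step; your split-summation argument is valid.
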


\begin{proof}[Proof of \cref{Prop31}]
Under Assumption \ref{assumptionHp}, by Lemma 3.2 of \citep{KOKOSZKA199519} and \cref{physical}, we have
\begin{align}
  \delta_p(H^{(d)}, l, (-\infty, 1]) 
 \leq \sum_{k=0}^{l} \psi_k(d) \delta_p(H,l-k, (-\infty, 1])  
 =O\{(1+l)^{d-1}\}.
  \end{align}
\end{proof}

   \begin{lemma}\label{lm:davis}
         Suppose $\left\| \sup_{t \in [0,1]}\left|\hat { \Sigma}(t)-{ \Sigma}(t)\right|\right\|  =  O(s_n)$, where ${ \Sigma}(t)$ is a covariance matrix with its eigenvalues bounded from zero, $\mathrm{dim}({ \Sigma}(t)) = p < \infty$. Then, we have 
         \begin{align}
           \sup_{t \in [0,1]}\left|\hat { \Sigma}^{1/2}(t)-{ \Sigma}^{1/2}(t)\right| = \Op(s_n^{1/2}).
         \end{align}
         \end{lemma}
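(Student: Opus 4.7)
My plan is to reduce the statement to a deterministic matrix-square-root perturbation bound on a high-probability event. Since the conclusion is vacuous unless $s_n = o(1)$, set $c := \frac{1}{2}\inf_{t\in[0,1]} \lambda_{\min}(\Sigma(t)) > 0$ and define $E_n := \{\sup_{t\in[0,1]}|\hat\Sigma(t) - \Sigma(t)| \leq c\}$; the hypothesis and Markov's inequality give $\pp(E_n) \to 1$. On $E_n$, Weyl's inequality yields $\lambda_{\min}(\hat\Sigma(t)) \geq c$ uniformly in $t$, so $\hat\Sigma^{1/2}(t)$ is well-defined, positive definite, and has smallest eigenvalue at least $\sqrt{c}$.

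Working on $E_n$, I would exploit the Sylvester-equation identity
\begin{align*}
\hat\Sigma(t) - \Sigma(t) = \hat\Sigma^{1/2}(t) X(t) + X(t)\Sigma^{1/2}(t), \qquad X(t) := \hat\Sigma^{1/2}(t) - \Sigma^{1/2}(t),
\end{align*}
whose unique solution admits the integral representation $X(t) = \int_0^{\infty} e^{-s\hat\Sigma^{1/2}(t)}\{\hat\Sigma(t)-\Sigma(t)\}e^{-s\Sigma^{1/2}(t)}\, ds$. Taking Frobenius norms and using that the operator norm of $e^{-sA}$ is at most $e^{-s\lambda_{\min}(A)}$ for symmetric positive semi-definite $A$ gives
\begin{align*}
|X(t)| \leq \frac{|\hat\Sigma(t) - \Sigma(t)|}{\lambda_{\min}(\hat\Sigma^{1/2}(t)) + \lambda_{\min}(\Sigma^{1/2}(t))} \leq \frac{|\hat\Sigma(t)-\Sigma(t)|}{2\sqrt{c}}.
\end{align*}
Taking the supremum over $t$ yields $\sup_t |X(t)|\cdot 1(E_n) = \Op(s_n)$, which is in fact stronger than the stated $\Op(s_n^{1/2})$ rate; combined with $\pp(E_n^c) \to 0$ this suffices.

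The main obstacle is minor: one must justify the good-event reduction and verify that the Sylvester operator has minimum eigenvalue uniformly bounded below by $2\sqrt{c}$, both of which follow from Weyl's inequality. A cleaner alternative that delivers precisely the $s_n^{1/2}$ rate without invoking the eigenvalue lower bound is the Powers--Stormer-type inequality $|A^{1/2} - B^{1/2}| \leq C|A-B|^{1/2}$ for positive semi-definite $A,B$ in finite dimension (a manifestation of the $1/2$-Holder operator continuity of the operator-monotone map $x \mapsto x^{1/2}$), which makes the lemma an immediate consequence of the hypothesis but forfeits the stronger linear-in-$s_n$ rate that the Sylvester argument produces as a by-product.
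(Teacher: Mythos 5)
Your proof is correct, and it takes a genuinely different route from the paper's. The paper diagonalizes $\hat\Sigma(t)$ and $\Sigma(t)$ and splits the difference of square roots into an eigenvalue part, controlled by $|\hat\lambda^{1/2}-\lambda^{1/2}|\le|\hat\lambda-\lambda|^{1/2}$ (which is where the exponent $1/2$ in the rate originates), and an eigenvector part, controlled by the Davis--Kahan-type bounds of Yu, Wang and Samworth (2015); the latter introduces denominators equal to the gaps between the distinct eigenvalues of $\Sigma(t)$, so the paper's argument implicitly requires those gaps to be bounded away from zero uniformly in $t$, a condition not stated in the lemma. Your Sylvester-equation argument avoids eigenvectors entirely: the integral representation is legitimate on $E_n$ because both square roots are positive definite there, and the resulting bound uses only the stated hypothesis $\inf_{t}\lambda_{\min}(\Sigma(t))>0$. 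It therefore proves the lemma under exactly the stated assumptions and, as you note, yields the stronger rate $\Op(s_n)$ rather than $\Op(s_n^{1/2})$; the Powers--St{\o}rmer alternative recovers precisely the stated $s_n^{1/2}$ rate with no eigenvalue lower bound at all, at the cost of a dimensional constant. The one point worth making explicit is that off $E_n$ the symmetric matrix $\hat\Sigma(t)$ (a difference of two positive semi-definite matrices) need not be positive semi-definite, so $\hat\Sigma^{1/2}(t)$ must be given some convention there (e.g.\ truncating negative eigenvalues); since $\pp(E_n^{c})\to 0$ this does not affect the conclusion, and the paper's proof silently faces the same issue when it writes $\hat\Lambda^{1/2}(t)$.
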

         
       \begin{proof}[Proof of \cref{lm:davis}]
         Without loss of generality, suppose ${ \Sigma}(t)$ has eigenvalues $\lambda_1(t) \geq \cdots \geq \lambda_p(t)$, and eigenvector matrix $V(t) = (v_1(t), \ldots, v_p(t))$, ${ \Sigma}(t)v_j(t) =\lambda_j(t) v_j(t)$, ${ \Lambda}(t) = \mathrm{diag}\{\lambda_1(t) \geq \cdots \geq \lambda_p(t)\}$. Suppose $\hat { \Sigma}(t)$ has eigenvalues $\hat \lambda_1(t) \geq \cdots \geq \hat \lambda_p (t)$, and eigenvector matrix $\hat{V}(t) = (\hat {v}_1(t), \ldots, \hat {v}_p(t))$, $\hat { \Sigma}(t) \hat {v}_j(t) = \hat \lambda_j(t) \hat {v}_j(t)$, $\hat { \Lambda}(t) = \mathrm{diag}\{\hat \lambda_1(t) \geq \cdots \geq \hat \lambda_p (t)\}$.
         Suppose ${ \Sigma}(t)$ has $q$ distinct eigenvalues, $\tilde \lambda_1(t) > \cdots > \tilde \lambda_q(t)$. Let $ {Q}(t) = \{k: \exists j \neq i, \lambda_j(t) = \lambda_i(t)  = \tilde \lambda _k(t)\}$. 
           Let
          \begin{align}
           { \Sigma}^{\circ}(t) = \hat{V}(t){ \Lambda}(t)\hat{V}(t)^{\T}.
          \end{align}
          Then, we have 
          \begin{align}
            \E\sup_{t \in [0,1]}\left|\hat { \Sigma}^{1/2}(t)-{ \Sigma}^{1/2}(t)\right| &\leq \E \sup_{t \in [0,1]}\left|\hat { \Sigma}^{1/2}(t)-({ \Sigma}^{\circ})^{1/2}(t)\right| + \E \sup_{t \in [0,1]}\left|({ \Sigma}^{\circ})^{1/2}(t)-{ \Sigma}^{1/2}(t)\right| \\&= S_1 + S_2,\label{eq:S1S2}
          \end{align}
          where $S_1$ and $S_2$ are defined in the obvious way.
           $\hat{V}(t)$ is orthogonal, and $|\cdot|$ is the Frobenius norm, then
          \begin{align}
            S_1 &= \E \sup_{t \in [0,1]}|\hat{V}(t) \{{ \Lambda}^{1/2}(t) - \hat { \Lambda}^{1/2}(t)\}\hat{V}(t)^{\T}\|\\ &  \leq \left\| \sup_{t \in [0,1]}\left|{ \Lambda}^{1/2}(t) - \hat { \Lambda}^{1/2}(t)\right|\right\| \left\| \sup_{t \in [0,1]}\left|\hat{V}(t)\right|\right\| = O(s_n^{1/2}).\label{eq:S1}
          \end{align}
          By Corollary 1 in \citep{yu2015useful}, if $k \not \in  {Q}(t)$, suppose $\lambda_i(t) = \tilde \lambda_k(t)$. Then, we have
         \begin{align}
           |\hat {v}_i(t) - {v}_i(t)| \leq \frac{2^{3/2}\rho\left\{\hat { \Sigma}(t) - { \Sigma}(t)\right\}}{\min\{\lambda_{i-1}(t)-\lambda_{i}(t), \lambda_i(t)-\lambda_{i+1}(t)\}}.\label{eq:eigens}
         \end{align}
         If $j \in  {Q}(t)$, suppose  $\lambda_{r-1}(t) > \lambda_r(t) = \cdots = \tilde \lambda_j(t)  = \cdots = \lambda_s(t) > \lambda_{s+1}(t) $, and let ${V}_j(t) = ({v}_r(t), \ldots, {v}_s(t))$. Let $\hat{V}_j(t) = (\hat {v}_r(t), \ldots , \hat {v}_s(t))$.
           By Theorem 2 in \citep{yu2015useful}, $\exists \hat{O}_j(t) \in \R^{(s-r+1) \times (s-r+1)}$ which is orthogonal, s.t.
         \begin{align}
           \lt|\hat{V}_j(t)\hat{O}_j(t) - {V}_j(t)\rt|  \leq \frac{2^{3 / 2} \min \left\{(s-r+1)^{1 / 2}\rho\left|\hat{{ \Sigma}}(t)-{ \Sigma}(t)\right|, \left|\hat{{ \Sigma}}(t)-{ \Sigma}(t)\right|\right\}}{\min \left(\lambda_{r-1}(t)-\lambda_{r}(t), \lambda_{s}(t)-\lambda_{s+1}(t)\right)}.\label{eq:eigenm}
         \end{align}
         Without loss of generality, suppose $\lambda_1(t) > \cdots > \lambda_{s}(t) > \lambda_{s+1}(t) =  \cdots \lambda_{s+n_{s+1}}(t)> \lambda_{s+n_{s+1} + 1}(t)  = \cdots = \lambda_{s+n_{s+1}+n_{s+2}}(t) > \cdots > \lambda_{s+\sum_{i=s+1}^{q-1} n_{i}+1}(t)  = \cdots = \lambda_p(t)$, where $n_i$ is algebraic multiplicity of $\tilde \lambda_i$, and $\sum_{i=s+1}^{q} n_{i} = p-s$. Let 
         \begin{align}
          \hat{O}(t)  = \begin{pmatrix}
             I_{s}\\ 
               &\hat{O}_{s+1}(t)\\
               & & \ldots \\ 
             & & &\hat{O}_{q}(t) \\
           \end{pmatrix},
         \end{align}
         where $\hat{O}_{q}(t) \in \R^{n_q \times n_q} $.
         From \eqref{eq:eigens} and \eqref{eq:eigenm}, we have
         \begin{align}
          \left|\hat{V}(t) \hat{O}(t)  - V(t)\right| \leq \frac{2^{3 / 2} p^{3 / 2} \left|\hat{{ \Sigma}}(t)-{ \Sigma}(t)\right|}{\min_{1 \leq s \leq q+1} \left\{\tilde \lambda_{s-1}(t)-\tilde\lambda_{s}(t)\right\}},\label{eq:O1}
         \end{align}
         where $\lambda_0(t) = \infty$, $\lambda_{q+1}(t) = -\infty$.
         On the other hand, 
         \begin{align}
           \hat{V}(t) \hat{O}(t) { \Lambda}^{1/2}(t) \hat{O}^{\T}(t) \hat{V}^{\T} (t)  = \hat{V}(t)  { \Lambda}^{1/2}(t) \hat{V}^{\T} (t).\label{eq:O2}
         \end{align} 
         Therefore, by \eqref{eq:O1} and \eqref{eq:O2}, we have
         \begin{align}
           S_2  &\leq \E\left(\sup_{[0,1]}\left|\hat{V}(t)  { \Lambda}^{1/2}(t) \hat{V}^{\T} (t) - V(t) { \Lambda}^{1/2}(t) {V}^{\T} (t)\right| \right)\\ 
           & \leq \E \left(\sup_{[0,1]}\left|\{\hat{V}(t) \hat{O}(t) - V(t)\} { \Lambda}^{1/2}(t) \hat{O}^{\T}(t) \hat{V}^{\T} (t)\right| \right) + \E \left(\sup_{[0,1]}\left|V(t){ \Lambda}^{1/2}(t) \{\hat{O}^{\T}(t) \hat{V}^{\T} (t) - {V}^{\T} (t)\}\right|\right)\\ 
           & \leq C  \left\| \sup_{[0,1]}\left| \hat{{ \Sigma}}(t) - { \Sigma}(t)\right| \right\| = O(s_n),\label{eq:S2}
         \end{align}
         where $C$ is a sufficiently large positive constant.
         Combining \eqref{eq:S1S2}, \eqref{eq:S1} and \eqref{eq:S2}, we have 
         \begin{align}
           \sup_{t \in [0,1]}\left|\hat { \Sigma}^{1/2}(t)-{ \Sigma}^{1/2}(t)\right| = \Op(s_n^{1/2}).
         \end{align}
         \end{proof}



  \begin{lemma}\label{lm:basic}
         The following argument shows the properties of long memory coefficient $\psi_j = \psi_j(d)$.\\
        $\psi_0 = 1$, and for $j \geq 1$, 
       \begin{align}
         \psi_j = j^{d-1}l_d(j),
       \end{align}
       where $l_d(j) = 1/\Gamma(d)\{1 + O(1/j)\}$.
       \end{lemma}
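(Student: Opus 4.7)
The plan is to derive the closed form for $\psi_j(d)$ from the binomial series and then apply Stirling's asymptotic expansion to the resulting ratio of Gamma functions. Recall that $\psi_j(d)$ are defined by the expansion
\begin{align}
(1-\mathcal{B})^{-d} = \sum_{j=0}^{\infty} \psi_j(d)\,\mathcal{B}^j,
\end{align}
so by the generalized binomial theorem
\begin{align}
\psi_j(d) = \binom{-d}{j}(-1)^j = \frac{(-d)(-d-1)\cdots(-d-j+1)}{j!}(-1)^j = \frac{d(d+1)\cdots(d+j-1)}{j!}.
\end{align}
The product in the numerator equals $\Gamma(j+d)/\Gamma(d)$, and $j! = \Gamma(j+1)$, so for $j\geq 1$ we obtain the clean representation $\psi_j(d) = \Gamma(j+d)/\{\Gamma(d)\Gamma(j+1)\}$, while $\psi_0 = 1$ is immediate.

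The remaining task is to show $\Gamma(j+d)/\Gamma(j+1) = j^{d-1}\{1 + O(1/j)\}$ as $j \to \infty$. I would invoke Stirling's expansion $\log \Gamma(x) = (x-1/2)\log x - x + \tfrac12\log(2\pi) + O(1/x)$ applied to both $\Gamma(j+d)$ and $\Gamma(j+1)$, then subtract:
\begin{align}
\log\frac{\Gamma(j+d)}{\Gamma(j+1)} = (j+d-\tfrac12)\log(j+d) - (j+\tfrac12)\log(j+1) - (d-1) + O(1/j).
\end{align}
Expanding $\log(j+d) = \log j + d/j + O(1/j^2)$ and $\log(j+1) = \log j + 1/j + O(1/j^2)$ and collecting terms, the leading behavior reduces to $(d-1)\log j + O(1/j)$, hence $\Gamma(j+d)/\Gamma(j+1) = j^{d-1}\exp\{O(1/j)\} = j^{d-1}\{1+O(1/j)\}$. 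Dividing by $\Gamma(d)$ yields $\psi_j(d) = j^{d-1} l_d(j)$ with $l_d(j) = \{1 + O(1/j)\}/\Gamma(d)$, as claimed.

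The main obstacle, such as it is, is purely bookkeeping in the Stirling expansion: one must verify that the $(j+d-\tfrac12)\log(j+d) - (j+\tfrac12)\log(j+1)$ terms indeed collapse to $(d-1)\log j$ up to an $O(1/j)$ remainder, and in particular that the constant contribution matches the $-(d-1)$ coming from the linear Stirling terms so the $1/\Gamma(d)$ prefactor is reproduced exactly. Alternatively, one could bypass this bookkeeping by citing the well-known asymptotic $\Gamma(j+a)/\Gamma(j+b) = j^{a-b}\{1+O(1/j)\}$ (e.g., Abramowitz--Stegun 6.1.47) with $a=d$, $b=1$, which is the form used throughout the long-memory literature.
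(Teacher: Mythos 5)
Your proof is correct and follows essentially the same route as the paper: both reduce the claim to the asymptotics of $\Gamma(j+d)/\Gamma(j+1)$ and obtain $j^{d-1}\{1+O(1/j)\}$ via Stirling's formula, with your logarithmic bookkeeping being equivalent to the paper's multiplicative expansion. Your additional derivation of the closed form $\psi_j(d)=\Gamma(j+d)/\{\Gamma(d)\Gamma(j+1)\}$ from the binomial series is a standard fact the paper takes for granted, so nothing substantive differs.
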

       \begin{proof}[Proof of \cref{lm:basic}]
       
        By Stirling's formula,
        \begin{align}
         \frac{\Gamma(j+d)}{\Gamma(j+1)}= \frac{\surd{\frac{2\pi}{j+d}}(\frac{j+d}{e})^{j+d}\{1+O(\frac{1}{j+d})\}}{\surd{\frac{2\pi}{j+1}}(\frac{j+1}{e})^{j+1}\{1+O(\frac{1}{j+1})\}} &= \frac{j^{d-1}(1 + d/j)^j(1 + d/j)^{d-1/2}\{1+O(1/j)\}}{e^{d-1}(1 + 1/j)^j(1 + 1/j)^{1/2}\{1+O(1/j)\}}\\
         & = j^{d-1} + O(j^{d-2}).
        \end{align}
          Since $\ln \Gamma(z) \sim z \ln z-z+\frac{1}{2} \ln \frac{2 \pi}{z}+\sum_{n=1}^{N-1} \frac{B_{2 n}}{2 n(2 n-1) z^{2 n-1}}$, the constant in the big $O$ of $\Gamma(z)=\surd{\frac{2 \pi}{z}}\left(\frac{z}{e}\right)^{z}\left\{1+O\left(\frac{1}{z}\right)\right\}$ is always $B_2/2 = 1/12$.
       \end{proof}
       
  \begin{lemma}\label{lm:delta_xed}
  Assuming  that 
  $ \sup_{ t \in (-\infty,1] }\left\|H\left(t, \mathcal{F}_{0}\right)\right\|_{2p}<\infty
  $, $ \delta_{2p}(H, k, (-\infty,1]) = O(\chi^k)$, $\delta_{2p}(W, k) = O(\chi^k), \chi \in (0,1)$, $ \sup_{ t \in [0,1] }\left\|W\left(t, \mathcal{F}_{0}\right)\right\|_{2p}<\infty
  $,  we have 
  \begin{align}
    \delta_p(U^{(d)}, k) = O(k^{d-1}).
  \end{align}
\end{lemma}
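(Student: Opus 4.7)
The plan is to bound the physical dependence measure of the product $U^{(d)} = W \cdot H^{(d)}$ by splitting into two terms via the coupling identity and then controlling each factor using moment bounds on $W$ and $H^{(d)}$ together with their physical dependence decay. Specifically, writing $W' = W(t, \F_{k,\{0\}})$ and $H^{(d)\prime} = H^{(d)}(t, \F_{k,\{0\}})$, I will use the algebraic identity
\begin{equation*}
U^{(d)}(t, \F_k) - U^{(d)}(t, \F_{k,\{0\}})
= \{W(t,\F_k)-W'\}H^{(d)}(t,\F_k) + W'\{H^{(d)}(t,\F_k) - H^{(d)\prime}\},
\end{equation*}
and then apply the Cauchy--Schwarz (or Hölder) inequality at exponent $p$ against two factors measured in $\mathcal{L}^{2p}$.

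First I would bound the second summand: by assumption $\sup_t \|W'(t,\F_0)\|_{2p} < \infty$, while Proposition \ref{Prop31} already gives $\delta_{2p}(H^{(d)}, k, (-\infty,1]) = O((1+k)^{d-1})$, which handles $\|H^{(d)}(t,\F_k) - H^{(d)\prime}\|_{2p}$. For the first summand, the factor $\|W(t,\F_k)-W'\|_{2p} \le \delta_{2p}(W,k) = O(\chi^k)$ is negligible compared with $k^{d-1}$, but one must still control $\|H^{(d)}(t,\F_k)\|_{2p}$ uniformly in $t$. For this, I would use a martingale-difference representation $H^{(d)}(t,\F_0) = \sum_{l\ge 0} \mathcal{P}_{-l} H^{(d)}(t,\F_0)$, invoke Burkholder's inequality in $\mathcal{L}^{2p}$, and then bound each term by $\|\mathcal{P}_{-l} H^{(d)}(t,\F_0)\|_{2p} \le \sum_{j=0}^{l} \psi_j \delta_{2p}(H, l-j)$ via Proposition \ref{physical}. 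The convolution estimate (splitting the sum at $l/2$ and using Lemma \ref{lm:basic}) gives this bound as $O(l^{d-1})$, which is square-summable precisely when $d<1/2$, yielding $\sup_t \|H^{(d)}(t,\F_0)\|_{2p} < \infty$.

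Combining these pieces, both summands are $O(k^{d-1})$ in $\mathcal{L}^p$ uniformly in $t$, delivering the claim $\delta_p(U^{(d)}, k) = O(k^{d-1})$. The main obstacle is not the product decomposition (which is routine) but the uniform boundedness of $\|H^{(d)}(t,\F_0)\|_{2p}$; because $\sum_j \psi_j$ diverges, a naive triangle-inequality bound fails, so one must genuinely exploit the orthogonality implicit in the martingale projections to absorb the slow decay of $\psi_j$ into a square-summable sequence. Once that step is done, the remainder of the argument is a clean convolution calculation combined with Proposition \ref{Prop31}, and the short-range decay of $W$ ensures that the $W$-contribution is strictly lower order than the $H^{(d)}$-contribution.
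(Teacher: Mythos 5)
Your proposal is correct and follows essentially the same route as the paper: the same product coupling decomposition into two $\mathcal{L}^{2p}\times\mathcal{L}^{2p}$ factors, Proposition \ref{Prop31} for $\delta_{2p}(H^{(d)},k)=O\{(1+k)^{d-1}\}$, and the Burkholder/martingale-projection argument to establish $\sup_t\|H^{(d)}(t,\F_0)\|_{2p}<\infty$ despite the non-summability of $(\psi_j)$. You correctly identify that this last boundedness step is the only nontrivial point, which is exactly how the paper's proof is structured.
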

\begin{proof}[Proof of \cref{lm:delta_xed}]
  Note that for $j \leq i$,
  \begin{align}
    \delta_p(U^{(d)}, i-j) 
    &\leq \|W(t_i,\FF_i)\|_{2p} \delta_{2p}(H^{(d)},i-j)+ \|H^{(d)}(t_i,\FF_{i-j}^*)\|_{2p} \delta_{2p}(W,i-j).\label{eq:Pu}
  \end{align}
  By Burkholder's inequality and \cref{Prop31}, we have 
  \begin{align}
    \|H^{(d)}(t_i,\FF_i)\|^2_{2p}
    &\leq  M\left\|  \sum_{j\in  {Z}} \left\{\mathcal{P}_j H^{(d)} (t_i, \mathcal{F}_i) \right\}^2 \right\|_{p} \leq  M \sum_{j\in  {Z}} \left\| \mathcal{P}_j H^{(d)} (t_i, \mathcal{F}_i) \right\|_{2p}^2  = O(1),
  \end{align}
  where $M$ is a sufficiently large constant.
  
  Then by \cref{Prop31} and \cref{eq:Pu}, we have proved the desired result.
\end{proof}

  \begin{lemma}\label{cor:kar_dn}
        Let  $\psi_j$ denote $\psi_j(d_n)$.
        (a) For $h=o(n)$, $h \to \infty$ we have
            \begin{align}
            \sum_{l=0}^{\infty} \left(\sum_{j=l}^{h+l} \psi_j\right)^2 \sim  \sum_{l=0}^{\infty} \{(h+l)^{d_n} - l^{d_n}\}^2  = O\{h/(\log h)\}.
           \end{align}
        (b) If we further assume $h = \lf n^{\alpha_1}\rf$,  $\alpha_1 \in (0,1)$,  we have  
           \begin{align}
            h^{-1} \sum_{l=0}^{h-1} \left(\sum_{j=0}^{l} \psi_j\right)^2 \to e^{2 c \alpha_1}.
           \end{align}
       \end{lemma}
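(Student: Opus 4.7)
The plan is to reduce both claims to tractable integral approximations of $\sum_j \psi_j(d_n)$ obtained from \cref{lm:basic}, which gives $\psi_j = j^{d_n - 1}/\Gamma(d_n)\{1+O(1/j)\}$ for $j \geq 1$, and then exploit the fact that $d_n = c/\log n \to 0$ while $1/\Gamma(d_n+1) \to 1$.

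For part (a), I would first approximate the inner sum by an integral:
\begin{align*}
\sum_{j=l}^{h+l} \psi_j = \frac{1}{\Gamma(d_n)}\sum_{j=l\vee 1}^{h+l} j^{d_n-1}\{1+O(1/j)\} \sim \frac{(h+l)^{d_n}-l^{d_n}}{d_n\Gamma(d_n)} = \frac{(h+l)^{d_n}-l^{d_n}}{\Gamma(d_n+1)},
\end{align*}
and since $\Gamma(d_n+1)\to 1$, the stated asymptotic equivalence follows after checking that the error is uniform enough so that the two sums of squares are asymptotically equivalent. To bound $\sum_{l\geq 0}\{(h+l)^{d_n}-l^{d_n}\}^2$, write $(h+l)^{d_n}-l^{d_n}=d_n\int_l^{h+l}x^{d_n-1}dx$ and apply Cauchy--Schwarz:
\begin{align*}
\{(h+l)^{d_n}-l^{d_n}\}^2 \leq h\,d_n^2\int_l^{h+l}x^{2d_n-2}dx.
\end{align*}
Summing and swapping sum and integral yields an upper bound of the order $d_n^2 h\{\int_0^h x\cdot x^{2d_n-2}dx + h\int_h^\infty x^{2d_n-2}dx\} = O(d_n h^{2d_n+1})$. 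The final step is the observation that, since $h=o(n)$ and $d_n=c/\log n$, for $n$ large we have $d_n\leq c/\log h$ and $h^{2d_n} = \exp(2c\log h/\log n)\leq e^{2c}=O(1)$, so the bound collapses to $O(h/\log h)$.

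For part (b), the same integral approximation gives, uniformly over $l\in\{1,\dots,h-1\}$,
\begin{align*}
\sum_{j=0}^l \psi_j = 1 + \frac{l^{d_n}-1}{\Gamma(d_n+1)}\{1+o(1)\} = l^{d_n}\{1+o(1)\},
\end{align*}
where the $o(1)$ is uniform on $l\geq l_n$ for any slowly diverging $l_n$ (the remaining terms $l<l_n$ contribute at most $l_n/h=o(1)$ after division by $h$). Squaring and using a Riemann-sum argument,
\begin{align*}
\frac{1}{h}\sum_{l=0}^{h-1}\left(\sum_{j=0}^l \psi_j\right)^2 \sim \frac{1}{h}\sum_{l=1}^{h-1} l^{2d_n} \sim \frac{h^{2d_n}}{2d_n+1}.
\end{align*}
With $h=\lfloor n^{\alpha_1}\rfloor$, we get $h^{2d_n}=\exp(2d_n\log h) \to \exp(2c\alpha_1)$ and $(2d_n+1)\to 1$, yielding the claimed limit $e^{2c\alpha_1}$.

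The main obstacle is the bookkeeping in part (a): one must track that the $O(1/j)$ correction in $\psi_j$ and the discrete-to-continuous error in the partial sums are uniformly negligible when squared and summed, and then identify the correct scale $d_n\leq c/\log h$ that converts the raw bound $O(d_n h^{2d_n+1})$ into $O(h/\log h)$. Part (b) is more routine once the asymptotic $\sum_{j\leq l}\psi_j\sim l^{d_n}$ is in hand, provided one verifies uniformity in $l$ so that the Riemann-sum limit can be applied despite $d_n$ drifting to $0$ with $n$.
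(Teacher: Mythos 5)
Your proposal is correct in substance. Part (b) essentially coincides with the paper's own argument: both reduce $h^{-1}\sum_{l}l^{2d_n}$ to a Riemann sum, obtaining $h^{2d_n}/(2d_n+1)\to e^{2c\alpha_1}$, and both rely on the corrections from \cref{lm:basic} being $O(d_n)$ relatively (note the $O(1/j)$ terms enter with the prefactor $1/\Gamma(d_n)=O(d_n)$, which is what makes the uniformity claim work). For part (a), however, you take a genuinely different route to the key bound $O\{h/(\log h)\}$: the paper splits the index range at $N_1=\lfloor h^{1-\alpha_h}\rfloor$ and $N_2=\lfloor h^{1+\alpha_h}\rfloor$ with $\alpha_h=(\log h)^{-1}\log\log h$ (so that $h/N_1$ and $N_2/h$ are both of order $\log h$) and Taylor-expands separately in each zone, whereas you obtain a single closed-form bound $O(d_n h^{2d_n+1})$ from the representation $(h+l)^{d_n}-l^{d_n}=d_n\int_l^{h+l}x^{d_n-1}\,dx$, Cauchy--Schwarz, and Fubini, and then convert it via $h^{2d_n}\le e^{2c}$ and $d_n\le c/\log h$ (valid because $h=o(n)$). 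Your version is shorter and makes the source of the $1/\log h$ gain completely explicit; the paper's zoning gives finer term-by-term information. One point needs repair: your Cauchy--Schwarz bound is vacuous at $l=0$, since $\int_0^h x^{2d_n-2}\,dx=\infty$ when $2d_n-2<-1$, and the same degeneracy is hidden in the step where the number of indices $l$ covering a point $x\in(0,1)$ is replaced by $x$. You should peel off the $l=0$ term, bound it directly by $h^{2d_n}=O(1)=o(h/\log h)$, and run the sum--integral interchange only over $l\ge 1$, where each $\int_l^{h+l}x^{2d_n-2}\,dx$ is finite; with that one-line fix the estimate goes through unchanged.
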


       \begin{remark}
         These two results correspond to the conclusions in Lemma 2 of \citep{shao2007local}. 
       \end{remark}

       \begin{proof}[Proof of \cref{cor:kar_dn}]
        Proof of (a). We first show 
        \begin{align}
          \sum_{l=0}^{\infty} \{(h+l)^{d_n} - l^{d_n}\}^2 = O\{h/(\log h)\}.
        \end{align}
        Let $N_1 = \lf h^{1-\alpha_h}\rf$, $N_2 = \lf h^{1+\alpha_h}\rf$, $
        \alpha_h = (\log h)^{-1}\log \log h$. Then, $h/N_1 = O(h^{\alpha_h}) = O(\log h)$. $N_1^{d_n} = O(1)$, $N_2^{d_n}= O(1)$. By \cref{lm:basic} and Taylor's expansion, we have
        \begin{align}
          \sum_{l=0}^{\infty}\{(h+l)^{d_n} - l^{d_n}\}^2 &=  \sum_{l=0}^{N_1}\{(h+l)^{d_n} - l^{d_n}\}^2 + \sum_{l=N_1 + 1}^{N_2}\{(h+l)^{d_n} - l^{d_n}\}^2 + \sum_{l=N_2+1}^{\infty}\{(h+l)^{d_n} - l^{d_n}\}^2\\  &= O\{h(\log h)^{-1}\}.
        \end{align}
        Then by \cref{lm:basic}, we have 
        \begin{align}
        \sum_{l=0}^{\infty} \left(\sum_{j=l}^{h+l} \psi_j\right)^2 
        = \sum_{l=0}^{\infty}\{(h+l)^{d_n} - l^{d_n}\}^2+ O\{d_n h(\log h)^{-1}\}.
        \end{align}
        
        Proof of (b). 
        By \cref{lm:basic}, we have 
        \begin{align}
          \sum_{l=0}^{h-1} \left(\sum_{j=0}^{l} \psi_j\right)^2 
          & = 1 + \sum_{l = 1}^{h-1} l^{2d_n} + O\left(d_n \sum_{l = 1}^{h-1} l^{2d_n}\right) ,\label{eq:b1}
        \end{align}
        where for $h = O(n)$,
        \begin{align}
           \sum_{l = 1}^h l^{2d_n} =   e^{2c\alpha_1} \sum_{l = 1}^h (l/h)^{2d_n} &=  e^{2c\alpha_1} h \int_0^1 t^{2d_n} dt + O(1)\\ &= e^{2c\alpha_1} h/(2d_n + 1) + O(1) = e^{2c\alpha_1} h + O(1).\label{eq:b2}
        \end{align}
        Combining \eqref{eq:b1} and \eqref{eq:b2}, we have shown the desired result.
      
      \end{proof}

 \normalem
\bibliographystyle{apalike}
\bibliography{main}
\end{document}